\definecolor{DarkGreen}{rgb}{0.1,0.5,0.1}
\definecolor{DarkRed}{rgb}{0.5,0.1,0.1}
\definecolor{DarkBlue}{rgb}{0.1,0.1,0.5}
\def\draft{1}
\def\submit{0}
    \def\ShowAuthNotes{1}
    \def\ShowAuthNotes{0}
\newcommand{\forsubmit}[1]{#1}
\newcommand{\forreals}[1]{}
\newcommand{\forreals}[1]{#1}
\newcommand{\forsubmit}[1]{}
\newcommand{\authnote}[2]{{ \footnotesize \bf{{\color{DarkRed}[#1's Note:}
{\color{DarkBlue}#2}]}}}
\newcommand{\authnote}[2]{}
\newcommand{\myparagraph}{\paragraph}
\newtheorem{theorem}{Theorem}[section]
\newtheorem{remark}[theorem]{Remark}
\newtheorem{lemma}[theorem]{Lemma}
\newtheorem{corollary}[theorem]{Corollary}
\newtheorem{claim}[theorem]{Claim}
\theoremstyle{definition}
\newtheorem{definition}[theorem]{Definition}
\newtheorem{openQ}[theorem]{Open Question}
\renewcommand{\mathbb}{\varmathbb}
\newcommand{\Psymb}{\mathbb{P}}
\DeclareMathOperator*{\ProbOp}{\Psymb}
\renewcommand{\Pr}{\ProbOp}
\newcommand{\nfrac}{\nicefrac}
\newcommand{\mper}{\,.}
\newcommand{\mcom}{\,,}
\newcommand{\Le}{{\cal L}}
\newcommand{\Q}{{\cal Q}}
\newcommand{\A}{{\cal A}}
\newcommand{\F}{{\cal F}}
\newcommand{\U}{{\cal U}}
\newcommand{\GQ}{{\cal G \cal Q}}
\newcommand{\DB}{{\mathit{DB}}}
\renewcommand{\DB}{D}
\newcommand{\zo}{\{0,1\}}
\newcommand{\cF}{{\cal F}}
\newcommand{\cQ}{{\cal Q}}
\newcommand{\defeq}{\stackrel{\small \mathrm{def}}{=}}
\renewcommand{\leq}{\leqslant}
\renewcommand{\le}{\leqslant}
\renewcommand{\geq}{\geqslant}
\renewcommand{\ge}{\geqslant}
\newcommand{\Set}[1]{\left\{#1\right\}}
\newcommand{\bits}{\{0,1\}}
\newcommand{\sign}{\mathit{sign}}
\newcommand{\polylog}{{\rm polylog}}
\newcommand{\poly}{{\rm poly}}
\renewcommand{\epsilon}{\varepsilon}
\newcommand{\eps}{\epsilon}
\newcommand{\Laplace}{\mathrm{Lap}}
\newcommand{\remove}[1]{}
\newcommand{\Lap}{\Laplace}
\newcommand{\ignore}[1]{}
\newcommand{\TO}{\mathcal{TO}}
\newcommand{\Mid}{\ensuremath{~\Big|~}}
\newenvironment{itm}
{\begin{itemize}
}{\end{itemize}}
\newenvironment{enum}
{\begin{enumerate}
}{\end{enumerate}}
\newcommand{\PrivLearn}{\textsc{PrivLearn}\xspace}
\newcommand{\base}{\mathrm{base}}
\newcommand{\iter}{\mathrm{iter}}
\newcommand{\total}{\mathrm{total}}
\newcommand{\AC}{\mathrm{AC}}
\title{\bf Private Data Release via Learning Thresholds}
\author{Moritz Hardt\thanks{Center for Computational Intractability, Department
of Computer Science, Princeton University.  Supported by NSF grants
CCF-0426582 and CCF-0832797. Email: {\tt mhardt@cs.princeton.edu}.}
\and Guy N. Rothblum\thanks{Microsoft Research, Silicon Valley Campus. Most of this work was done while the author was at the Department of Computer Science at Princeton University and Supported by NSF Grant CCF-0832797 and by a Computing
Innovation Fellowship. Email: {\tt rothblum@alum.mit.edu}.} \and Rocco A. Servedio\thanks{Columbia University Department of Computer
Science and Center for Computational Intractability, Department of Computer Science, Princeton
University.  Supported by NSF grants CCF-0832797, CNS-07-16245 and CCF-0915929. Email: {\tt rocco@cs.columbia.edu}}
}
\begin{document}
\maketitle

\begin{abstract}

This work considers {\em computationally efficient} privacy-preserving
data release. We study the task of analyzing a database containing
sensitive information about individual participants. Given a set of
statistical queries on the data, we want to release approximate
answers to the queries while also guaranteeing {\em differential
privacy}---protecting each participant's sensitive data.

Our focus is on computationally efficient data release algorithms; we
seek algorithms whose running time is polynomial, or at least
sub-exponential, in the data dimensionality. Our primary contribution
is a {\em computationally efficient} reduction from differentially
private data release for a class of counting queries, to learning
thresholded sums of predicates from a related class.

We instantiate this general reduction with a variety of algorithms for
learning thresholds. These instantiations yield several new results
for differentially private data release. As two examples, taking
$\zo^d$ to be the data domain (of dimension $d$), we obtain
differentially private algorithms for:

\begin{enumerate}

\item

Releasing all $k$-way conjunction counting queries (or $k$-way
contingency tables). For any given~$k$, the resulting data release
algorithm has bounded error as long as the database is of size at least
$d^{O\big(\sqrt{k\log(k\log d)}\big)}$ (ignoring the dependence on other parameters). The running time is polynomial in the database size.
The best sub-exponential time algorithms known prior to our work required a database of size $\tilde{O}(d^{k/2})$ [Dwork McSherry Nissim and Smith 2006].


\item

Releasing a $(1 - \gamma)$-fraction of all $2^d$ parity counting
queries. For any $\gamma \geq \poly(1/d)$, the algorithm has bounded error as long as the database is of size at least $\poly(d)$ (again ignoring the dependence on other
parameters). The running time is polynomial in the database size.



\end{enumerate}

Several other instantiations yield further results for
privacy-preserving data release. Of the two results highlighted above,
the first learning algorithm uses techniques for representing
thresholded sums of predicates as low-degree polynomial threshold
functions.  The second learning algorithm is based on  Jackson's
Harmonic Sieve algorithm [Jackson 1997]. It utilizes Fourier analysis
of the database viewed as a function mapping queries to answers.

\end{abstract}

\vfill
\thispagestyle{empty}
\setcounter{page}{0}
\pagebreak

%
%

\section{Introduction}
This work considers privacy-preserving statistical analysis of sensitive
data. In this setting, we wish to extract statistics from a database $D$
that contains information about $n$ individual participants. Each
individual's data is a record in the {\em data domain} $\U$. We focus here
on the offline (or non-interactive) setting, in which the information
to be extracted is specified by a set $\Q$ of statistical queries. Each
query $q \in \Q$ is a function mapping the database to a query answer,
where in this work we focus on real-valued queries with range $[0,1]$. Our
goal is {\em data release}: Extracting approximate answers to all the
queries in the query set $\Q$.

An important concern in this setting is protecting the privacy of
individuals whose sensitive data (e.g. medical or financial records) are being
analyzed. Differential privacy \cite{DworkMNS06} provides a rigorous notion of
privacy protection, guaranteeing that each individual only has a small
effect on the data release algorithm's output. A growing body of work
explores the possibility of extracting rich statistics in a differentially
private manner. One line of research~\cite{BlumLR08,DworkNRRV09,DworkRV10,RothR10,HardtR10}
has shown that differential
privacy often permits surprisingly accurate statistics.
These works put forward
general algorithms and techniques for differentially private data
analysis, but the algorithms have running time that is (at least)
exponential in the dimensionality of the data domain.
Thus, a central question
in differentially private data analysis is to develop general techniques
and algorithms that are {\em efficient}, i.e. with running time that is
polynomial (or at least sub-exponential) in the data dimensionality. While
some computational hardness results are known
\cite{DworkNRRV09,UllmanV11,GuptaHRU11}, they apply only to restricted
classes of data release algorithms.

\myparagraph{This Work.} Our primary contribution is a computationally
efficient new tool for privacy-preserving data release: a general
reduction to the task of \emph{learning thresholds of sums of predicates}. The
class of predicates (for learning) in our reduction is derived directly from the class of
queries (for data release).

At a high level, we draw a connection between data release and
learning as follows. In the data release setting, one can view the {\em
database as a function}: it maps queries in $\Q$ to answers in $[0,1]$.
The data release goal is approximating this function on queries/examples
in~$\Q$. The challenge is doing so with only bounded access to the
database/function; in particular, we only allow access that preserves
differential privacy. For example, this often means that we only get a bounded number of oracle queries to the
database function with noisy answers.

At this high level there is a striking similarity to learning theory,
where a standard goal is to efficiently learn/approximate a function given
limited access to it, e.g. a bounded number of labeled examples or oracle
queries. Thus a natural approach to data release is \emph{learning the
database function} using a computational learning algorithm.

While the
approach is intuitively appealing at this high level, it faces immediate
obstacles because of apparent incompatibilities between the requirements of learning algorithms
and the type of ``limited'' access to data that are imposed by private data release.
For example, in the data
release setting a standard technique for ensuring differential privacy is
adding noise, but many efficient learning algorithms fail badly when run on
noisy data.  As another example, for private data release, the number of (noisy) database accesses is often very restricted: e.g sub-linear, or at most quadratic in the database size. In the learning setting, on the other hand, it is almost always
the case that the number of examples or oracle queries required to learn a
function is \emph{lower bounded} by its description length (and is often a large polynomial in the description length).

Our work explores the connection between learning and private data release.
We
\begin{itemize}
\item[(i)] give an efficient reduction that shows that, in
fact, a general class of data release tasks \emph{can} be reduced to related
and natural computational learning tasks; and 
\item[(ii)] instantiate this
general reduction using new and known learning algorithms 
to obtain new computationally
efficient differentially private data release algorithms.
\end{itemize}

Before giving more details on our reduction in Section~\ref{subsec:intro-reduction},
we briefly discuss its context and some of the ways that we apply/instantiate it.
While the search for efficient differentially private data release algorithms is relatively new,
there are decades of work in learning theory aimed at developing techniques
and algorithms for computationally efficient learning, going back to the
early work of Valiant \cite{Valiant84}.  Given the high-level similarity between the two fields,
leveraging the existing body of work and insights from learning theory for data release is a promising direction for future research; we view our reduction as a step in this
direction. We note that our work is by no means the first to draw a
connection between privacy-preserving data release and learning theory;
as discussed in the ``Related Work'' section below, several prior works used
learning techniques in the data release setting.
A novelty in our work is that it gives an explicit and
modular reduction from data release to natural learning problems.
Conceptually, our reduction overcomes two main hurdles:
\begin{itemize}
\item
bridging the gap between the \emph{noisy} oracle access arising in private data release and the \emph{noise-free} oracle access required by many learning algorithms (including the ones we use).
\item
avoiding any dependence on the database size in the complexity of
the learning algorithm being used.
\end{itemize}

We use this reduction to construct new data
release algorithms. In this work we explore two main
applications of our reduction.
The first aims to answer boolean conjunction queries (also
known as contingency tables or marginal queries), one of the most
well-motivated and widely-studied classes of statistical queries in the
differential privacy literature. Taking the data universe $\U$ to be
$\zo^d$, the $k$-way boolean conjunction corresponding to a subset $S$ of
$k$ attributes in $[d]$ counts what fraction of items in the
database have all the attributes in $S$ set to 1. Approximating the
answers for $k$-way conjunctions (or all conjunctions) has been the
focus of several past works (see, e.g. \cite{BarakCDKMT07,KasiRSU10,UllmanV11,GuptaHRU11}).
Applying our reduction with a new learning algorithm tailored for this
class, we obtain a data release algorithm that, for databases of size
$d^{O\big(\sqrt{k\log(k\log d)}\big)}$, releases accurate answers to all $k$-way
conjunctions simultaneously (we ignore for now the dependence of the database
size on other
parameters such as the error).
The running time is $\poly(d^k).$
Previous algorithms either had running time $2^{\Omega(d)}$ (e.g.
\cite{DworkNRRV09})
or required a database of size $d^{k/2}$ (adding independent noise
\cite{DworkMNS06}). We also obtain better bounds for the task of
approximating the answers to a large fraction of {\em all} (i.e. $d$-way)
conjunctions under arbitrary distributions. These results follow from
algorithms for learning thresholds of sums of the relevant predicates; we base
these algorithms on learning theory techniques for representing such functions as
low-degree polynomial threshold functions, following works such
as~\cite{KlivansS04,KlivansOS04}.
We give an overview of these results in Section~\ref{subsec:intro-conjunctions}
below.

Our second application uses Fourier analysis of the database (viewed,
again, as a real-valued function on the queries in $\Q$). We obtain new
polynomial and quasi-polynomial data release algorithms for parity
counting queries and low-depth ($\AC^0$) counting queries respectively. The
learning algorithms we use for this are (respectively) Jackson's Harmonic
Sieve algorithm \cite{Jackson97}, and an algorithm for
learning Majority-of-$\AC^0$ circuits due to Jackson \emph{et al.}
\cite{JacksonKS02}. We elaborate on these results in Section \ref{subsec:intro-Fourier} below.

\subsection{Private Data Release Reduces to Learning Thresholds}
\label{subsec:intro-reduction}

In this section we give more details on the reduction from
privacy-preserving data release to learning thresholds. The full details
are in Sections~\ref{sec:privacy-via-learning} and~\ref{sec:mainproof}. We begin with loose
definitions of the data release and learning tasks we consider, and then
proceed with (a simple case of) our reduction.

\medskip

\noindent {\bf Counting Queries, Data Release and Learning Thresholds.}
We begin with preliminaries and an informal specification of the data
release and learning tasks we consider in our reduction (see Sections
\ref{sec:prelim} and \ref{sec:setup} for full definitions). We refer to an element $u$
in data domain $\U$ as an \emph{item}. A
\emph{database} is a collection of $n$ items from $\U$. A {\em counting query} is
specified by a predicate $p: \U \rightarrow \zo$, and the query $q_p$ on
database $D$ outputs the fraction of items in $D$ that satisfy $p$, i.e.
$\frac{1}{n} \sum_{i=1}^n p(D_i)$. A {\em class} of counting queries is
specified by a set $\Q$ of query descriptions and a predicate $P\colon
\Q\times\U\to\bits$. For a query $q \in \Q$, its corresponding predicate is
$P(q,\cdot): \U \rightarrow \zo$. We will sometimes fix a data item $u \in \U$
and consider the predicate $p_u(\cdot) \triangleq P(\cdot,u) : \Q
\rightarrow \zo$.

Fix a data domain $\U$ and query class $\Q$ (specified by a predicate
$P$). A {\em data release algorithm} $\A$ gets as input a database $D$, and
outputs a {\em synopsis} $S:\Q \rightarrow [0,1]$ that provides
approximate answers to queries in $\Q$.  We say that $\A$ is an
$(\alpha,\beta,\gamma)$ \emph{distribution-free data release algorithm for
$(\U,\Q,P)$} if, for any distribution $G$ over the query set $\Q$, with
probability $1-\beta$ over the algorithm's coins, the synopsis $S$
satisfies that with probability $1-\gamma$ over $q \sim G$, the (additive)
error of $S$ on $q$ is bounded by $\alpha$. Later we will also
consider data release algorithms that only work for a specific
distribution or class of distributions (in this case we will not call the
algorithm distribution-free). Finally, we assume for now that the data
release algorithm only accesses the distribution $G$ by sampling queries
from it, but later we will also consider more general types of access (see
below). A {\em differentially private} data release algorithm is one whose
output distribution (on synopses) is differentially private as per
Definition \ref{def:diffP}. See Definition \ref{def:data-release} for full
and formal details.

Fix a class $\Q$ of {\em examples} and a set $\F$ of predicates on $\Q$.
Let $\F_{n,t}$ be the set of thresholded sums from~$\F$, i.e., the set of
functions of the form $f=\mathbb{I}\left\{\frac1n\sum_{i=1}^nf_i\ge
t\right\}$, where $f_i\in\F$ for all $1\le i\le n.$ We refer to functions
in $\F_{n,t}$ as \emph{$n$-thresholds.} An algorithm for learning
thresholds gets access to a function in $\F_{n,t}$ and outputs a {\em
hypothesis} $h:\Q \rightarrow \zo$ that labels examples in $\Q$. We say
that it is a $(\gamma,\beta)$ distribution-free learning algorithm for
learning thresholds over $(\Q,\F)$ if, for any distribution $G$ over the
set $\Q$, with probability $1-\beta$ over the algorithm's coins the output
hypothesis $h$ satisfies that with probability $1-\gamma$ over $q \sim G$,
$h$ labels $q$ correctly. As above, later we will also consider
learning algorithms that are not distribution free, and only work for a specific distribution or class or
distributions. For
now, we assume that the learning algorithm only accesses the distribution
$G$ by drawing examples from it. These examples are labeled using the
target function that the algorithm is trying to learn. See Definition
\ref{def:learning-thresholds} for full and formal details.

\myparagraph{The Reduction.}
We can now describe (a simple case of) our reduction from
differentially private data release to learning thresholds. For any data
domain $\U$, set $\Q$ of query descriptions, and predicate $P: \Q \times
\U \rightarrow \zo$, the reduction shows how to construct a (distribution free) data
release algorithm given a (distribution free) algorithm for learning
thresholds over $(\Q,\{p_u\colon u\in\U\})$, i.e., any algorithm for
learning thresholds where $\Q$ is the example set and the set ${\cal F}$ of
predicates (over $\Q$) is obtained by the possible ways of fixing the
$u$-input to $P$. The resulting data release algorithm is
$(\alpha,\beta,\gamma)$-accurate as long as the database is not too
small; the size bound depends on the desired accuracy parameters and on
the learning algorithm's sample complexity. The efficiency of the learning algorithm is preserved
(up to mild polynomial factors).

\begin{theorem}[Reduction from Data Release to Learning Thresholds, Simplified] \label{thm:intro} Let $\U$ be a data universe, $\Q$ a set
of query descriptions, and $P\colon\Q \times \U \rightarrow \zo$ a
predicate. There is an $\eps$-differentially private
$(\alpha,\beta,\gamma)$-accurate distribution free data-release algorithm for $(\U,\Q,P)$,
provided that:

\begin{enumerate}

\item there is a distribution-free learning algorithm $\Le$ that
($\gamma$,$\beta$)-learns thresholds over $(\Q,\{p_u\colon u\in\U\})$
using $b(n,\gamma,\beta)$ labeled examples and
running time $t(n,\gamma,\beta)$ for learning $n$-thresholds. \item
$n \ge
\frac{C\cdot b(n',\gamma',\beta') \cdot \log(1/\beta)}{\eps \cdot \alpha \cdot
\gamma}
\mcom$
 where
$n'=\Theta(\log|\cQ|/\alpha^2),$
$\beta'=\Theta(\beta \cdot \alpha),$ $\gamma'=\Theta(\gamma \cdot \alpha),$
$C=\Theta(1).$
\end{enumerate}

Moreover, the data release algorithm only accesses the query distribution
by sampling. The number of samples taken is $O(b(n',\gamma',\beta') \cdot
\log(1/\beta) / \gamma)$ and the running time is
$\poly(t(n',\gamma',\beta'),n,1/\alpha,\log(1/\beta),1/\gamma)$.

\end{theorem}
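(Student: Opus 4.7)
Write the database as a function $f_D : \cQ \to [0,1]$ with $f_D(q) := \frac{1}{n}\sum_i P(q, D_i)$, and note that for every threshold $t\in[0,1]$ the slice $q\mapsto \mathbb{I}\{f_D(q)\ge t\}$ lives in $\F_{n,t}$ over $\F = \{p_u : u\in\cU\}$ and is therefore learnable by $\Le$. Since $f_D$ takes values in $[0,1]$, approximating it to additive $\alpha$ reduces to learning $L = \Theta(1/\alpha)$ evenly-spaced slices $g_j := \mathbb{I}\{f_D\ge t_j\}$ and combining the hypotheses $h_j$ into a synopsis $S(q):=\alpha\cdot|\{j:h_j(q)=1\}|$. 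A union bound over the $L$ slices reduces the overall $(\alpha,\gamma)$-accuracy goal to learning each slice with error $\gamma' = \Theta(\gamma\alpha)$ and confidence $\beta' = \Theta(\beta\alpha)$, exactly the parameters appearing in the theorem.

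\textbf{Privacy via subsampling.} To decouple the learner's sample complexity from $n$ in the privacy accounting, I would first replace $D$ by a uniform random subsample $D'\subseteq D$ of size $n' = \Theta(\log|\cQ|/\alpha^2)$. A Chernoff plus union-bound argument over $\cQ$ shows that, except with probability $\beta/2$, $f_{D'}$ is uniformly $O(\alpha)$-close to $f_D$ on all of $\cQ$, so it suffices to release a synopsis for $f_{D'}$, whose slices are $n'$-thresholds learnable by $\Le$ with $b := b(n',\gamma',\beta')$ labelled examples per slice. To supply a labelled example I draw $q\sim G$, form the noisy value $\tilde f(q) := f_{D'}(q) + \Lap(\eta)$ (counting-query sensitivity $1/n'$), and output $\mathbb{I}\{\tilde f(q)\ge t_j\}$. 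Each such release is $\eps_0$-DP on $D'$ with $\eps_0 = 1/(n'\eta)$; basic composition over all $\Theta(bL/\gamma)$ label queries yields $\eps_{\mathrm{sub}}$-DP on $D'$, and the standard subsampling amplification lemma contracts this by the factor $n'/n$ to give $\eps$-DP on $D$ precisely when $n$ meets the claimed lower bound.

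\textbf{Label correctness and main obstacle.} The noisy label disagrees with $g_j(q)$ only when the Laplace draw exceeds the margin $|f_{D'}(q)-t_j|$, which happens with probability at most $\gamma'/2$ whenever the margin is at least $\Theta(\eta\log(1/\gamma'))$; with $\eta = \Theta(\alpha)$ this is just the $\alpha$-spacing of the thresholds. Since each query has small margin at at most one of the $L$ thresholds, in expectation a $\gamma'$-fraction of labels fed to any single $\Le$-run are corrupted, and a Chernoff bound — whose tail absorbs the extra $\log(1/\beta)$ factor and the oversampling factor $1/\gamma$ in the total number of draws from $G$ — promotes this to a high-probability statement. The corrupted labels then fall inside the $\gamma'$-tolerance that $\Le$ already allows, so $\Le$ returns accurate $h_j$ by hypothesis; a final union bound over the subsampling event and the $L$ slice events gives the promised $(\alpha,\beta,\gamma)$-guarantee. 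The main obstacle is the three-way balance between noise $\eta$, subsampling factor $n'/n$, and learner tolerance $\gamma'$: too little noise spoils privacy after composition, too much destroys the labels, and the symmetric choice $\eta\approx\alpha$ combined with the $L$-fold union bound is what produces the precise bound $n \gtrsim b(n',\gamma',\beta')\log(1/\beta)/(\eps\alpha\gamma)$ stated in the theorem.
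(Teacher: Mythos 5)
Your high-level plan---slicing $f^D$ into $\Theta(1/\alpha)$ threshold functions, learning each with parameters $\gamma'=\Theta(\gamma\alpha)$, $\beta'=\Theta(\beta\alpha)$, and aggregating---matches the paper, but two steps in the middle would fail. The first is the privacy accounting. You actually draw a subsample $D'$ of size $n'$, add $\Lap(\eta)$ noise with $\eta=\Theta(\alpha)$ calibrated to sensitivity $1/n'$, and invoke amplification-by-subsampling to contract the composed privacy loss by the factor $n'/n$. But that linear contraction only holds when the composed loss $\eps_{\mathrm{sub}}$ is $O(1)$; in general $\eps'=\ln\bigl(1+(n'/n)(e^{\eps_{\mathrm{sub}}}-1)\bigr)\approx\eps_{\mathrm{sub}}-\ln(n/n')$ once $\eps_{\mathrm{sub}}$ is large. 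Here the learner consumes $b(n',\gamma',\beta')\ge n'$ labels (as the paper notes, sample complexity is at least the description length of the target, and in every instantiation $b(n')\gg n'$), each label is roughly $\Theta(1/(n'\eta))=\Theta(1/(n'\alpha))$-DP with respect to $D'$, so $\eps_{\mathrm{sub}}=\Theta\bigl(b\log(1/\beta)/(\gamma\,n'\alpha^2)\bigr)\gg1$, and amplification buys only an additive $\ln(n/n')$: you would need $n$ exponentially large in $b$, not the claimed linear bound. The paper never actually subsamples. Every noisy answer is computed from the \emph{full} database as $f^D(q)+\Lap(b_\total/\eps n)$, so the sensitivity is $1/n$ and plain composition over $b_\total=O(b\log(1/\beta')/\gamma)$ queries gives $\eps$-DP exactly when $n\gtrsim b_\total\log(b_\total/\beta)/(\eps\alpha)$. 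The subsample $D'$ appears only in the \emph{analysis} (Lemma~\ref{lem:subsample}): it certifies that the function being learned agrees with an $n'$-threshold wherever the oracle answers, so the learner's sample complexity is $b(n')$, while all answers still come from $D$.

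The second gap is the treatment of noisy labels. You feed the learner labels $\mathbb{I}\{f^{D'}(q)+\Lap(\eta)\ge t_j\}$ and argue that only a $\gamma'$-fraction are corrupted, which the learner ``already allows.'' Neither half of this holds. (a) The fraction of corrupted labels in run $j$ is governed by the $G$-mass of queries with small margin at $t_j$, which can be arbitrarily close to $1$ (take $G$ concentrated on queries with $f^D(q)=t_j$); the fact that each query is low-margin at \emph{at most one} threshold bounds the number of bad thresholds per query, not the per-run corruption rate. (b) Even a genuine $\gamma'$-fraction of flipped training labels is not covered by the hypothesis on $\Le$: Definition~\ref{def:learning-thresholds} promises a $\gamma'$-accurate hypothesis given \emph{correctly} labeled examples and says nothing about robustness to label noise---the paper explicitly names this mismatch as one of the two hurdles the reduction must overcome, and its PTF/LP-based learners can fail under such noise. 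The paper's threshold oracle resolves both issues at once: it returns $\bot$ on low-margin queries, which are \emph{discarded}, so the learner sees only exactly-correct labels drawn from a conditional distribution (harmless for a distribution-free learner); the discarded queries are absorbed in the aggregation step, since the hypotheses at the two adjacent thresholds still bracket $f^D(q)$ within $2\alpha/3$; and the degenerate case where almost all of $G$'s mass is rejected at some threshold is handled by terminating early and outputting the constant-$t_i$ synopsis. Without these ingredients your accuracy argument does not close.
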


Section \ref{sec:mainthm} gives a formal (and more general) statement in Theorem \ref{thm:privacy-via-learning}.
Section \ref{sec:proofoverview} gives a proof overview, and Section
\ref{sec:mainproof} gives the full proof. Note that, since the data
release algorithm we obtain from this reduction is distribution free (i.e.
works for any distribution on the query set) and only accesses the query
distribution by sampling, it can be {\em boosted} to yield accurate
answers on {\em all} the queries \cite{DworkRV10}.

\myparagraph{A More General Reduction.}
For clarity of exposition, we gave above a simplified form of the
reduction. This assumed that the learning algorithm is {\em
distribution-free} (i.e. works for any distribution over examples) and
only requires sampling access to labeled examples. These strong assumptions enable us to
get a distribution-free data release algorithm that only accesses the query distribution by sampling.

We also give a reduction that applies even to distribution-specific learning
algorithms that require (a certain kind of) oracle access to the function
being learned.  In addition to sampling labeled examples, the learning
algorithm can: $(i)$ estimate the distribution $G$ on any example $q$ by
querying $q$ and receiving a (multiplicative) approximation to the probability
$G[q]$; and $(ii)$ query an oracle for  the function $f$ being learned on any
$q$ such that $G[q] \neq 0.$ We refer to this as {\em approximate distribution
restricted oracle access}, see Definition
\ref{def:distribution-oracle-access}. Note that several natural learning
algorithms in the literature use oracle queries in this way; in particular, we
show that this is true for Jackson's Harmonic Sieve Algorithm
\cite{Jackson97}, see Section \ref{sec:Fourier}.

Our general reduction gives a data release algorithm for a class $\GQ$ of
distributions on the query set, provided we have a learning algorithm which
can also use approximate distribution restricted oracle access, and which
works for a slightly richer class of distributions $\GQ'$ (a {\em smooth
extension}, see Definition \ref{def:smooth-extension}). Again, several such
algorithms (based on Fourier analysis) are known in the literature; our
general reduction allows us to use them and obtain the new data release
results outlined in Section \ref{subsec:intro-Fourier}.

\myparagraph{Related Work: Privacy and Learning.}
Our new reduction adds to the fruitful and growing interaction between the fields of differentially private data release and learning theory. Prior works also explored this connection. In our work, we ``import'' learning theory techniques by drawing a correspondence between the database (in the data release setting), for which we want to approximate query answers, and the target function (in the learning setting) which labels examples. Several other works have used this correspondence (implicitly or explicitly), e.g. \cite{DworkNRRV09,DworkRV10,GuptaHRU11}.
A different view, in which \emph{queries} in the data release setting correspond to \emph{concepts} in learning theory, was used
in~\cite{BlumLR08} and also in \cite{GuptaHRU11}.

There is also work on \emph{differentially private learning algorithms} in
which the goal is to give differentially private variants of various learning
algorithms~\cite{BlumDMN05,KasiLNRS08}.

\subsection{Applications (Part I): Releasing Conjunctions}
\label{subsec:intro-conjunctions}

We use the reduction of Theorem \ref{thm:intro} to obtain new data release
algorithms ``automatically'' from learning algorithms that satisfy the
theorem's requirements. Here we describe the distribution-free
data release algorithms we obtain for approximating conjunction counting
queries. These use learning algorithms (which are themselves distribution-free
and require only random examples) based on polynomial threshold functions.

Throughout this section we fix the query class under consideration to be
conjunctions. We take $\U=\{0,1\}^d$, and a (monotone) conjunction $q \in \Q =
\zo^d$ is satisfied by $u$ iff $\forall i \mbox{ s.t. } q_i=1$ it is also the
case that $u_i =1$. (Our monotone conjunction results extend easily to general
non-monotone conjunctions with parameters unchanged.\footnote{To see this,
extend the data domain to be $\zo^{2d}$, and for each item in the original
domain include also its negation. General conjunctions in the original data
domain can now be treated as monotone conjunctions in the new data domain.
Note that the locality of a conjunction is unchanged. Our results in this
section are for arbitrary distributions over the set of monotone conjunctions
(over the new domain), and so they will continue to apply to arbitrary
distributions on general conjunctions over the original data domain.})
Our first result is an algorithm for releasing $k$-way conjunctions:

\begin{theorem}[Distribution-Free Data Release for $k$-way conjunctions]
\label{thm:intro-kway-conjunctions}
There is an $\eps$-differentially private $(\alpha,\beta,\gamma)$-accurate
distribution-free data release algorithm, which accesses the query distribution only by sampling,
for the class of $k$-way monotone Boolean conjunction queries.  The algorithm has runtime $\poly(n)$ on databases of size $n$
provided that \[
n \geq
d^{O\left(\sqrt{k\log\left(\frac {k\log d}{\alpha}\right)}\right)}\cdot
\tilde O\left(
\frac{\log\left(\nfrac1\beta\right)^3}
{\epsilon\alpha\gamma^2}\right)\mper
\]
\end{theorem}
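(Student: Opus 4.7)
My plan is to apply Theorem~\ref{thm:intro} by exhibiting a distribution-free PAC learner for $n$-thresholds over $(\cQ,\{p_u\colon u\in\{0,1\}^d\})$ with sample complexity $d^{O(\sqrt{k\log n})}\cdot\polylog(1/\beta,1/\gamma)/\gamma$ and polynomial running time. Substituting $n'=\Theta(\log|\cQ|/\alpha^2)=\tilde O(k\log d/\alpha^2)$ and $\gamma'=\Theta(\gamma\alpha)$ from Theorem~\ref{thm:intro} (and assuming $\gamma\ge\poly(1/d)$ in order to absorb lower-order logarithmic terms) then collapses the exponent to $D=O(\sqrt{k\log(k\log d/\alpha)})$, which yields the claimed database-size bound.

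The core step is to represent every $n$-threshold $f=\mathbb{I}\{g\ge t\}$ with $g=\tfrac1n\sum_j p_{u_j}$ by an \emph{exact} polynomial threshold function of degree $O(\sqrt{k\log n})$ on $\cQ$. Identifying $q\in\cQ$ with its support $S_q\subseteq[d]$ of size $k$, we have $p_u(q)=\prod_{i\in S_q}u_i$; on the $k$-sparse slice of $\{0,1\}^d$ this equals $\mathbb{I}\{\sum_{i\in T_u}q_i=0\}$ where $T_u\defeq\{i:u_i=0\}$, and the inner sum takes integer values in $\{0,1,\dots,k\}$. By the standard Paturi-style polynomial approximation (built from suitably amplified/shifted Chebyshev polynomials), $\mathbb{I}\{x=0\}$ on $\{0,\dots,k\}$ admits an $L_\infty$-approximation to error $\epsilon$ by a univariate polynomial of degree $O(\sqrt{k\log(1/\epsilon)})$; substituting the linear form $\sum_{i\in T_u}q_i$ and multilinearizing yields a multilinear polynomial $\tilde p_u$ in $q_1,\dots,q_d$ of the same degree, and averaging gives $\tilde g=\tfrac1n\sum_j\tilde p_{u_j}$ with $\|g-\tilde g\|_\infty\le\epsilon$ on $k$-sparse $q$.

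The observation that converts this approximation into an \emph{exact} PTF is the intrinsic $1/n$-gap of $n$-thresholds: $g(q)$ always lies in the grid $\{0,\tfrac1n,\dots,1\}$, so shifting $t$ into the midpoint of a grid interval leaves $f$ unchanged on $\cQ$ yet guarantees $|g(q)-t|\ge1/(2n)$ for every $q$. Choosing $\epsilon<1/(2n)$, and hence $D=O(\sqrt{k\log n})$, therefore makes $\mathbb{I}\{\tilde g\ge t\}$ agree with $f$ everywhere on $\cQ$. Equivalently, $f$ is a linear threshold function in the lifted feature space of all multilinear monomials in $q_1,\dots,q_d$ of degree at most $D$, a space of dimension $N\le d^D$. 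The learner lifts each labeled example to this space and finds a consistent LTF by linear programming; the standard VC-dimension bound $O(N)$ for LTFs in $N$ dimensions gives distribution-free $(\gamma,\beta)$-PAC learning with sample complexity $\tilde O(d^D\log(1/\beta)/\gamma)$ and running time polynomial in this, exactly the input Theorem~\ref{thm:intro} requires.

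The main technical point I expect to require care is justifying the grid shift of the threshold: Theorem~\ref{thm:intro} hands the learner a specific target $f$ with a specific $t$, so one must argue either that shifting $t$ into the interior of a grid interval preserves $f$ identically on $\cQ$ (which holds whenever $t$ does not coincide with a grid point), or that the residual ties $g(q)=t$ can be absorbed into the $\gamma$-error budget of the learner. Once this is in place, the approximate-degree bound, the multilinear monomial lift, and the LP-based LTF learner combine routinely to yield the learner with the claimed parameters, and Theorem~\ref{thm:intro-kway-conjunctions} follows directly from Theorem~\ref{thm:intro}.
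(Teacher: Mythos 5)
Your proposal is correct and follows essentially the same route as the paper: represent each $n$-threshold $f^D_t$ as an exact degree-$O(\sqrt{k\log n})$ PTF over the $k$-sparse slice via a Chebyshev-based univariate approximation of the point indicator on $\{0,\dots,k\}$ composed with the linear form $\sum_{i:u_i=0}q_i$, learn it distribution-free by LP over the $d^{O(\sqrt{k\log n'})}$-dimensional monomial lift, and plug into the random-examples version of the reduction with $n'=\Theta(\log|\cQ|/\alpha^2)$. The only cosmetic difference is that the paper uses a one-sided approximator ($s(k)=1$ exactly, $|s(j)|\le 1/(3n)$ for $j<k$) and thresholds the sum at $\lceil tn\rceil-1/2$, which resolves the tie/grid-shift issue you flag in exactly the way you anticipate.
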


Since this is a distribution-free data release algorithm that only accesses
the query distribution by sampling, we can use the boosting results of
\cite{DworkRV10} and obtain a data release algorithm that generates (w.h.p.) a
synopsis that is accurate for {\em all} queries. This increases the running
time to $d^k \cdot \poly(n)$ (because the boosting algorithm needs to
enumerate over all the $k$-way conjunctions). The required bound
on the database size increases slightly but our big-Oh notation hides this small increase.
The corollary is stated formally below:

\begin{corollary}[Boosted Data Release for $k$-way Conjunctions]
\label{cor:boosted-conjunctions}
There is an $\eps$-differentially private $(\alpha,\beta,\gamma = 0)$-accurate
distribution-free data release algorithm for the class of $k$-way monotone
Boolean conjunction queries with runtime $d^k \cdot \poly(n)$
on databases of size $n$, provided that
\[
n \geq
d^{O\left( \sqrt{k\log\left(\frac {k\log d}{\alpha}\right)}\right)}\cdot
\tilde O\left(
\frac{\log\left(\nfrac1\beta\right)^3}
{\epsilon\alpha}\right)\mper
\]
\end{corollary}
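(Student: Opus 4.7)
The plan is to apply the private boosting framework of Dwork, Rothblum, and Vadhan \cite{DworkRV10} on top of the distribution-free data release algorithm provided by Theorem \ref{thm:intro-kway-conjunctions}. That framework takes as a black box an $\eps_0$-differentially private $(\alpha_0,\beta_0,\gamma_0)$-accurate distribution-free data release algorithm that accesses the query distribution only by sampling, and outputs an $\eps$-differentially private synopsis that is $\alpha$-accurate on \emph{every} query in $\Q$ with probability $1-\beta$, at the cost of running the base algorithm for $T=\poly(\log|\Q|,1/\alpha)$ reweighting rounds and, in each round, enumerating $\Q$ to maintain the current weight vector over queries. Because Theorem \ref{thm:intro-kway-conjunctions} already produces such a distribution-free, sampling-only algorithm for $k$-way conjunctions, we are exactly in the setting where this boosting applies.

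Concretely, I would first instantiate Theorem \ref{thm:intro-kway-conjunctions} with $\gamma_0$ taken to be a small constant (a $1/\polylog$ value suffices), with $\alpha_0=\Theta(\alpha)$ to account for the constant-factor accuracy loss in boosting, and with $\beta_0=\beta/\poly(T)$ to absorb a union bound across boosting rounds. I would then feed this base algorithm into the DRV boosting procedure. The per-round privacy parameter is set via advanced composition so that $T$ rounds compose to the desired $\eps$; this introduces only a $\tilde O$-hidden overhead in $1/\eps$. Each boosting round invokes the base algorithm on a reweighted distribution over $\Q$; sampling this distribution requires enumerating the $|\Q|\le\binom{d}{k}\le d^k$ conjunctions, which is the source of the $d^k$ factor in the final runtime. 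All other operations in boosting are polynomial in $n$, $d^k$, and the database size.

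Finally, I would verify that the database-size bound claimed in Corollary \ref{cor:boosted-conjunctions} is consistent with the bound produced by this reduction. Substituting $\gamma_0=\Theta(1/\polylog)$, $\alpha_0=\Theta(\alpha)$, and $\beta_0=\beta/\poly(T)$ into the bound of Theorem \ref{thm:intro-kway-conjunctions} leaves the dominant term $d^{O(\sqrt{k\log(k\log d/\alpha)})}$ unchanged, while the $1/\gamma^2$ factor is replaced by a polylogarithmic factor that is absorbed into the $\tilde O(\cdot)$ in the statement. The $1/\eps$ dependence similarly picks up only a $\polylog$ factor from advanced composition, which the $\tilde O$ again hides. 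Distribution-freeness and the $\gamma=0$ guarantee then come for free from the boosting theorem.

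The main obstacle is bookkeeping: one must track how the per-round parameters $(\eps_0,\alpha_0,\beta_0,\gamma_0)$ inherited from boosting feed back into the database-size requirement of Theorem \ref{thm:intro-kway-conjunctions}, and verify that the accumulated polylogarithmic factors in $T$, $1/\alpha$, $\log|\Q|=O(k\log d)$, and $1/\eps$ really are absorbed into the $\tilde O$ rather than changing the exponent in $d^{O(\sqrt{k\log(k\log d/\alpha)})}$. Once the DRV boosting theorem is applied as a black box with these parameter choices, the claim follows mechanically.
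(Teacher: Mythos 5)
Your proposal is correct and follows exactly the paper's own route: the paper derives Corollary~\ref{cor:boosted-conjunctions} by applying the boosting-for-queries theorem of \cite{DworkRV10} as a black box to the distribution-free, sampling-only algorithm of Theorem~\ref{thm:intro-kway-conjunctions}, with the $d^k$ runtime factor coming from enumerating the query class in each boosting round and the slight increase in the database-size bound absorbed into the big-Oh. Your parameter bookkeeping (constant $\gamma_0$, $\alpha_0=\Theta(\alpha)$, $\beta_0$ split across rounds, composition for privacy) is, if anything, more explicit than what the paper provides.
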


We also obtain a new data release algorithm for releasing the answers to {\em all} conjunctions:

\begin{theorem}[Distribution-Free Data Release for {\em All} Conjunctions]
\label{thm:intro-conjunctions}
There is an $\eps$-differentially private $(\alpha,\beta,\gamma)$-accurate
distribution-free data release algorithm, which accesses the query distribution only by sampling, for the class of all monotone Boolean conjunction queries.  The algorithm has runtime $\poly(n)$ on databases of size $n$, provided that
\[
n \geq
d^{O\left(d^{\nfrac13} \cdot \log^{\nfrac23} \left(\frac d{\alpha}\right)\right)}\cdot
\tilde O\left(
\frac{\log\left(\nfrac1\beta\right)^3}
{\epsilon\alpha\gamma^2} \right) \mper
\]
\end{theorem}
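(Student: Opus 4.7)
The plan is to instantiate Theorem \ref{thm:intro} with the query class $\Q=\zo^d$, data universe $\U=\zo^d$, and predicate $p_u(q)=\prod_{i\colon q_i=1}u_i$ that evaluates the monotone conjunction indexed by $q$ on the data item $u$. Since the $p_u$'s are themselves monotone conjunctions in the query variables $q_1,\ldots,q_d$, an $n$-threshold of $\{p_u\colon u\in\zo^d\}$ is precisely a normalised threshold of $n$ monotone conjunctions in $d$ Boolean variables. It therefore suffices to give a distribution-free PAC learner for this class with sample complexity $b(n',\gamma',\beta')=d^{O(D)}\cdot\poly(1/\gamma',\log(1/\beta'))$ for $D=O(d^{1/3}\log^{2/3}(d/\alpha))$; plugging this into Theorem \ref{thm:intro} with $n'=\tilde O(d/\alpha^2)$ yields exactly the database-size bound of the theorem.

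I would build the learner via polynomial threshold functions. The core technical step is to show that every normalised threshold of $n$ monotone conjunctions in $d$ variables has a PTF of degree $D=O(d^{1/3}\log^{2/3}(nd/\alpha))$. Following the Klivans-Servedio DNF-to-PTF construction, partition the $n$ conjunctions into a \emph{light} set of fan-in at most $\tau$ and a \emph{heavy} set of fan-in greater than $\tau$. Each light conjunction admits a Minsky-Papert polynomial approximation of degree $\tilde O(\sqrt\tau\log(n/\alpha))$ and error $O(\alpha/n)$, so summing these approximants yields a polynomial within $\alpha$ of $\tfrac1n\sum_{i\text{ light}}p_{u_i}(q)$ in $\ell_\infty$. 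Heavy conjunctions depend on more than $\tau$ variables, and their contribution can be handled via the dual complementary construction of Klivans-Servedio, which represents an OR of heavy ANDs by a polynomial of degree $\tilde O(d/\tau)$. Composing the inner approximations with a Chebyshev-based univariate approximation of the step function $\mathbb{I}\{s\ge t\}$ on the grid $\{0,1/n,\ldots,1\}$ (degree $\tilde O(\sqrt n\log(1/\alpha))$) produces an overall polynomial whose sign matches the target on every input outside a measure-$\alpha$ slab around $t$; balancing $\tau\approx d^{2/3}$ and using $n\le n'=\tilde O(d/\alpha^2)$ gives the stated degree.

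With the PTF representation in hand, distribution-free PAC learning reduces to halfspace learning over the $\binom{d}{\le D}=d^{O(D)}$ monomial features: linear programming on $\tilde O(d^{O(D)}\log(1/\beta')/\gamma')$ labelled samples finds a consistent halfspace whenever one exists, and the $\alpha$-slab error is absorbed into $\gamma'$ by the standard trick of bracketing the target between two shifted PTFs computing $\mathbb{I}\{s\ge t-\alpha/2\}$ and $\mathbb{I}\{s\ge t+\alpha/2\}$ and classifying the slab arbitrarily.

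The main obstacle will be the PTF construction for threshold-of-conjunctions. The original Klivans-Servedio argument is tailored to the OR case ($t\approx 1/n$), and extending it to arbitrary thresholds requires combining an outer univariate threshold polynomial with the inner light/heavy decomposition while ensuring that the outer $\tilde O(\sqrt n)$ contribution does not dominate the final $d^{1/3}$ bound; this is exactly why the optimal choice of $\tau$ depends on $n$ through a log factor rather than a polynomial factor. The remaining calculations---choosing $\tau$ optimally, tracking the polylogarithmic factors, and plugging the learner's sample complexity back into the quantitative bound of Theorem \ref{thm:intro}---are routine.
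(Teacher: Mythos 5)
Your overall architecture---represent thresholds of conjunction predicates as low-degree PTFs, learn them by linear programming over the monomial feature expansion, and feed the resulting distribution-free sample-based learner into the reduction---is exactly the paper's route (Theorem~\ref{thm:main-random} combined with Theorem~\ref{thm:lbool1}). But the PTF construction you sketch has a flaw that would destroy the degree bound. Composing an outer Chebyshev approximation of the step function $\mathbb{I}\{s\ge t\}$ on the grid $\{0,1/n,\dots,1\}$ (degree $\tilde O(\sqrt{n}\,)$) with the inner approximants multiplies degrees, so even after subsampling to $n'=\Theta(d/\alpha^2)$ the total degree is at least $\tilde\Omega(\sqrt{n'})=\tilde\Omega(\sqrt{d}/\alpha)$, which swamps $d^{1/3}\log^{2/3}$. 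The paper needs no outer polynomial at all: each $p_u$ is approximated by a polynomial $A_u$ that equals $1$ exactly when $p_u(q)=1$ and has magnitude at most $1/(3n)$ otherwise (Corollary~\ref{cor:poly-rep}), so $\sum_u A_u(q)$ lies within $1/3$ of the \emph{integer} count $n f^D(q)$, and comparing against the half-integer $\lceil tn\rceil-\nfrac12$ gives an \emph{exact} sign-representation (Lemma~\ref{lem:thr-rep}). You already have inner error $O(\alpha/n)$ per term, so this integrality trick is available to you and makes the outer step approximation both unnecessary and harmful. Exactness also matters downstream: Theorem~\ref{thm:intro} requires a learner for exact thresholds under arbitrary distributions, and a PTF that is only sign-correct outside an $\alpha$-slab around $t$ gives no consistency guarantee for the LP when the distribution concentrates on that slab; your ``bracketing'' patch does not control the slab's measure under an adversarial $G$.

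The treatment of heavy terms is also off. The target is a \emph{count} of satisfied predicates, not an OR, so you cannot lump the heavy conjunctions into a single ``OR of heavy ANDs'' of degree $\tilde O(d/\tau)$ and add it in---their individual contributions to the count matter. The correct adaptation of Klivans--Servedio (the paper's Lemma~\ref{lem:DT}) is a decision tree over the $q$-variables of rank $O((d/r)\ln n)$ whose \emph{leaves} are again threshold functions $f^{D'}_t$ in which every surviving data item has small width; one then applies the low-width PTF of Lemma~\ref{lem:width} at each leaf and uses the rank-to-PTF composition. Note here that the relevant ``size'' is the width of the data item $u$ (its number of zero coordinates), since as a function of $q$ the predicate is $p_u(q)=1-\bigvee_{i:u_i=0}q_i$ and setting $q_i=1$ kills every $p_u$ with $u_i=0$; your description in terms of the ``fan-in'' of the conjunctions conflates the query $q$ (the learner's input) with the data items $u$ (which index the summed predicates). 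With these two repairs---integrality in place of an outer threshold polynomial, and the rank/width decision-tree decomposition in place of the OR-of-heavy-ANDs step---your balancing $r\approx d^{2/3}(\log n')^{1/3}$ and the final bookkeeping go through as in the paper.
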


Again, we can apply boosting to this result;
this gives improvements over previous work for a certain range of
parameters (roughly $k \in [d^{1/3},d^{2/3}]$).  We omit the details.

\myparagraph{Related Work on Releasing Conjunctions.} Several past works have
considered differentially private data release for conjunctions and $k$-way
conjunctions (also known as marginals and contingency tables). As a corollary of their more general
Laplace and Gaussian mechanisms, the work of Dwork {\em et al.} \cite{DworkMNS06}
showed how to release all $k$-way conjunctions in running time $d^{O(k)}$
provided that the database size is at least $d^{O(k)}$.  Barak {\em et al.} \cite{BarakCDKMT07} showed how to
release {\em consistent} contingency tables with similar database size bounds.
The running time, however, was increased to~$\exp(d).$
We note
that our data-release algorithms do not guarantee consistency. Gupta {\em et
al.}~gave {\em distribution-specific} data release algorithm for $k$-way and for all
conjunctions. These algorithms work for the uniform distribution over ($k$-way
or general) conjunctions. The database size bound  and running time were
(roughly) $d^{\tilde{O}\left(1/{\alpha^2}\right)}$.
For distribution-specific data release on
the uniform distribution, the dependence on $d$ in their work is better than
our algorithms but the dependence on $\alpha$ is worse. Finally, we note that
the general information-theoretic algorithms for differentially private data
release also yield algorithms for the specific case of conjunctions. These
algorithms are (significantly) more computationally expensive, but they have
better database size bounds. For example, the algorithm of~\cite{HardtR10} has
running time $\exp(d)$ but database size bound is (roughly)
$\tilde{O}(d/\alpha^2)$ (for the relaxed notion of
$(\eps,\delta)$-differential privacy).

In terms of negative results, Ullman and Vadhan \cite{UllmanV11} showed that,
under mild cryptographic assumptions, no data release algorithm for
conjunctions (even $2$-way) can output a {\em synthetic database} in running
time less than $\exp(d)$ (this holds even for {\em distribution-specific} data
release on the uniform distribution). Our results side-step this negative
result because the algorithms do not release a synthetic database.

Kasiviswanathan {\em et al.}~\cite{KasiRSU10} showed a lower bound
of $\tilde\Omega\left(\min\left\{d^{
k/2}/\alpha,1/{\alpha^2}\right\}\right)$
on the database size needed for releasing $k$-way conjunctions.
To see that this is consistent with our bounds,
note that our bound on $n$ is always larger than $f(\alpha)=2^{\sqrt{k\log(\nfrac1\alpha)}}/\alpha.$ We have
$f(\alpha)<1/\alpha^2$ only if $k<\log(1/\alpha).$ But in the range where
$k<\log(1/\alpha)$ our theorem needs $n$ to be larger than $d^{k}/\alpha$
which is consistent with the lower bound.

\subsection{Applications (Part II): Fourier-Based Approach}
\label{subsec:intro-Fourier}

We also use Theorem \ref{thm:intro} (in its more general formulation given in
Section \ref{sec:mainthm}) to obtain new data release algorithms for answering
parity counting queries (in polynomial time) and general $\AC^0$ counting queries (in quasi-polynomial time). For both of these we fix the data universe to be $\U=\{0,1\}^d$, and take the set of query descriptions to also be $\Q=\{0,1\}^d$ (with different semantics for queries in the two cases). Both algorithms are distribution-specific, working for the uniform distribution over query descriptions,\footnote{More generally, we can get results for {\em smooth} distributions, we defer these to the full version.} and both instantiate the reduction with learning algorithms that use Fourier analysis of the target function.  Thus the full data release algorithms use Fourier analysis of the database (viewed as a function on queries).

\myparagraph{Parity Counting Queries.} Here we consider counting queries that, for a fixed $q \in \zo^d$, output how many items in the database have inner product 1 with $q$ (inner products are taken over $GF[2]$). I.e., we use the parity predicate $P(q,u)= \sum_i q_i \cdot u_i \text{~(mod 2)}$. We obtain a polynomial-time data release algorithm for this class (w.r.t. the uniform distribution over queries). This uses our reduction, instantiated with Jackson's Harmonic Sieve learning algorithm \cite{Jackson97}.
In Section~\ref{sec:Fourier} we prove:

\begin{theorem}[Uniform Distribution Data Release for Parity Counting Queries.]
\label{thm:intro-parity}
There is an $\eps$-differentially private algorithm for releasing the class of
parity queries over the uniform distribution on $\Q$. For databases of size $n$, the algorithm has runtime $\poly(n)$ and is $(\alpha,\beta,\gamma)$-accurate, provided that
\[
n \geq \frac{\poly(d,\nfrac1\alpha,\nfrac1\gamma,\log(\nfrac1\beta))}{\epsilon}.
\]
\end{theorem}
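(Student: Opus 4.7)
The plan is to instantiate the general reduction of Section~\ref{sec:mainthm} (Theorem~\ref{thm:privacy-via-learning}, the version permitting approximate distribution-restricted oracle access as in Definition~\ref{def:distribution-oracle-access}) with Jackson's Harmonic Sieve as the learning algorithm $\Le$. For the parity predicate $P(q,u)=\sum_i q_i u_i \pmod 2$, fixing a data item $u\in\U$ yields $p_u(q)=P(q,u)$, which is simply the parity function on $\Q=\zo^d$ indexed by $u$. Consequently the class of $n$-thresholds over $\F=\{p_u : u\in\U\}$ is exactly the class of threshold-of-parity functions
\[
f(q) \;=\; \mathbb{I}\!\left\{\tfrac{1}{n}\sum_{i=1}^n p_{u_i}(q) \,\ge\, t\right\},
\]
and the reduction requires only a distribution-specific learner for this class under the uniform distribution on $\Q$.

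The main learning-theoretic step is to verify that the Harmonic Sieve PAC-learns this class under uniform in time $\poly(d,1/\gamma,\log(1/\beta))$. Translating $\zo\leftrightarrow\signs$, write $f=\mathrm{sign}(g)$ where $g(q)=\tfrac{1}{n}\sum_{i}\chi_{u_i}(q) - (2t-1)$, viewing each parity now as a $\pm 1$-valued character. The Fourier $L_1$-norm of $g$ is at most $2$, \emph{independent of $n$} and of the specific $u_i$'s and $t$. By the standard ``heavy Fourier coefficient'' argument (via Kushilevitz--Mansour/Goldreich--Levin), this bounded $L_1$-norm implies that some character $\chi_S$ has correlation $\Omega(1)$ with $f$ under uniform, which is exactly the weak-learning hypothesis that Jackson's smooth boosting turns into a $(\gamma,\beta)$-PAC learner with the stated complexity.

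Next, I verify compatibility with the reduction's oracle model. Under uniform on $\Q=\zo^d$ we have $G[q]=2^{-d}>0$ everywhere, so the target-value oracle is defined on every $q$; and the KM weak learner only needs Fourier coefficient estimates, which are robust to the small noise present in the distribution-restricted oracle's answers. Jackson's smooth booster is in turn noise-tolerant, so the full Harmonic Sieve fits into the approximate distribution-restricted oracle model. With $b(n,\gamma,\beta)=\poly(d,1/\gamma,\log(1/\beta))$ and matching runtime, plugging into Theorem~\ref{thm:privacy-via-learning} with the usual parameter settings $n'=\Theta(\log|\Q|/\alpha^2)=\Theta(d/\alpha^2)$, $\gamma'=\Theta(\gamma\alpha)$, $\beta'=\Theta(\beta\alpha)$ produces the database-size bound $n\ge \poly(d,1/\alpha,1/\gamma,\log(1/\beta))/\epsilon$ and runtime $\poly(n)$.

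The main obstacle is the second step: carefully checking that both the KM-based weak learner and the smooth boosting stage inside the Harmonic Sieve continue to function correctly when membership queries are replaced by the approximate distribution-restricted oracle access provided by the reduction, and that the noise and failure probabilities compose cleanly across boosting rounds. Once this interface is set up, the parity case is among the cleanest applications of the Sieve because the relevant Fourier $L_1$-norm is an absolute constant, so the learner's complexity depends only polynomially on $d$ and $1/\gamma$.
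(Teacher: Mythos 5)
Your overall strategy --- instantiating the general reduction (Theorem~\ref{thm:main}) with Jackson's Harmonic Sieve after observing that $n$-thresholds over $\{p_u\}$ are exactly thresholds of parities --- is the same as the paper's, and the final database-size bound is right. But two steps in your justification are wrong as stated. First, the claim that $\|\hat g\|_1 \le 2$ forces some character to have correlation $\Omega(1)$ with $f=\mathrm{sign}(g)$ is false. The correct inequality is $\max_S|\E[f\chi_S]| \ge \E[|g|]/\|\hat g\|_1$, and $\E[|g|]$ can be as small as $\Theta(1/n)$; concretely, the majority of $n$ distinct parities has every Fourier coefficient of magnitude $O(1/\sqrt n)$, so no character has constant correlation with it. The weak advantage is only $\Omega(1/n)$, and the Sieve's complexity for learning $n$-thresholds of parities is $\poly(d,n,1/\gamma)\cdot\log(1/\beta)$, exactly as stated in Theorem~\ref{thm:lsieve1}. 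This does not sink the theorem --- the reduction invokes the learner only on $n'=\Theta(d/\alpha^2)$-thresholds, so the $n'$-dependence is absorbed into $\poly(d,1/\alpha)$ --- but the $n$-independence you assert is neither true nor needed.

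Second, and more substantively, your verification of the oracle model targets the wrong distribution. Theorem~\ref{thm:main} requires the learner to work over the $(2/\gamma)$-\emph{smooth extension} of the uniform distribution (Definition~\ref{def:smooth-extension}), because the reduction hands the learner the conditional distribution that results after the threshold oracle rejects low-margin queries. Distributions in this class may put \emph{zero} mass on many points, so the observation that $G[q]=2^{-d}>0$ under uniform is beside the point: the approximate distribution-restricted oracle of Definition~\ref{def:distribution-oracle-access} refuses to evaluate $f$ on zero-mass points, and a priori the Sieve's membership queries could land exactly there. The paper closes this gap with a structural observation about the algorithm: the Harmonic Sieve never queries $f$ itself, only the reweighted function $g(q)=2^d\cdot f(q)\cdot D'[q]$ where $D'$ approximates (a smooth extension of) the learning distribution, and this function is identically $0$ wherever the distribution vanishes, so the restricted oracle provides everything the Sieve needs. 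You correctly flag this compatibility check as the main obstacle, but the reason you sketch for why it goes through does not address the actual difficulty; you would need the paper's argument (or an equivalent one) to complete the proof.
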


\myparagraph{$\AC^0$ Counting Queries.} We also consider a quite general class
of counting queries, namely, any query family whose predicate is computed by a
constant depth ($\AC^0$) circuit. For any family of this type, in
Section~\ref{sec:Fourier} we obtain a data release algorithm over the uniform
distribution that requires a database of quasi-polynomial (in
$d$) size (and has running time polynomial in the database size, or
quasi-polynomial in $d$).

\begin{theorem}[Uniform Distribution Data Release for $\AC^0$ Counting Queries]
\label{thm:intro-AC0}

Take $\U=\Q=\zo^d$, and $P(q,u): \Q \times \U \to \{0,1\}$ a predicate computed
by a Boolean circuit of depth~$\ell=O(1)$ and size $\poly(d)$.
There is an $\eps$-differentially private data release algorithm for this query class over the uniform distribution on $\Q$. For databases of size $n$, the algorithm has runtime $\poly(n)$ and is
    $(\alpha,\beta,\gamma)$-accurate, provided that:
\[
n \geq d^{O\left(\log^{\ell}\left(\frac d{\alpha \gamma}\right)\right)} \cdot
\tilde{O}\left( {\frac{\log^3\left(1/\beta\right)}
{\epsilon\alpha^2 \gamma}}
\right)\mper
\]
\end{theorem}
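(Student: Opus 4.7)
The plan is to instantiate the general reduction (the oracle-based extension of Theorem~\ref{thm:intro} stated as Theorem~\ref{thm:privacy-via-learning}) with the Jackson--Klivans--Servedio learning algorithm~\cite{JacksonKS02} for Majority-of-$\AC^0$ circuits under the uniform distribution. The key structural observation is that for every fixed $u\in\U=\zo^d$, the predicate $p_u(q)=P(q,u)$ is computed by the $\AC^0$ circuit obtained from $P$ by hardwiring its $u$-inputs; this circuit still has depth $\ell$ and size $\poly(d)$. Consequently, for any $u_1,\ldots,u_n\in\U$ and any threshold~$t$, the $n$-threshold
\[
f(q) \;=\; \mathbb{I}\!\left\{\tfrac1n\sum_{i=1}^n p_{u_i}(q) \ge t\right\}
\]
is a Majority over $n$ functions each computed by an $\AC^0$ circuit of depth $\ell$ and size $\poly(d)$, i.e., exactly the concept class handled by~\cite{JacksonKS02}.

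Next I would recall that the algorithm of~\cite{JacksonKS02} learns this class to error $\gamma$ under the uniform distribution on $\zo^d$ in time and membership-query complexity $d^{O(\log^\ell(n/\gamma))}$. Since the uniform distribution $G$ on $\Q=\zo^d$ places mass $2^{-d}$ on every query, the approximate distribution-restricted oracle access required by the general reduction (Definition~\ref{def:distribution-oracle-access}) is trivially available: the learning algorithm can evaluate $G[q]$ exactly, and membership-query access to the target is precisely the labeled oracle access allowed by the reduction. In particular there is no ``smoothness'' issue to handle beyond the uniform distribution itself, so the class $\GQ'$ in the smooth extension can be taken to contain just the uniform distribution.

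With these ingredients, I would apply Theorem~\ref{thm:privacy-via-learning} with the parameters prescribed by the reduction: $n'=\Theta(\log|\Q|/\alpha^2)=\Theta(d/\alpha^2)$, $\gamma'=\Theta(\gamma\alpha)$, and $\beta'=\Theta(\beta\alpha)$. Plugging these into the JKS bound gives a sample/query complexity of
\[
b(n',\gamma',\beta') \;=\; d^{O\!\left(\log^\ell(d/(\alpha\gamma))\right)}\cdot \polylog(1/\beta)\mper
\]
Substituting into the lower bound $n \geq C\cdot b(n',\gamma',\beta')\cdot\log(1/\beta)/(\eps\alpha\gamma)$ from the reduction yields exactly the database-size bound claimed in the theorem statement, and the running-time claim follows because the reduction preserves the running time of the learning algorithm up to polynomial factors and a $\poly(n)$ overhead.

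\paragraph{Main obstacle.}
The hard part is not any single calculation but verifying that the JKS algorithm fits cleanly into the oracle model used by Theorem~\ref{thm:privacy-via-learning}: one must check that its queries to the target function and to the underlying distribution can all be answered using the approximate distribution-restricted oracle, and that the output hypothesis has the form of a threshold function of $\poly$-many $\AC^0$-computable features (so that the reduction's privacy analysis goes through when the learner is simulated on the noisy empirical database). Once this compatibility is established, the quantitative bound is a direct substitution of the JKS complexity into the reduction's formula.
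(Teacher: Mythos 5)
Your overall route is the paper's route: hardwire $u$ into $P$ to see that each $p_u$ is itself a depth-$\ell$, size-$\poly(d)$ circuit, so the $n$-threshold is a Majority-of-$\AC^0$, then instantiate Theorem~\ref{thm:privacy-via-learning} with the algorithm of \cite{JacksonKS02}. But there is a genuine gap at exactly the point you wave away. You assert that ``there is no smoothness issue to handle beyond the uniform distribution itself, so the class $\GQ'$ in the smooth extension can be taken to contain just the uniform distribution.'' That is not permitted by the reduction: Theorem~\ref{thm:privacy-via-learning} explicitly requires the learner to $(\gamma,\beta)$-learn over $\GQ'$, the $(2/\gamma)$-\emph{smooth extension} of $\GQ$, not over $\GQ$ itself. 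The reason is structural, not cosmetic: the threshold oracle $\TO$ rejects queries whose answer is close to the current threshold $t_i$, so the labeled examples actually fed to the learner are drawn from the \emph{conditional} distribution $G_i$ (uniform conditioned on non-rejection), which is $1/\gamma$-smooth with respect to uniform but is in general not uniform. Likewise the evaluation oracle the reduction simulates returns an approximation to $G_i[q]$ (namely $G[q]\cdot b_\iter/|B_i|$), not $G[q]=2^{-d}$. So the whole burden of the instantiation is to show that the \cite{JacksonKS02} algorithm -- which, as stated in that paper, learns only under the \emph{uniform} distribution from random examples -- extends to learning under arbitrary $C$-smooth distributions given sampling access and an approximate evaluation oracle for the distribution. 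The paper does this by invoking the same argument Jackson gives for extending the Harmonic Sieve to smooth distributions (Section~7.1 of \cite{Jackson97}), noting that the JKS weak learner is an exhaustive search over low-weight parities rather than a membership-query Kushilevitz--Mansour step, and that the resulting complexity picks up the smoothness parameter $C=2/\gamma$ inside the $\log^\ell$. Your proposal, as written, never supplies this step, and without it the instantiation does not go through.

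A secondary, related inconsistency: you substitute into the simplified sampling-only bound $n \ge C\, b(n',\gamma',\beta')\log(1/\beta)/(\eps\alpha\gamma)$, but since the learner here needs (approximate distribution-restricted) evaluation access rather than random examples alone, you must use the general bound of Theorem~\ref{thm:privacy-via-learning}, which carries an extra factor of $1/\alpha$ (i.e., $\eps\alpha^2\gamma$ in the denominator) -- and indeed the theorem statement you are proving has $\alpha^2$, not $\alpha$. This mismatch is a symptom of the same confusion about which access model the JKS instantiation actually requires.
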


This result uses our reduction instantiated with an algorithm of 
Jackson~\emph{et al.}
\cite{JacksonKS02} for learning Majority-of-$\AC^0$
circuits. To the best of our knowledge, this is the first positive result for
private data release that uses the (circuit) structure of the query class in a
``non black-box'' way to approximate the query answer. We note that the class
of $\AC^0$ predicates is quite rich. For example, it includes conjunctions, approximate
counting \cite{Ajtai83}, and $GF[2]$ polynomials with
$\polylog(d)$ many terms. While our result is
specific to the uniform distribution over $\Q$, we note that
some query sets (and query descriptions) may be amenable to {\em random
self-reducibility}, where an algorithm providing accurate answers to uniformly
random $q \in Q$ can be used to get (w.h.p.) accurate answers to {\em any} $q
\in Q$. We also note that Theorem \ref{thm:intro-AC0} leaves a large degree of freedom in
how a class of counting queries is to be represented. Many different sets of
query descriptions $\Q$ and predicates $P(q,u)$ can correspond to the same set
of counting queries over the same $\U$, and it may well be the case that some
representations are more amenable to computations in $\AC^0$ and/or random
self-reducibility. Finally, we note that the hardness results of Dwork {\em et
al.} \cite{DworkNRRV09} actually considered (and ruled out) efficient
data-release algorithms for $\AC^0$ counting queries (even for the uniform
distribution case), but only when the algorithm's output is a synthetic
database. Theorem \ref{thm:intro-AC0} side-steps these negative results
because the output is not a synthetic database.

\section{Preliminaries} \label{sec:prelim}

\myparagraph{Data sets and differential privacy.}
We consider a data universe $\U$, where throughout this work we take $\U =
\bits^d$. We typically refer to an element $u \in \U$ as an \emph{item}.
A data set (or database) $D$ of size $n$ over the universe $\U$
is an ordered multiset consisting of $n$ items from $\U.$ We will sometimes
think of $D$ as a tuple in $\U^n.$ We use the notation $|D|$ to denote the
size of $D$ (here, $n$).
Two data sets $D,D'$ are called \emph{adjacent} if they are both of size $n$
and they agree in at least $n-1$ items (i.e., their edit distance is at most $1$).

We will be interested in randomized algorithms that map data sets into some abstract
range $\mathcal{R}$ and satisfy the notion of differential privacy.
\begin{definition}[Differential Privacy \cite{DworkMNS06}]
\label{def:diffP}
A randomized algorithm $\mathcal{M}$ mapping data sets over $\U$ to outcomes
in $\mathcal{R}$ satisfies \emph{$(\epsilon,\delta)$-differential privacy} if
for all $S \subset \mathcal{R}$ and every pair of two adjacent databases
$D,D',$ we have $\Pr(\mathcal{M}(D) \in S)\leq e^\epsilon \Pr(\mathcal{M}(D')
\in S) +\delta \mper$ If $\delta=0,$ we say the algorithm satisfies
\emph{$\epsilon$-differential privacy}.
\end{definition}
\myparagraph{Counting queries.}
A class of \emph{counting queries} is specified by a predicate $P\colon
\Q\times\U\to\bits$ where $\Q$ is a set of query descriptions. Each $q \in \Q$ specifies a query  and the answer for a query $q\in\Q$ on a single data
item $u\in\U$ is given by $P(q,u).$
The answer of a counting query $q\in\Q$ on a data set $D$ is defined as
$\frac1n\sum_{u \in D}P(q,u)\mper$

We will often fix a data item $u$ and database $D\in \U^n$ of $n$ data items,
and use the following notation:
\begin{itemize}

 \item $p_u\colon\Q\to\bits,$ $p_u(q) \defeq P(q,u).$ The predicate
on a fixed data item $u$.

 \item $f^D\colon\Q \rightarrow [0,1],$  $f^D(q)\defeq\frac{1}{n}
\sum_{u\in D} P(q,u).$ For an input query description and fixed
database, counts the fraction of database items that satisfy that query.

 \item $f^D_t\colon\Q \rightarrow \bits,$ $f^D_t(q) \defeq \mathbb{I}\left\{
f^D(q) \geq t \right\}$.
For an input query description and fixed database and threshold $t \in
[0,1]$, indicates whether the fraction of database items that satisfy that query
is at least $t$. Here and in the following $\mathbb{I}$ denotes the
$0/1$-indicator function.
\end{itemize}
%


We close this section with some concrete examples of query classes that we will consider.  Fix $\U=\bits^d$ and $\Q=\bits^d.$ The query class of \emph{monotone boolean conjunctions}
is defined by the predicate
%
$P(q,u)=\bigwedge_{i\colon q_i=1} u_i \mper$
Note that we may equivalently write
$P(q,u)= 1-\bigvee_{i\colon u_i=0} q_i\mper$
The query class of \emph{parities over $\{0,1\}^d$} is defined by the predicate
$P(q,u) = \sum_{i : u_i=1} q_i \text{~(mod 2)}\mper$


\section{Private Data Release via Learning Thresholds}
\label{sec:privacy-via-learning}

In this section we describe our reduction from private data release to a
related computational learning task of learning thresholded sums. Section~\ref{sec:setup}
sets the stage, first introducing definitions for handling distributions and access to
an oracle, and then proceeds with notation and formal definitions of
(non-interactive) data release and of learning threshold functions.  Section~\ref{sec:mainthm}
formally states our main theorem giving the reduction, and Section~\ref{sec:proofoverview} gives
an intuitive overview of the proof. The formal proof is then given in
Section~\ref{sec:mainproof}.

\subsection{Distribution access, data release, learning thresholds}
\label{sec:setup}

\begin{definition}[Sampling or Evaluation Access to a Distribution]
\label{def:access-distribution}

Let $G$ be a distribution over a set $\Q$. When we give an algorithm $\A$ {\em
sampling access} to $G$, we mean that $\A$ is allowed to sample items
distributed by $G$. When we give an algorithm $\A$ {\em evaluation access} to
$G$, we mean that $\A$ is both allowed to sample items distributed by $G$ and
also to make oracle queries: in such a query $\A$ specifies any $q \in Q$ and receives back the
probability $G[q] \in [0,1]$ of $q$ under $G$. For both types of access we will
often measure $\A$'s {\em sample complexity} or {\em number of queries} (for the
case of evaluation access).\footnote{Note that, generally speaking, sampling
and evaluation access are incomparably powerful (see
\cite{KearnsMRRSS94,Naor96}). In this work, however, whenever we give an
algorithm evaluation access we will also give it sampling access.}

\end{definition}

\begin{definition}[Sampling Access to Labeled Examples]
\label{def:sampling-labeled-examples} Let $G$ be a distribution over a set
$\Q$ of potential examples, and let $f$ be a function whose domain is $\Q$.
When we give an algorithm $\A$ {\em sampling access to labeled examples by
$(G,f)$}, we mean that $\A$ has sampling access to the distribution
$(q,f(q))_{q \sim G}$.
\end{definition}

\begin{definition}[Data Release Algorithm]
\label{def:data-release}
Fix $\U$ to be a data universe, $\Q$ to be a set of query descriptions, $\GQ$
to be a set of distributions on $\Q$, and $P(q,u): \Q \times \U \rightarrow
\zo$ to be a predicate.
A \emph{$(\U,\Q,\GQ,P)$ data release algorithm} $\A$ is a (probabilistic)
algorithm that gets sampling access to a distribution $G \in \GQ$ and takes as
input accuracy parameters $\alpha,\beta,\gamma > 0$, a database size $n$, and
a database $\DB \in \U^n$.
$\A$ outputs a {\em synopsis} $S:\Q \rightarrow [0,1]$.

We say that $\A$ is \emph{$(\alpha,\beta,\gamma)$-accurate} for databases of
size $n$,
if
for every database $\DB \in \U^n$ and query distribution $G \in \GQ$:
\begin{equation}\label{eq:accuracy1}
\Pr_{S \leftarrow \A(n,\DB,\alpha,\beta,\gamma)} \left[ \Pr_{q \sim G} \left[
|S(q)
- f^D(q) | > \alpha \right] > \gamma \right] < \beta
\end{equation}

We also consider data release algorithms that get evaluation access to $G$. In
this case, we say that $\A$ is a {\em data release algorithm using evaluation
access}. The definition is unchanged, except that $\A$ gets this additional
form of access to $G$.
\end{definition}

When $P$ and $\U$ are understood from the context, we sometimes refer to a $(\U,\Q,\GQ,P)$ data
release algorithm as an \emph{algorithm for releasing the class of queries $\Q$ over $\GQ$}.

This work focuses on {\em differentially private} data release algorithms, i.e. data release algorithms which are $\eps$-differentially private as per Definition \ref{def:diffP} (note that such algorithms must be randomized). In such data release algorithms, the probability of any output synopsis $S$ differs by at most an $e^{\eps}$ multiplicative factor between any two adjacent databases.

We note two cases of particular interest.  The first is when $\GQ$ is the set
of {\em all distributions} over $\Q$. In this case, we say that $\A$ is a {\em distribution-free} data release algorithm. For such algorithms it is possible to apply the ``boosting for queries'' results of
\cite{DworkRV10} and obtain a data release algorithm whose synopsis is
(w.h.p.) accurate on {\em all} queries (i.e. with $\gamma=0$). We note that those
boosting results apply only to data release algorithms that access their
distribution by sampling (i.e. they need not hold for data release algorithms
that use evaluation access).

A second case of interest is when $\GQ$ contains only a single distribution, the
uniform distribution over all queries $\Q.$ In this case both sampling and
evaluation access are easy to simulate.

\begin{remark}
\label{remark:DB-size-vs-error}

Throughout this work, we fix the accuracy parameter $\alpha$, and lower-bound the required database size $n$ needed to ensure the (additive) approximation error is at most $\alpha$. An alternative approach taken in some of the differential privacy literature, is fixing the database size $n$ and upper bounding the approximation error $\alpha$ as a function of $n$ (and of the other parameters). Our database size bounds can be converted to error bounds in the natural way.

\end{remark}

\begin{definition}[Learning Thresholds]
\label{def:learning-thresholds}

Let $\Q$ be a set (which we now view as a domain of potential unlabeled examples) and let $\GQ$ be a set of distributions on
$\Q$. Let $\F$ be a set of
predicates on $\Q$, i.e. functions $Q \to \zo.$ Given $t \in [0,1],$ let $\F_{n,t}$ be the set of all threshold
functions of the form $f=\mathbb{I}\left\{\frac1n\sum_{i=1}^nf_i\ge t\right\}$
where $f_i\in\F$ for all $1\le i\le n.$ We refer to functions in $\F_{n,t}$ as \emph{$n$-thresholds over $\F$}.
Let $\Le$ be a (probabilistic) algorithm that gets sampling access to labeled
examples by a distribution $G \in \GQ$ and a target function $f \in \F_{n,t}$. $\Le$
takes as input accuracy parameters $\gamma,\beta > 0$, an integer $n > 0$, and
a threshold $t \in [0,1]$.  $\Le$ outputs a boolean {\em hypothesis} $h: \Q \rightarrow \zo$.

We say that $\Le$ is an \emph{$(\gamma,\beta)$-learning algorithm for thresholds over
$(\Q,\GQ,\F)$} if for every $\gamma,\beta>0$, every $n$, every $t \in [0,1],$ every $f \in \F_{n,t}$ and every $G \in \GQ$,
we have
\begin{equation}\label{eq:accurate2}
\Pr_{h
\leftarrow \Le(n,t,\gamma,\beta)} \left[ \Pr_{q \sim G} \left[ h(q) \neq f(q)
\right] > \gamma \right] < \beta\mper
\end{equation}
The definition is analogous for all other notions of oracle access (see e.g. Definition \ref{def:distribution-oracle-access} below).

%

\end{definition}

\subsection{Statement of the main theorem} \label{sec:mainthm}

In this section we formally state our main theorem, which establishes a
general reduction from private data release to learning certain threshold
functions. The next definition captures a notion of oracle access
for learning algorithms which arises in the reduction. The definition combines
sampling access to labeled examples with a limited kind of evaluation access
to the underlying distribution and black-box oracle access to the target
function~$f.$
\begin{definition}[approximate distribution-restricted oracle access]
\label{def:distribution-oracle-access}

Let $G$ be a distribution over a domain $\Q$, and let $f$ be a function whose
domain is $\Q$. When we say that an algorithm $\A$ has {\em approximate
$G$-restricted evaluation access to $f$},
we mean that
\begin{enumerate}

 \item $\A$ has sampling access to labeled examples by $(G,f)$; and

 \item $\A$ can make oracle queries on any $q \in Q$, which are answered as follows:  there is
 a fixed constant $c \in [1/3,3]$ such that $(i)$ if $G[q]=0$ the answer is $(0,\perp)$; and $(ii)$ if $G[q] > 0$ the answer is a pair $(c \cdot G[q],f(q))$.

\end{enumerate}
\end{definition}
\begin{remark}
We remark that this is the type of of oracle access provided to the learning
algorithm in our reduction. This is different from the oracle access that the
data release algorithm has.
We could extend Definition \ref{def:data-release} to refer to approximate
evaluation access to~$G$;
all our
results on data release using evaluation access would extend to this weaker
access (under appropriate approximation guarantees). For simplicity, we focus
on the case where the \emph{data release algorithm} has perfectly accurate
evaluation access, since this is sufficient throughout for our purpose.
\end{remark}

One might initially hope that privately releasing a class of queries $\cQ$
over some set of distributions~$\GQ$ reduces to learning corresponding
threshold functions over the \emph{same} set of distributions. However, our
reduction will need a learning algorithm that works for a potentially larger
set of distributions $\GQ'\supseteq\GQ.$ (We will see in
Theorem~\ref{thm:privacy-via-learning} that this poses a stronger requirement
on the learning algorithm.) Specifically, $\GQ'$ will be a \emph{smooth
extension} of $\GQ$ as defined next.

\begin{definition}[smooth extensions]
\label{def:smooth-extension}
Given a distribution $G$ over a set $\Q$ and a value $\mu \geq 1$, the \emph{$\mu$-smooth
extension of $G$} is the set of all distributions $G'$ which are such that $G'[q]
\leq \mu \cdot G[q]$ for all $q \in \Q.$  Given a set of distributions $\GQ$ and $\mu \geq 1$, the
\emph{$\mu$-smooth extension of $\GQ$}, denoted $\GQ'$, is defined as the set of all distributions
that are a $\mu$-smooth extension of some $G\in\GQ.$
\end{definition}

With these two definitions at hand, we can state our reduction in its most
general form. We will combine this general reduction with specific learning
results to obtain concrete new data release algorithms in
Sections~\ref{sec:conj} and~\ref{sec:Fourier}.

\begin{theorem}[Main Result:  Private Data Release via Learning Thresholds]
\label{thm:privacy-via-learning}
\label{thm:main}
Let $\U$ be a data universe, $\Q$ a set of query descriptions, $\GQ$ a set of
distributions over $\Q$, and $P\colon\Q \times \U \rightarrow \zo$ a
predicate.

Then, there is an $\eps$-differentially private
$(\alpha,\beta,\gamma)$-accurate data-release algorithm for databases of size $n$ provided that
\begin{itemize}
\item there is an algorithm $\Le$ that ($\gamma$,$\beta$)-learns thresholds
over $(\Q,\GQ',\{p_u\colon u\in\U\})$, running in time
$t(n,\gamma,\beta)$ and using $b(n,\gamma,\beta)$ queries to an approximate
distribution-restricted evaluation oracle for the target $n$-threshold function,
where $\GQ'$ is the $(2/\gamma)$-smooth
extension of~$\GQ$; and
\item we have
\begin{equation}
\label{eq:req}
n\ge
\frac{C\cdot
b(n',\gamma',\beta')\cdot\log\left(\frac{b(n',\gamma',\beta')}{\alpha\gamma\beta}
\right)
\cdot \log(1/\beta')}{\epsilon\alpha^2\gamma}\mcom
\end{equation}
where $n'=\Theta(\log|\cQ|/\alpha^2),$
$\beta'=\Theta(\beta\alpha),$
$\gamma'=\Theta(\gamma\alpha)$ and $C>0$ is a sufficiently large constant.
\end{itemize}
The running time of the data release algorithm is $\poly(t(n',\gamma',\beta'),n,1/\alpha,\log(1/\beta),1/\gamma)$.
\end{theorem}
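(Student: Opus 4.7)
The reduction rests on the observation that for any database $D$ of size $n$ and any threshold $t \in [0,1]$, the function $f^D_t(q) = \mathbb{I}\{f^D(q) \ge t\}$ is by definition an $n$-threshold over the predicate class $\F = \{p_u : u \in \U\}$. My plan is to approximate $f^D$ by learning $f^D_{t_j}$ at each of $O(1/\alpha)$ thresholds $t_j$ spaced by $\alpha/2$, then output the synopsis $S(q) = \max\{t_j : h_j(q)=1\}$, where $h_j$ is the hypothesis returned by $\Le$ on target $f^D_{t_j}$. If every $h_j$ agreed pointwise with $f^D_{t_j}$, this would recover $f^D(q)$ to additive error $\alpha/2$, so the task reduces to privately simulating $\Le$'s oracle for $f^D_{t_j}$ and accounting for the errors this simulation introduces.

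To invoke $\Le$, I need to simulate its approximate $G$-restricted evaluation oracle. The $G[q]$ component is free because the data release algorithm has evaluation access to $G$. The label $f^D_{t_j}(q)$ must be extracted privately from $D$. Since $f^D(q) = \tfrac{1}{n}\sum_u p_u(q)$ has sensitivity $1/n$, I would answer each oracle query by adding Laplace noise $\Lap(\sigma)$ to $f^D(q)$, using $\Theta(\log(1/\beta'))$ independent repetitions with a majority vote of the thresholded answers to drive the per-query failure probability below $\beta'$, then composing privacy across all $O(1/\alpha) \cdot b(n',\gamma',\beta')$ queries issued across the $O(1/\alpha)$ runs of $\Le$. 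The noise scale $\sigma$ is chosen so that each boosted answer is correct whenever $|f^D(q) - t_j| > \alpha/4$; the lower bound~\eqref{eq:req} on $n$ is what keeps $\sigma$ small enough, under advanced composition, to achieve this while preserving $\eps$-differential privacy.

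For accuracy, the subtle point is that $\Le$ is called with learning parameter $n'$, not $n$, so the target it believes it is learning must itself be an $n'$-threshold. I would address this via a virtual subsample argument: there exists a $D' \subseteq D$ of size $n' = \Theta(\log|\Q|/\alpha^2)$ such that by Chernoff plus a union bound over $\Q$, $|f^{D'}(q) - f^D(q)| \le \alpha/8$ holds simultaneously for all $q \in \Q$ with high probability. Then $f^{D'}_{t_j}$ is an honest $n'$-threshold agreeing with $f^D_{t_j}$ outside a band of width $\alpha/4$ around $t_j$, and the Laplace-based oracle faithfully simulates queries to it up to this band. Combining the learner's guarantee for each $t_j$ with a union bound over the $O(1/\alpha)$ thresholds motivates the parameter scaling $\gamma' = \Theta(\gamma\alpha)$ and $\beta' = \Theta(\beta\alpha)$.

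The main obstacle is explaining why $\Le$ must succeed on the $(2/\gamma)$-smooth extension $\GQ'$ rather than on $\GQ$ itself. The plausible mechanism is that the queries $\Le$ cares about at threshold $t_j$ form a sub-distribution of $G$: conditioning on \emph{this bucket determines the synopsis value for $q$} reweights $G$ by $1/\Pr_{q \sim G}[\mathrm{bucket}(q)=j]$, and only buckets with $G$-mass at least $\gamma/2$ contribute to the $\gamma$-accuracy guarantee, so the reweighting factor is bounded by $2/\gamma$---exactly the smoothness in Definition~\ref{def:smooth-extension}. Verifying that the simulated oracle honestly realizes such a reweighted distribution, and that the final synopsis meets~\eqref{eq:accuracy1} under the combined accuracy, privacy, and confidence budgets, is the main technical work; it is the step where~\eqref{eq:req} drops out of simultaneously balancing Laplace noise magnitude, advanced-composition cost, and uniform convergence over $\Q$.
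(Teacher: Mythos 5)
Your high-level architecture (Laplace-noised access to $f^D$, a grid of $O(1/\alpha)$ thresholds, a virtual subsample $D'$ of size $n'$ so that the learner's target is a genuine $n'$-threshold, and the rescaling $\gamma'=\Theta(\gamma\alpha)$, $\beta'=\Theta(\beta\alpha)$) matches the paper's. But there is a genuine gap at the heart of the reduction: you never say what the learner does with the low-margin queries. Your oracle answers every query $(q,t_j)$ with a majority-voted thresholded noisy value, which is only guaranteed correct when $|f^D(q)-t_j|>\alpha/4$; inside the band the label handed to $\Le$ can simply be wrong, and $\Le$'s guarantee in Definition~\ref{def:learning-thresholds} assumes examples correctly labeled by an actual function in $\F_{n',t}$ --- it gives you nothing in the presence of adversarially mislabeled examples. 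The paper's resolution is that the oracle returns $\bot$ on the band and those examples are \emph{discarded}, so the learner sees only correctly labeled examples, but over the \emph{conditional} distribution of $G$ restricted to the non-$\bot$ set $Q_i$ (Lemma~\ref{lem:agree} and the rejection step in Figure~\ref{fig:reduction}); the accuracy analysis then exploits that each $q$ lies in the band of at most one threshold, so at most one $h_i$ errs on $q$ and the aggregated hypothesis is still within $2\alpha/3$. You need some version of this filtering (or an agnostic, noise-tolerant learner, which the theorem does not assume) for the argument to close.

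Relatedly, your explanation of why $\Le$ must work over the $(2/\gamma)$-smooth extension is the wrong mechanism. You condition on the bucket of queries whose synopsis value is determined at $t_j$; but the learner at $t_j$ must be accurate on essentially \emph{all} of $G$ outside the band at $t_j$ --- a query with $f^D(q)\approx t_5$ needs $h_1,\dots,h_4$ and $h_6,\dots,h_k$ to all be correct on it for any aggregation rule, including your max rule, to output a value near $t_5$ --- not merely on the queries ``belonging to'' bucket $j$. The $2/\gamma$ smoothness actually arises from conditioning on the complement of the band, i.e., the distribution $G_i[q]=G[q]/Z_i$ with $Z_i=\Pr_{q\sim G}\{q\in Q_i\}$, together with a case split: if $Z_i<\gamma/2$ the algorithm terminates early and outputs the constant-$t_i$ synopsis, which is already $(\alpha,\gamma)$-accurate because most of the mass of $G$ is then $\alpha$-close to $t_i$; otherwise $1/Z_i\le 2/\gamma$. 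Finally, a smaller point: you invoke advanced composition to preserve $\eps$-differential privacy, but advanced composition only yields $(\eps,\delta)$-privacy; for the pure-$\eps$ statement and the bound~(\ref{eq:req}) you must use basic composition, scaling the noise as $\Lap(b_\total/\eps n)$ over the total number of distinct oracle queries (the advanced-composition variant appears only in the remark following the theorem, for $(\eps,\delta)$-privacy).
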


The next remark points out two simple modifications of this theorem.
\begin{remark}
\begin{enumerate}
\item We can improve the dependence on~$n$ in~(\ref{eq:req}) by a
factor of~$\Theta(\nfrac1\alpha)$
in the case where the learning algorithm~$\Le$ only uses
sampling access to labeled examples. In this case the data release algorithm also uses only sampling access to the query distribution $G$.
The precise statement is given in
Theorem~\ref{thm:main-random} which we present after the proof of
Theorem~\ref{thm:main}.
\item A similar theorem holds for $(\epsilon,\delta)$-differential
privacy, where the requirement on $n$ in~(\ref{eq:req}) is improved to a
requirement on $\sqrt{n}$ up to a $\log({1}/{\delta})$ factor. The
proof is the same, except for a different (but standard) privacy argument,
e.g., using the Composition Theorem in \cite{DworkRV10}.
\end{enumerate}
\end{remark}

\subsection{Informal proof overview} \label{sec:proofoverview}

Our goal in the data release setting is approximating the query answers $\{
f^D(q) \}_{q \in Q}$. This is exactly the task of approximating or {\em
learning} a sum of $n$ predicates from the set $\F = \{p_u \colon u\in\U\}.$
Indeed, each item $u$ in the database specifies a predicate $p_u$,
and for a fixed query $q \in Q$ we are trying to approximate the sum of the predicates $f^D(q) =
\frac1{|D|}\cdot \sum_{u\in D}p_{u}(q)$. We want to approximate
such a sum in a privacy-preserving manner, and so we will only permit limited
access to the function $f^D$ that we try to approximate. In
particular, we will only allow a bounded number of noisy oracle queries to
this function. Using standard techniques (i.e. adding appropriately scaled
Laplace noise~\cite{DworkMNS06}), an approximation obtained from a bounded number of
noisy oracle queries will be differentially private.
It remains, then, to tackle the task of (i) learning a sum of
$n$ predicates from $\F$ using an oracle to the sum, and (ii) doing so using
only a {\em bounded (smaller than $n$) number} of oracle queries when we are provided {\em
noisy answers}.

\myparagraph{From Sums to Thresholds.}
Ignoring privacy concerns, it is straightforward to reduce the task of learning
a sum $f^D$ of predicates (given an oracle for $f^D$) to the task
of learning thresholded sums of predicates (again given an oracle for $f^D$).
Indeed, set $k=\lceil 3/\alpha\rceil$ and consider the thresholds
$t_1,\dots,t_k$ given by $t_i=i/(k+1).$ Now, given an oracle for $f^D$, it is easy
to simulate an oracle for $f^D_{t_i}$ for any $t_i$. Thus, we can
learn each of the threshold functions $f^D_{t_i}$ to accuracy $1-\gamma/k$ with respect to $G$.
Call the resulting
hypotheses $h_1,\dots,h_k$. Each $h_i$ labels a $(1-\gamma/k)$-fraction of the queries/examples in $Q$ correctly w.r.t the threshold function $f^D_{t_i}$.
We can produce an aggregated hypothesis $h$ for approximating $f^D$ as
follows: given a query/example $q,$ let $h(q)$ equal $t_i$ where $t_i$ is the smallest $i$
such that $h_{i}(q)=0$ and $h_{i+1}(q)=1.$ For random $q~\sim G,$ we will
then have $|h(q)-f^D(q)|\le\alpha/3$ with probability $1-\gamma$ (over the choice of $q$).
%
%

Thus, we have reduced learning a sum to learning thresholded sums (where in both
cases the learning is done with an oracle for the sum).  But because of privacy considerations, we must
address the challenges mentioned above: $(i)$ learning a {\em thresholded}
sum of $n$ predicates using few (less than $n$) oracle queries to the sum, and $(ii)$
learning when the oracle for the sum can return noisy answers. In particular, the noisy sum answers can induce errors on threshold oracle queries (when the sum is close to the threshold).

\myparagraph{Restricting to Large Margins.}
Let us say that a query/example $q \in \Q$ has \emph{low margin with respect to $f^D$ and $t_i$}
if $|f^D(q)-t_i|\le\alpha/7.$
A useful observation is that in the argument sketched
above, we do \emph{not} need to approximate each threshold function
$f_{t_i}^D$ well on low margin elements~$q.$  Indeed, suppose that each
hypothesis $h_{i}$ errs arbitrarily on a set $E_i \subseteq \Q$ that contains
only inputs that have low margin
w.r.t. $f^D$ and $t_i$, but achieves high accuracy $1-\gamma/k$ with respect
to~$G$ conditioned on the event~$\Q\setminus E_i.$
Then the above aggregated hypothesis $h$ would still have high accuracy with high probability over
$q\sim G$; more precisely, $h$ would satisfy $|h(q)-f^D(q)| \leq 2 \alpha/3$
 with probability $1-\gamma$ for $q \sim G.$

The reason is that for every $q\in Q,$ there can only be one
threshold ${i^*}\in\{1,\dots,k\}$ such that $|f^D(q)-t_{i^*}|\le\alpha/7$ (since
any two thresholds are $\alpha/3$- apart from each other). While the threshold hypothesis $h_{i^*}$ might err on $q$ (because $q$ has low margin w.r.t. $t_{i^*}$), the hypotheses $h_{i^*-1}$ and $h_{i^*+1}$ should still be accurate (w.h.p. over $q \sim G$), and thus the aggregated hypothesis $h$ will still output a value between $t_{i^*-1}$ and $t_{i^* +1}$.

\myparagraph{Threshold Access to The Data Set.}
We will use the above observation to our advantage. Specifically, we restrict
all access to the function $f^D$ to what we call a \emph{threshold oracle}.
Roughly speaking, the threshold oracle (which we denote~$\TO$ and define
formally in Section~\ref{sec:oracle}) works as follows:  when given a query
$q$ and a threshold $t$, it draws a suitably scaled Laplacian variable $N$
(used to ensure differential privacy) and returns $1$ if $f^{D}(q)+N\ge
t+\alpha/20$; returns 0 if $f^{D}(q)+N\le t-\alpha/20$; and returns ``$\bot$''
if $t - \alpha/20 < f^D(q)+N < t + \alpha/20.$ If $D$ is large enough then we
can ensure that $|N|\le\alpha/40$ with high probability, and thus whenever the
oracle outputs $\bot$ on a query $q$ we know that $q$ has low margin with
respect to $f^D$ and $t$ (since $\alpha/20 + |N| < \alpha/7$).

We will run the learning algorithm $\Le$ on examples generated using the
oracle $\TO$ after removing all examples for which the oracle
returned $\bot.$ Since we are conditioning on the~$\TO$ oracle not returning $\bot$,
this transforms the distribution $G$ into a conditional distribution
which we denote $G'.$
Since we have only conditioned on removing
low-margin $q$'s, the argument sketched above applies. That is, the
hypothesis that has high accuracy with respect to this
conditional distribution $G'$ is still useful for us.

So the threshold oracle lets us use noisy sum answers (allowing the addition of noise and differential privacy), but in fact
it also addresses the second challenge of reducing the query complexity of
the learning algorithm. This is described next.

\myparagraph{Savings in Query Complexity via Subsampling.}
The remaining challenge is that the threshold oracle can be invoked only (at most)
$n$ times before we exceed our ``privacy budget''. This is problematic, because the
query complexity of the underlying learning algorithm may well depend on~$n$, since $f^D$ is
a sum of $n$ predicates.
To reduce the number of oracle queries that need to be made, we observe that
the sum of $n$ predicates that we are trying to learn can actually be
approximated by a sum of fewer predicates. In fact, there exists a sum
$f^{D'}$ of $n' = O(\log |\Q|/\alpha^2)$ predicates from $\F$ that is
$\alpha/100$-close to $f^D$ on all inputs in $\cQ,$ i.e.
$|f^D(q)-f^{D'}(q)|\le\alpha/100$ for all $q\in\Q.$
(The proof is by a subsampling argument, as
in \cite{BlumLR08}; see Section~\ref{sec:oracle}.)
We will aim to learn this ``smaller'' sum. The hope is that the query
complexity for learning $f^{D'}$ may be considerably smaller, namely scaling
with $n'$ rather than $n$. Notice, however, that learning a threshold of
$f^{D'}$ requires a threshold oracle to $f^{D'}$, rather than the threshold
oracle we have, which is to $f^D$. Our goal, then, is to use the threshold
oracle to $f^D$ to simulate a threshold oracle to $f^{D'}$. This will give us
``the best of both worlds'': we can make (roughly) $O(n)$ oracle queries thus
preserving differential privacy, while using a learning algorithm that
is allowed to have query complexity superlinear in~$n'.$

The key observation showing that this is indeed possible is that 
the threshold oracle $\TO$ already ``avoids'' low-margin
queries where $f^D_t$ and $f^{D'}_t$ might disagree!  Whenever the
threshold oracle $\TO$ (w.r.t.~$D$) answers $l\ne\bot$ on a query $q,$, we
must have $|f^D(q) - t| \geq \alpha/20 - N > \alpha/100$, and thus
$f^D_t(q)=f^{D'}_t(q).$ Moreover, it is still the case that $\TO$ only
answers $\perp$ on queries $q$ that have low margins w.r.t $f^{D'}_t$.
This means that, as above, we can run $\Le$ using $\TO$ (w.r.t. $D$)
in order to learn $f^{D'}$. The query complexity depends on $n'$ and is
therefore independent of $n.$
At the same time, we continue to answer all
queries using the threshold oracle with respect to $f^D$ so that our privacy
budget remains on the order~$|D|=n$.
Denoting the query complexity of the learning algorithm by
$b(n')$ we only need that $n\gg b(n').$ This allows us to use learning
algorithms that have $b(n')\gg n'$ as is usually the case.

\myparagraph{Sampling from the conditional distribution.}
In the exposition above we glossed over one technical detail, which is that
the learning algorithm requires sampling (or distribution restricted) access
to the distribution $G'$ over queries $q$ on which  $\TO$ does not return
$\bot$, whereas the data release algorithm we are trying to build only has
access to the original distribution $G$. We reconcile this disparity as
follows.

For a threshold $t$, let $\zeta_t$ denote the probability that the oracle~$\TO$
does not return~$\bot$ when given a random $q \sim G$ and the threshold~$t.$
There are two cases depending on $\zeta_t$:

\begin{description}

\item[$\zeta_t<\gamma$:] This means that the threshold $t$ is such that
with probability $1-\gamma$ a random sample $q\sim G$ has low margin with respect to $f^D$
and $t$.  In this
case, by simply outputting the constant-$t$ function as our approximation for
$f^D$, we get a hypothesis that has accuracy $\alpha/3$ with probability
$1-\gamma$ over random $q\sim G.$

\item[$\zeta_t\ge\gamma$:] In this case, the conditional distribution $G'$ induced by the threshold oracle is $1/\gamma$-smooth w.r.t. $G$. In particular, $G'$ is contained in the smooth extension
$\GQ'$ for which the learning algorithm is guaranteed to work (by the conditions of Theorem~\ref{thm:privacy-via-learning}).  This means that it we can sample from $G'$ using rejection sampling to $G$. It suffices to oversample by a factor of $O(1/\gamma)$ to make sure that we
receive enough examples that are not rejected by the threshold oracle.
\end{description}
Finally using a reasonably accurate estimate of $\zeta,$ we can also implement
the distribution restricted approximate oracle access that may be required by the learning
algorithm. We omit the details from this informal overview.

\section{Proof of Theorem~\ref{thm:main}}
\label{sec:mainproof}

In this section, we give a formal proof of Theorem~\ref{thm:main}. We
formalize and analyze the threshold oracle first. Then we proceed to our main
reduction.

\subsection{Threshold access and subsampling}
\label{sec:oracle}

We begin by describing the threshold oracle that we use to access the function
$f^D$ throughout our reduction; it is presented in Figure~\ref{fig:oracle}.  The oracle has two purposes.
One is to ensure differential privacy by adding noise every time we
access $f^D.$ The other purpose is to ``filter out'' queries that are too
close to the given threshold. This will enable us to argue that the threshold
oracle for $f_t^D$ agrees with the function $f_t^{D'}$ where $D'$ is a small
subsample of~$D$.

\begin{figure}[h]
\begin{boxedminipage}{0.99\textwidth}
\noindent {\bf Input:}
data set $D$ of size $n,$
tolerance $\alpha>0,$ query bound
$b\in\mathbb{N}.$

{\bf Threshold Oracle} $\mathcal{TO}(D,\alpha,b)$:
\begin{itm}
\item When invoked on the $j$-th query $(q,t) \in \Q\times[0,1),$
do the following:
\begin{itm}
\item If $j>b,$ output $\bot$ and terminate.
\item If $(q,t')$ has not been asked before for any threshold $t',$
sample a fresh Laplacian variable $N_q\sim\Lap(b/\eps n)$ and put
$A_q=f^D(q)+N_q.$ Otherwise reuse the previously created value $A_q.$
\item
\[
\text{Output~}
\begin{cases}
0 & \text{if $A_q \leq t-{2\alpha}/3$,}\\
1 & \text{if $A_q \geq t+{2\alpha}/3$,}\\
\bot & \text{otherwise.}
\end{cases}
\]
\end{itm}
\end{itm}
\end{boxedminipage}
\caption{Threshold oracle for $f^D.$ This threshold oracle is the only way in
which the data release algorithm ever interacts with the data set $D.$
Its purpose is to ensure privacy
and to reject queries that are too close to a given threshold.}
\label{fig:oracle}
\end{figure}

Throughout the remainder of this section we fix all input parameters to our
oracle, i.e. the data set $D$ and the values $b,\alpha>0.$ We let $\beta>0$
denote the desired error probability of our algorithm.

\begin{lemma}
\label{lem:oracle-privacy}
Call two queries $(q,t),(q',t')$ \emph{distinct} if $q\ne q'.$ Then,
the threshold oracle $\TO(D,\alpha,b)$
answers any sequence of $b$ distinct adaptive queries to $f^D$ with
$\epsilon$-differential privacy.
\end{lemma}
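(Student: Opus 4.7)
The plan is to reduce the claim to the standard Laplace mechanism together with basic (adaptive) composition of differentially private mechanisms. The key observation is that, because the $b$ queries $(q_1,t_1),\dots,(q_b,t_b)$ are pairwise distinct in their first coordinate, the oracle never re-uses a cached noise sample: every invocation draws its own fresh $N_{q_j}\sim\Lap(b/(\eps n))$ and returns a post-processing of $A_{q_j}=f^D(q_j)+N_{q_j}$.

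First I would verify the per-query privacy guarantee. For any fixed $q\in\Q$, the function $D\mapsto f^D(q)=\frac1n\sum_{u\in D}P(q,u)$ has global sensitivity at most $1/n$ with respect to the adjacency relation of Section~\ref{sec:prelim}, since replacing a single item in $D$ changes at most one summand (each lying in $\{0,1\}$). Hence, by the standard Laplace mechanism, releasing $A_{q_j}=f^D(q_j)+N_{q_j}$ with $N_{q_j}\sim\Lap(b/(\eps n))$ is $(\eps/b)$-differentially private. The oracle's actual output on the $j$-th query is a deterministic function of $A_{q_j}$ and the input threshold $t_j$ (namely $0$, $1$, or $\bot$), and differential privacy is preserved under post-processing, so each invocation is $(\eps/b)$-differentially private.

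Next I would handle adaptivity and composition. The output on the $j$-th query depends only on $A_{q_j}$ and on $(q_j,t_j)$, and $(q_j,t_j)$ is determined by the previous oracle outputs together with the external (adversarial) choices; crucially, the noise sample $N_{q_j}$ is drawn freshly and independently of all prior $N_{q_{j'}}$ because the $q_j$'s are distinct and we never invoke the ``otherwise reuse'' branch. Therefore the sequence of oracle answers is an adaptive composition of $b$ mechanisms, each $(\eps/b)$-differentially private. By the basic composition theorem for differential privacy, the overall mechanism that produces the whole transcript is $\eps$-differentially private. The ``$j>b$'' branch emits $\bot$ deterministically and thus contributes nothing to the privacy loss.

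I do not expect a real obstacle here; the lemma is essentially a bookkeeping statement matching the noise scale $b/(\eps n)$ to the sensitivity $1/n$ and the number of allowed queries $b$. The only point that requires care is making explicit that the distinctness hypothesis is what lets us treat each query as an independent Laplace release; once that is observed, sensitivity, post-processing, and composition finish the proof.
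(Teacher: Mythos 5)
Your proposal is correct and matches the paper's approach: the paper's proof is a one-line appeal to the Laplace mechanism guarantees of Dwork et al., and your argument (sensitivity $1/n$, noise scale $b/(\eps n)$ giving $\eps/b$ per query, post-processing, and basic adaptive composition over the $b$ fresh, distinct queries) is exactly the standard argument being invoked there.
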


\begin{proof}
This follows directly from the guarantees of the Laplacian mechanism as shown
in~\cite{DworkMNS06}.
\end{proof}

Our goal is to use the threshold oracle for $f_t^D$ to correctly answer queries
to the function $f_t^{D'}$ where $D'$ is a smaller (sub-sampled) database that gives ``close'' answers to $D$ on all queries $q \in \Q$.
The next lemma shows that there always exists such a smaller database.
\begin{lemma}\label{lem:subsample}
For any $\alpha \geq 0$, there is a database $D'$ of size
\begin{equation}\label{eq:subsample-size}
|D'|\le \frac{10\log|\cQ|}{\alpha^2}
\end{equation}
such that
\[
\max_{q\in \cQ} \left|f^D(q)-f^{D'}(q)\right|<\alpha\mper
\]
\end{lemma}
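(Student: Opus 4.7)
The plan is to prove existence of $D'$ by the probabilistic method, sampling $D'$ uniformly at random with replacement from $D$. Specifically, set $m = \lceil 10\log|\cQ|/\alpha^2\rceil$ and let $D' = (u_1,\dots,u_m)$ where each $u_j$ is drawn independently and uniformly from the multiset $D$. Clearly $|D'| = m$ meets the required size bound~(\ref{eq:subsample-size}).

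For any fixed query $q\in\cQ$, the random variables $P(q,u_1),\dots,P(q,u_m)$ are i.i.d.\ $\{0,1\}$-valued with common expectation $\E[P(q,u_j)] = f^D(q)$, so $f^{D'}(q) = \frac{1}{m}\sum_{j=1}^m P(q,u_j)$ is their empirical mean. The first key step is to apply Hoeffding's inequality to obtain a concentration bound for this mean: $\Pr\bigl[|f^{D'}(q) - f^D(q)| \ge \alpha\bigr] \le 2\exp(-2m\alpha^2)$.

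The second step is a union bound over all $q\in\cQ$: the probability that there exists some $q$ with $|f^{D'}(q) - f^D(q)| \ge \alpha$ is at most $2|\cQ|\exp(-2m\alpha^2)$. Plugging in $m \ge 10\log|\cQ|/\alpha^2$, this is at most $2|\cQ|\cdot|\cQ|^{-20} = 2|\cQ|^{-19}$, which is strictly less than $1$ (the statement is trivial if $|\cQ|\le 1$). Hence with positive probability the sampled $D'$ satisfies $\max_{q\in\cQ}|f^D(q) - f^{D'}(q)| < \alpha$ simultaneously for every query, so such a $D'$ exists.

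There is no real obstacle here; the argument is the standard VC/Chernoff-style subsampling argument (as cited in the paper in connection with \cite{BlumLR08}), and the only thing to verify is that the stated constant $10$ in the size bound leaves enough slack to beat the union bound factor $|\cQ|$. The constant is comfortably large, so the calculation goes through directly.
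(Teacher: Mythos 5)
Your proof is correct and is exactly the standard subsampling argument that the paper invokes by citing \cite{BlumLR08} without writing out the details: sample with replacement, apply Hoeffding to each query, and union bound over $\cQ$. The constant $10$ indeed leaves ample slack (and the only cosmetic nitpick is that taking a ceiling on $m$ could technically exceed the stated size bound, which is easily absorbed by that slack).
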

\begin{proof}
The existence of~$D'$ follows from a subsampling argument as shown
in~\cite{BlumLR08}.
\end{proof}
The next lemma states the two main properties of the threshold oracle
that we need. To state them more succinctly, let us denote by
\[
Q(t,\alpha)=\{q\in \Q\colon |f^D(q)-t|\ge\alpha\}
\]
the set of elements in $\Q$ that are $\alpha$-far from the threshold~$t.$

\begin{lemma}[Agreement]\label{lem:agree}
Suppose $D$ satisfies
\begin{equation}
\label{eq:agree-db}
|D|\ge\frac{30b\cdot\log(\nfrac b\beta)}{\epsilon\alpha},
\end{equation}
Then, there is a data set $D'$ of size $|D'|\le90\cdot\alpha^{-2}\log|\cQ|$ and an
event~$\Gamma$ (only depending on the choice of the Laplacian variables)
such that $\Gamma$ has probability $1-\beta$ and
if $\Gamma$ occurs, then $\TO(D,\alpha,b)$ has the following guarantee:
whenever $\TO(D,\alpha,b)$ outputs $l$ on one of the queries $(q,t)$ in the
sequence, then
\begin{enumerate}
\item\label{prop:agree-1} if $l\ne\bot$ then
$l=f_t^{D'}(q)=f_t^D(q)\mcom$ and
\item\label{prop:agree-2} if $l=\bot$ then $q\not\in Q(t,\alpha)\mper$
\end{enumerate}
\end{lemma}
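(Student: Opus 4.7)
My plan is to choose $D'$ via the subsampling lemma, define $\Gamma$ as a high-probability event on the Laplace noise drawn by $\TO$, and then verify both properties by a short case analysis on the oracle's output under $\Gamma.$

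Concretely, I would first apply Lemma~\ref{lem:subsample} with accuracy parameter $\alpha/3$ to produce a database $D'$ with $|D'| \le 10(\alpha/3)^{-2}\log|\cQ| = 90\alpha^{-2}\log|\cQ|$ and
\[
|f^D(q) - f^{D'}(q)| \;<\; \alpha/3 \qquad \text{for every } q \in \cQ.
\]
Next, I would take $\Gamma$ to be the event that every Laplacian noise value $N_q$ generated by $\TO$ during its at most $b$ queries satisfies $|N_q| \le \alpha/6.$ Since $\TO$ draws at most $b$ independent variables $N_q \sim \Lap(b/(\epsilon n))$ (one per distinct query point), the standard tail bound $\Pr[|\Lap(\lambda)| > s] = e^{-s/\lambda}$ together with the hypothesized lower bound~\eqref{eq:agree-db} on $n$ gives $\Pr[|N_q| > \alpha/6] \le \beta/b$ for each draw; a union bound then yields $\Pr[\Gamma] \ge 1 - \beta.$

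Conditional on $\Gamma,$ I would verify properties~\ref{prop:agree-1} and~\ref{prop:agree-2} by case analysis on the oracle's output $l$ for any query $(q,t).$ If $l = 0,$ then $A_q \le t - 2\alpha/3,$ so $f^D(q) \le A_q + |N_q| \le t - \alpha/2 < t,$ giving $f^D_t(q) = 0,$ and moreover $f^{D'}(q) \le f^D(q) + \alpha/3 \le t - \alpha/6 < t,$ so $f^{D'}_t(q) = 0 = l$ as well. The case $l = 1$ is symmetric with the inequalities reversed. Finally, if $l = \bot,$ then $|A_q - t| < 2\alpha/3,$ so $|f^D(q) - t| \le |A_q - t| + |N_q| < 2\alpha/3 + \alpha/6 < \alpha,$ i.e., $q \notin Q(t,\alpha).$

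The only substantive choice in this proof is the apportionment of the $2\alpha/3$ ``dead zone'' built into the oracle's decision rule between noise tolerance (budget $\alpha/6$) and subsample error (budget $\alpha/3$); each must leave strict slack so that $f^D_t(q)$ and $f^{D'}_t(q)$ both agree with $l$ whenever $l \ne \bot.$ The constant $30$ in~\eqref{eq:agree-db} is calibrated precisely so that the Laplace tail calculation above achieves confidence $1 - \beta$ after the union bound over the $b$ draws at scale $b/(\epsilon n).$ Beyond this bookkeeping there is no real obstacle; everything else is a routine case analysis.
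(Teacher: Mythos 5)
Your proposal is correct and follows essentially the same route as the paper: subsample via Lemma~\ref{lem:subsample} at accuracy $\alpha/3$, define $\Gamma$ as a uniform bound on all Laplacian draws (you use $|N_q|\le\alpha/6$ where the paper uses $|N_q|<\alpha/3$; both fit comfortably within the constant $30$ in~(\ref{eq:agree-db})), and finish with the same case analysis on the oracle's output. No gaps.
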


\begin{proof}
Let $D'$ be the data set given by Lemma~\ref{lem:subsample} with its
``$\alpha$'' value set to $\alpha/3$ so that
\[
\left|f^D(q)-f^{D'}(q)\right|<\alpha/3
\]
for every input $q\in\Q.$

The event $\Gamma$ is defined as the event that
every Laplacian variable~$N_q$ sampled
by the oracle has magnitude $|N_q|<\nfrac\alpha3.$
%
Under the given assumption on~$|D|$ in~\ref{eq:agree-db}
and using basic tail bounds for the Laplacian distribution,
this happens with probability $1-\beta.$

Assuming $\Gamma$ occurs,
the following two statements hold:
\begin{enumerate}
\item Whenever the oracle outputs $l\ne\bot$ on a query $(q,t),$
then we must have
either $f^D(q)+N_q-t\ge2\alpha/3$
(and thus both $f^D(q)> t+\nfrac\alpha3$ and
$f^{D'}(q)> t$)
or else $f^D(q)+N_q-t \leq -2\alpha/3$
(and thus both $f^D(q)< t-\nfrac\alpha3$ and
$f^{D'}(q)< t$). This proves the first claim of the lemma.
\item Whenever $q\in Q(t,\alpha),$ then $|f^D(q)+N_q-t|\ge
2\alpha/3$, and therefore the oracle does not output $\bot.$ This
proves the second claim of the lemma.
\end{enumerate} \end{proof}

\subsection{Privacy-preserving reduction}
\label{sec:reduction}

In this section we describe how to convert a non-private learning algorithm
for threshold functions of the form $f_t^D$
to a privacy-preserving learning algorithm for functions of the form $f^D.$
The reduction is presented in Figure~\ref{fig:reduction}. We call the
algorithm \PrivLearn.

\begin{figure}[h]
\begin{boxedminipage}{0.99\textwidth}
\noindent{\bf Input:}
Distribution $G\in\GQ,$ data set $D$ of size $n,$ accuracy parameters
$\alpha,\beta,\gamma>0;$
learning algorithm $\Le$ for thresholds over $(\Q,\GQ,\cF)$ as in
Theorem~\ref{thm:main} requiring $b(n',\gamma',\beta')$ labeled examples
and approximate restricted evaluation access to the target function.

\noindent{\bf Parameters:} See~(\ref{eq:params}) and~(\ref{eq:params2}).

\noindent{\bf Algorithm \PrivLearn for privately learning $f^D$:}
\begin{enumerate}
\item Let $\TO$ denote an instantiation of $\TO(D,\nfrac\alpha7,b_\total).$
\item Sample $b_\iter$ points $\{q_j\}_{1\le j\le b_\iter}$ from~$G.$
\item\label{step:loop} For each iteration $i\in\{1,\dots,k\}:$
\begin{enum}
\item
Let  $t_i=\nfrac i{k+1}.$
\item
For each $q_j,$ $j\in[b_\iter]$ send the query $(q_j,t_i)$ to $\TO$ and
let~$l_j$ denote the answer. Let $B_i=\{j\colon l_j\ne\bot\}.$
\item
If $\frac{|B_i|}{b_\iter}< \frac\gamma2,$ output the constant
$t_i$ function as hypothesis~$h$ and terminate the algorithm.
\item
Run the learning algorithm $\Le(n',t_i,\gamma',\beta')$ on the labeled examples
$\{(q_j,l_j)\}_{j\in B_i},$ answering evaluation queries from $\Le$ as follows:
\begin{itm}
\item
Given a query $q$ posed by~$\Le,$ let $l$ be the answer of $\TO$ on $(q,t_i).$
\item If $l=\bot,$
then output $(0,\bot).$ Otherwise, output
$(G[q]\cdot \frac{b_\iter}{|B_i|},l)\mper$
\end{itm}
\item
Let $h_i$ denote the resulting hypothesis.
\end{enum}
\item Having obtained hypotheses $h_1,\dots,h_k,$ the final
hypothesis $h$ is defined as follows:  $h(q)$ equals the smallest $i\in[k]$
such that $h_i(q)=1$ and $h_{i-1}(q)=0$ (we take $h_{0}(q)=0$
and $h_{k+1}(q)=1$).
\end{enumerate}
\end{boxedminipage}
\caption{Reduction from private data release to learning thresholds
(non-privately).}
\label{fig:reduction}
\end{figure}

\myparagraph{Setting of parameters.}
In the description of \PrivLearn we use the following setting of parameters:
\begin{equation}\label{eq:params}
n'=\frac{4410\cdot\log|\Q|}{\alpha^2}
\qquad k = \left\lceil \frac3\alpha \right\rceil
\qquad\gamma'=\frac\gamma k
\qquad\beta'=\frac\beta {6k}
\end{equation}
\begin{equation}\label{eq:params2}
b_\base = b(n',\gamma',\beta')
\qquad b_\iter = \frac{100 b_\base\cdot \log(\nfrac 1{\beta'})}\gamma
\qquad b_\total = 2k\cdot b_\iter
\end{equation}

\myparagraph{Analysis of the reduction.}
Throughout the analysis of the algorithm we keep all input parameters fixed so
as to satisfy the assumptions of Theorem~\ref{thm:main}.
Specifically we will need
\begin{equation}\label{eq:dbsize}
|D|\ge
\frac{210\cdot b_\total\cdot\log(10 b_\total/\beta)}{\epsilon\alpha}\mper
\end{equation}
We have made no attempt to optimize various constants throughout.
\begin{lemma}[Privacy]
\label{lem:privacy}
Algorithm \PrivLearn satisfies $\epsilon$-differential privacy.
\end{lemma}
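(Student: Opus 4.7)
The plan is to prove the lemma by showing that \PrivLearn accesses the database $D$ exclusively through the threshold oracle $\TO=\TO(D,\nfrac\alpha7,b_\total)$, bounding the total number of queries sent to $\TO$ by $b_\total$, and then invoking Lemma~\ref{lem:oracle-privacy} together with the post-processing property of differential privacy.

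First I would check that every $D$-dependent step of \PrivLearn is implemented via a query to $\TO$. The sampling of $\{q_j\}$ from $G$ in step~2 is independent of $D$. Inside step~\ref{step:loop}, the labels $l_j$ are obtained by the queries $(q_j,t_i)$ to $\TO$; the test $|B_i|/b_\iter<\nfrac\gamma2$ and the early-termination hypothesis $t_i$ are functions of those labels alone; and the learning algorithm $\Le$ sees $D$ only because each of its evaluation requests is answered by a single fresh query to $\TO$ (the multiplicative factor $b_\iter/|B_i|$ and the value $G[q]$ that are forwarded to $\Le$ depend on $D$ only through the already-accounted-for $\TO$ outputs). The final aggregation of the $h_i$'s into $h$ is once again independent of $D$.

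Next I would count the queries to $\TO$. In each of the $k$ iterations, at most $b_\iter$ queries are used for labelling the sampled points $q_j$ and at most $b_\base=b(n',\gamma',\beta')$ queries are used to simulate $\Le$'s approximate distribution-restricted evaluation oracle. By the parameter setting~(\ref{eq:params2}) we have $b_\iter\ge b_\base$ (since $100\log(1/\beta')/\gamma\ge 1$), so each iteration sends at most $2b_\iter$ queries to $\TO$, for an overall total of at most $2k\cdot b_\iter=b_\total$ queries.

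Finally, Lemma~\ref{lem:oracle-privacy} guarantees that $\TO(D,\nfrac\alpha7,b_\total)$ answers any sequence of $b_\total$ adaptive queries with $\epsilon$-differential privacy (the ``distinct'' condition in that lemma only concerns how the Laplacian budget is charged per distinct $q$, and the distinct count is bounded by the total count). The post-processing property of differential privacy then extends the guarantee to the hypothesis $h$ produced by \PrivLearn. There is no substantive obstacle in the argument; the only real content is the bookkeeping that verifies the query budget $b_\total$ in~(\ref{eq:params2}) is large enough to cover every call \PrivLearn makes to $\TO$.
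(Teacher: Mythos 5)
Your proposal is correct and follows essentially the same route as the paper's proof: verify that $\TO$ is the sole channel to $D$, count at most $b_\iter + b_\base \le 2b_\iter$ oracle calls per iteration for a total of $2k\,b_\iter = b_\total$, and invoke Lemma~\ref{lem:oracle-privacy} plus post-processing. Your extra remarks about the ``distinct'' condition and the $D$-independence of the forwarded quantities are fine but not needed beyond what the paper already states.
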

\begin{proof}
In each iteration of the loop in Step~\ref{step:loop}
the algorithm makes at most~$2b_\iter$ queries to~$\TO$ (there are $b_\iter$ calls made on the samples and at most $b_\base\le b_\iter$
evaluation queries).
But note that $\TO$ is instantiated with a query bound of
$b_\total =2kb_\iter.$ Hence, it follows from
Lemma~\ref{lem:oracle-privacy} that $\TO$ satisfies $\epsilon$-differential
privacy. Since~$\TO$ is the only way in which \PrivLearn ever interacts with
the data set, \PrivLearn satisfies $\epsilon$-differential privacy.
\end{proof}
We now prove that the hypothesis produced by the algorithm is indeed
accurate, as formalized by the following lemma.
\begin{lemma}[Accuracy]\label{lem:accuracy}
With overall probability $1-\beta,$
the hypothesis $h$ returned by \PrivLearn satisfies
\begin{equation}\label{eq:accuracy}
\Pr_{q\sim G}\left\{\left|h(q)-f^{D}(q)\right|\le\alpha\right\}
\ge1-\gamma\mper
\end{equation}
\end{lemma}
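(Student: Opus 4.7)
The plan is to condition on three high-probability events --- call them $\Gamma$, $E$, and $F$ --- and then split on whether the inner loop terminates early. The first event is the one furnished by Lemma~\ref{lem:agree} applied with tolerance $\alpha/7$ and query bound $b_\total$: under our database-size hypothesis~(\ref{eq:dbsize}) and with the lemma's failure parameter taken to be $\beta/3$, $\Gamma$ has probability at least $1-\beta/3$, and on $\Gamma$ there is a subsampled set $D'$ of size at most $n'$ so that every non-$\bot$ answer of $\TO$ on $(q,t)$ equals both $f^D_t(q)$ and $f^{D'}_t(q)$, while every $\bot$ answer occurs only for $q$ with $|f^D(q)-t|<\alpha/7$. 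The second event $E$ says that for every $i\in[k]$ the empirical fraction $\hat\zeta_i:=|B_i|/b_\iter$ is within a multiplicative factor of~$2$ of $\zeta_{t_i}$; a multiplicative Chernoff bound using $b_\iter=\Theta(b_\base\log(1/\beta')/\gamma)$, together with a union bound over the $k$ iterations, gives $\Pr[E]\geq 1-\beta/3$. Event $F$ is that every invocation of $\Le$ satisfies its $(\gamma',\beta')$-guarantee; with $\beta'=\beta/(6k)$ a union bound yields $\Pr[F]\geq 1-\beta/3$. The remainder of the argument is carried out on $\Gamma\cap E\cap F$, whose probability is at least $1-\beta$.

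If the algorithm terminates early at some iteration $i$ and returns the constant~$t_i$, event $E$ forces $\zeta_{t_i}<\gamma$. Lemma~\ref{lem:agree}(2) says that $\TO(q,t_i)=\bot$ implies $|f^D(q)-t_i|<\alpha/7$, and this event occurs under $G$ with probability at least $1-\zeta_{t_i}>1-\gamma$. Hence $|h(q)-f^D(q)|=|t_i-f^D(q)|<\alpha/7<\alpha$ except on a $G$-set of mass at most $\gamma$, establishing~(\ref{eq:accuracy}) in this case.

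Suppose instead the loop runs to completion, and consider a single iteration~$i$. Under $E$ we have $\zeta_{t_i}\geq\gamma/4$, so the conditional distribution $G'_i$ defined by restricting $G$ to $\{q:\TO(q,t_i)\neq\bot\}$ and renormalising is a $(4/\gamma)$-smooth extension of $G$ and therefore lies in $\GQ'$. Using Lemma~\ref{lem:agree}(1), the surviving samples $\{(q_j,l_j)\}_{j\in B_i}$ are i.i.d.\ draws from $G'_i$ labeled by $f^{D'}_{t_i}$; their count $|B_i|\geq\gamma b_\iter/4\geq b_\base$ meets the sample requirement of $\Le$; and the simulated oracle returning $(G[q]\cdot b_\iter/|B_i|,\,l)$ is exactly the approximate $G'_i$-restricted evaluation oracle for $f^{D'}_{t_i}$ required by $\Le$. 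Indeed, when $l=\bot$ we return $(0,\bot)$, matching $G'_i[q]=0$; when $l\neq\bot$ the multiplier $\zeta_{t_i}/\hat\zeta_i$ lies in $[1/2,2]\subset[1/3,3]$ by~$E$, and $l=f^{D'}_{t_i}(q)$ by~$\Gamma$. Event~$F$ then gives $\Pr_{q\sim G'_i}[h_i(q)\neq f^{D'}_{t_i}(q)]\leq\gamma'$, which transfers back to $G$: writing $\mathrm{Err}_i:=\{q\in\supp(G'_i):h_i(q)\neq f^{D'}_{t_i}(q)\}$, we have $\Pr_{q\sim G}[\mathrm{Err}_i]=\zeta_{t_i}\cdot\Pr_{q\sim G'_i}[\mathrm{Err}_i]\leq\gamma'$, and a union bound with $\gamma'=\gamma/k$ yields $\Pr_{q\sim G}[\bigcup_i\mathrm{Err}_i]\leq\gamma$.

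It remains to show that for every $q\notin\bigcup_i\mathrm{Err}_i$ the aggregated hypothesis satisfies $|h(q)-f^D(q)|\leq\alpha$. Fix such a $q$ and set $i^*:=\max\{i:f^D(q)\geq t_i\}$, so that $|t_{i^*}-f^D(q)|\leq 1/(k+1)\leq\alpha/3$. By Lemma~\ref{lem:agree}(2) the indices on which $h_i(q)$ can disagree with $f^D_{t_i}(q)$ are confined to those~$i$ with $|f^D(q)-t_i|<\alpha/7$. Since adjacent thresholds are separated by $1/(k+1)>\alpha/7$ (for $\alpha\leq 1$), at most two such indices exist and they lie adjacent to $i^*$. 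The signature $(h_1(q),\ldots,h_k(q))$ therefore matches the correct monotone step pattern $(f^D_{t_1}(q),\ldots,f^D_{t_k}(q))$ outside a window of at most two consecutive indices around $i^*$, so the aggregation rule outputs some $t_j$ with $|j-i^*|\leq 2$, giving $|h(q)-f^D(q)|\leq 2/(k+1)+|t_{i^*}-f^D(q)|\leq 3/(k+1)\leq\alpha$. The main obstacle I anticipate is exactly this book-keeping step: tracking how up to two adversarially-labeled low-margin indices (possibly straddling $i^*$ or at the boundary of $[k]$) can perturb the output of the aggregation rule and verifying it always stays within $\alpha$ of $f^D(q)$.
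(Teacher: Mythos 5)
Your proof is correct and follows essentially the same route as the paper's: condition on the event of Lemma~\ref{lem:agree}, on concentration of $|B_i|/b_\iter$, and on the learner's guarantee; transfer each per-threshold error bound from the conditional distribution $G_i'$ back to $G$ and union bound over the $k$ thresholds; and finish by counting how many low-margin thresholds can corrupt the aggregation. The only differences are cosmetic --- you package early termination as ``under the concentration event, stopping at $t_i$ forces $\zeta_{t_i}<\gamma$, so the constant hypothesis is accurate,'' which is a slightly tidier version of the paper's three-way case analysis, and your final bookkeeping tolerates two low-margin indices where the paper's spacing gives at most one; the unoptimized constants (e.g.\ your $(4/\gamma)$-smoothness versus the stated $(2/\gamma)$) match the looseness already present in the paper's own factor-$3$ estimate in Lemma~\ref{lem:Zt}.
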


\begin{proof}
We consider three possible cases:
\begin{enumerate}
\item The
first case is that there exists a $t \in \{t_1,\dots,t_k\}$ such that
distribution $G$ has at least $1-\gamma/10$ of its mass on points that are
$\alpha$-close to $t$.  In this case a Chernoff bound and the choice of
$b_\iter\gg b_\base$ imply that with probability $1-\beta$ the algorithm
terminates prematurely and the resulting hypothesis
satisfies~(\ref{eq:accuracy}).

\item
In the second case, there exists a $t \in \{t_1,\dots,t_k\}$ such that
the probability mass $G$ puts on points that are $\alpha$-close to $t$ is
between $1-\gamma$ and $1-\gamma/10.$  In this case if the algorithm
terminates prematurely then~(\ref{eq:accuracy}) is satisfied; below we analyze
what happens assuming the algorithm does not terminate prematurely.

\item
In the third case every $t \in \{t_1,\dots,t_k\}$
is such that $G$ puts less than $1-\gamma$ of its mass on points
$\alpha$-close to $t.$  In this third case if the algorithm terminates
prematurely then~(\ref{eq:accuracy}) will not hold; however, our choice of $b_\iter$ implies
that in this third case the algorithm terminates prematurely with probability
at most $1-\beta$.  As in the second case, below we will analyze what happens
assuming the algorithm does not terminate prematurely.
\end{enumerate}
Thus in the remainder of the argument we may assume without loss of generality
that the algorithm does not terminate prematurely, i.e. it produces a full
sequence of hypotheses $h_1,\dots,h_k.$ Furthermore, we can assume that the
distribution $G$ places at most $1-\gamma/10$ fraction of its weight near any
particular threshold~$t_i.$ This leads to the following claim,
showing that in all iterations, the number of labeled examples in $B_i$ is
large enough to run the learning algorithm.
\begin{claim}\label{cla:enough}
$\Pr\Set{\forall i\colon |B_i|\ge b_\base}\ge 1-\frac\beta3\mper$
\end{claim}
\begin{proof}
By our assumption, the probability that a sample $q\sim G$ is rejected at
step~$t$ of \PrivLearn is at most~$\nfrac\gamma{10}.$ By the choice
of~$b_\iter$ it follows that $|B_i|\ge b_\base$ with probability $1-\beta/k.$
Taking a union bound over all thresholds~$t$ completes the proof.
\end{proof}

The proof strategy from here on
is to first analyze the algorithm on the conditional
distribution that is induced by the threshold oracle. We will then pass from
this conditional distribution to the actual distribution that we are
interested in, namely, $G.$

We chose $|D|$ large enough so that we can apply
Lemma~\ref{lem:agree} to~$\TO$ with the ``$\alpha$''-setting of
Lemma~\ref{lem:agree} set to~$\nfrac\alpha7.$
Let $D'$ be the data set and $\Gamma$ be the event
given in the conclusion of Lemma~\ref{lem:agree} applied to~$\TO$. (Note that
$n'=|D'|\le 7^2\cdot90\alpha^{-2}\log|\Q|$ as stated above.)

By the choice of our parameters, we have
\begin{equation}\label{eq:Gamma}
\Pr\Set{\Gamma} \ge1-\frac\beta{3}\mper
\end{equation}
Here the probability is computed only over the internal randomness of the
threshold oracle $\TO$ which we denote by~$R.$
Fix the randomness~$R$ of $\TO$ such that $R\in\Gamma.$
For the sake of analysis, we can think of the randomness of the oracle as a
collection of independent random variables $(N_q)_{q\in\Q}$ (where $N_q$ is
used to answer all queries of the form $(q,t')$). In particular, the behavior
of the oracle would not change if we were to sample all variables
$(N_q)_{q\in\Q}$ up front. When we fix $R$ we thus mean that we fix $N_q$ for
all $q\in\Q.$

We may therefore assume for the remainder of the analysis that $\TO$ satisfies
properties~(\ref{prop:agree-1}) and~(\ref{prop:agree-2}) of
Lemma~\ref{lem:agree}.

Let us denote by $Q_i\subseteq\Q$
the set of examples for which $\TO$ would not
answer $\bot$ in Step~\ref{step:loop} at the $i$-th iteration of the algorithm.
Note that
this is a well-defined set since we fixed the randomness of the oracle.
Denote by $G_i$ the distribution $G$ conditioned on $Q_i.$
Further, let
$Z_i = \Pr_{q\sim G}\Set{q\in Q_i}\mper$
Observe that
\begin{equation}\label{eq:Gt}
G_i[q]=\begin{cases}
G[q]/Z_i & q\in Q_i\\
0& \text{ o.w. }
\end{cases}
\mper
\end{equation}
The next lemma shows that \PrivLearn answers evaluation queries with the
desired multiplicative precision.
\begin{lemma}
\label{lem:Zt}
With probability $1-\nfrac\beta{6k}$ (over the randomness of \PrivLearn), we
have
\begin{equation}\label{eq:Zt}
\frac {Z_i}{3} \le \frac {|B_i|}{b_\iter} \le 3{Z_i}\mper
\end{equation}
\end{lemma}
\begin{proof}
The lemma follows from a Chernoff bound with the fact that we chose
$b_\iter\gg b_\base$.
\end{proof}
Assuming that~(\ref{eq:Zt}) holds, we can argue that the learning algorithm in
step~$t$ produces a ``good'' hypothesis as expressed in the next lemma.
\begin{lemma}
\label{lem:hi}
Let $t\in\{t_1,\dots,t_k\}.$
Conditioned on~(\ref{eq:Zt}), we have that with
probability $1-\nfrac\beta{6k}$
(over the internal randomness of the learning algorithm invoked at step~$i$)
the hypothesis $h_i$ satisfies
\[
\Pr_{q\sim G_i}\left\{h_i(q)=f^{D}_{t_i}(q)\right\}\ge1-\frac\gamma k\mper
\]
\end{lemma}
\begin{proof}
This follows directly from the guarantee of the learning algorithm~$\Le$
once we argue that (with the claimed probability):
\begin{enumerate}
\item Each example $q$ is sampled from $G_i$ and labeled correctly by
$f^{D'}_{t_i}(q)$ and $f^{D'}_{t_i}(q)=f^{D}_{t_i}(q).$
\item All evaluation queries asked by the learning algorithm are answered
with the multiplicative error allowed in Definition~\ref{def:distribution-oracle-access}.
\item The algorithm received sufficiently many, i.e., $b_\base$, labeled
examples.
\end{enumerate}
The first claim follows from the definition of $G_i$, since we can sample
from $G_i$ by sampling from $G$ and rejecting if the oracle~$\TO$ returns
$\bot.$ Since $\Gamma$ is assumed to hold,  we can invoke
property~(\ref{prop:agree-1}) of Lemma~\ref{lem:agree} to conclude that
whenever the oracle does not return
$\bot,$ then its answer agrees with $f^{D'}_{t_i}(q)$ and moreover
$f^{D'}_{t_i}(q)=f^{D}_{t_i}(q).$

To see the second claim, consider an evaluation query $q.$ We
consider two cases. The first case is
where the threshold oracle returns $\bot$ and \PrivLearn outputs~$(0,\bot).$ Note that in this
case indeed $G_i$ puts $0$ weight on the query $q.$ In the second case
\PrivLearn outputs $(G[q]\cdot b_\iter/|B_i|,l).$
By~(\ref{eq:Gt}) and since we assumed $\Gamma$
holds, the output satisfies the desired multiplicative bound.

The third claim is a direct consequence of Claim~\ref{cla:enough}.
\end{proof}

We conclude from the above that
with probability $1-\beta/3$
(over the combined randomness of \PrivLearn and of the learning algorithm),
simultaneously for all $i\in[k]$ we have
\begin{equation}\label{eq:hi}
\Pr_{q\sim G}\Set{h_i(q)\ne f^{D}_{t_i}(q)\Mid Q_i}
=
\Pr_{q\sim G_i}\Set{h_i(q)\ne f^{D}_{t_i}(q)}
\le \frac\gamma{k}\mper
\end{equation}
This follows from a union bound over all $k$ applications of
Lemma~\ref{lem:Zt} and Lemma~\ref{lem:hi}.

We can now complete the proof of Lemma~\ref{lem:accuracy}. That is, we will
show that assuming~(\ref{eq:hi}) the hypothesis $h$ satisfies
\[
\Pr_{q\sim G}\left\{\left|h(q)-f^{D}(q)\right|\le\alpha\right\}
\ge1-\gamma\mper
\]
Note that
\begin{enumerate}
\item (\ref{eq:hi}) occurs with probability $1-\nfrac\beta3,$
\item our assumption on the threshold oracle, i.e., $R\in\Gamma$ also occurs with
probability $1-\nfrac\beta3$ (over the randomness of the oracle)
\item the event in Claim~\ref{cla:enough} holds with
probability~$1-\nfrac\beta3.$
\end{enumerate}
Hence all three events occur simultaneously with probability~$1-\beta$
which is what we claimed. We proceed by assuming that all three events
occurred.
\newcommand\Err{\mathrm{Err}}
In the following, let
\[
\Err_i=\{q\in\Q\colon h_i(q)\ne f^D_{t_i}(q)\}
\]
denote the set of points on which $h_i$ errs.
We will need the following claim.

\begin{claim}\label{cla:wtf}
Let $q\in\Q.$ Then,
\[
\left|h(q)-f^{D}(q)\right|> \alpha
\qquad\Longrightarrow\qquad
q\in\bigcup_{i\in[k]} \Err_i\cap Q_i\mper
\]
\end{claim}
\begin{proof}
Arguing in the contrapositive, suppose
$q\not\in \bigcup_{i\in[k]} \Err_i\cap Q_i.$
This means that for all $i\in[k]$ we have that either
$q\not\in \Err_i$ or $q\not\in Q_i.$

However, we claim that there can be at most one $i\in[k]$ such that $q\not\in
Q_i$ meaning that $q$ is rejected at step $i.$
This follows from property~(\ref{prop:agree-2}) of
Lemma~\ref{lem:agree} which asserts that if $q\not\in Q_i,$ then we must have
$|f^D(q)-t_i|<\alpha/7$, and the fact that any two thresholds differ by at least
$\alpha/3.$

Hence, under the assumption above it must be the case that $q\not\in \Err_i$ for
all but at most one $i\in[k].$ This means that all but one hypothesis $h_i$
correctly classify $q.$ Since the thresholds are spaced $\alpha/3$ apart,
this means the hypothesis $h$ has error at most $2\alpha/3\le\alpha$
on~$q$.
\end{proof}

With the previous claim, we can finish the proof. Indeed,
\begin{align*}
\Pr_{q\sim G}\Set{ \left|h(q) - f^D(q)\right| > \alpha }
& \le \Pr_{q\sim G}\Set{ \bigcup_{i\in[k]} \Err_i \cap Q_i }
\qquad \tag{using~Claim~\ref{cla:wtf}}\\
& \le \sum_{i=1}^k \Pr_{q\sim G}\Set{ \Err_i \cap Q_i }
\qquad \tag{union bound}\\
& = \sum_{i=1}^k \Pr_{q\sim G}\Set{ q \in \Err_i \mid Q_i }\Pr_{q\sim G}\Set{Q_i}\\
& \le \sum_{i=1}^k \Pr_{q\sim G}\Set{ q \in \Err_i \mid Q_i }\\
& \le k \cdot \frac\gamma k
\qquad \tag{using~(\ref{eq:hi})}\\
&= \gamma \mper
\end{align*}
This concludes the proof of Lemma~\ref{lem:accuracy}
\end{proof}

Lemma~\ref{lem:privacy} (Privacy) together with Lemma~\ref{lem:accuracy}
(Accuracy) conclude the proof of out main theorem, Theorem~\ref{thm:main}.

\subsection{Quantitative Improvements without Membership Queries}

Here we show how to shave off a factor of $1/\alpha$ in the requirement on the
data set size~$n$ in Theorem~\ref{thm:main}. This is possible if
the learning algorithm uses only sampling access to labeled examples.

\begin{theorem} \label{thm:main-random}
Let $\U$ be a data universe, $\Q$ a set of query descriptions, $\GQ$ a set of
distributions over $\Q$, and $P\colon\Q \times \U \rightarrow \zo$ a
predicate.

Then, there is an $\eps$-differentially private
$(\alpha,\beta,\gamma)$-accurate data-release algorithm provided that
there is an algorithm $\Le$ that ($\gamma$,$\beta$)-learns thresholds
over $(\Q,\GQ',\{p_u\colon u\in\U\})$ using $b(n,\gamma,\beta)$
random examples; and
we have
\[
n\ge
\frac{C\cdot
b(n',\gamma',\beta')\cdot\log\left(\frac{b(n',\gamma',\beta')}{\alpha\gamma\beta}
\right)
\cdot \log(1/\beta')}{\epsilon\alpha\gamma}\mcom
\]
where $n'=\Theta(\log|\cQ|/\alpha^2),$
$\beta'=\Theta(\beta\alpha),$
$\gamma'=\Theta(\gamma\alpha)$ and $C>0$ is a sufficiently large constant.
If ${\cal L}$ runs in time $t(n,\gamma,\beta)$ then the data release algorithm runs in time
$\poly(t(n',\gamma',\beta'),n,1/\alpha,\log(1/\beta),1/\gamma)$.

\end{theorem}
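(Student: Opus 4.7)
The proof adapts that of Theorem~\ref{thm:main}, modifying the algorithm \PrivLearn in Figure~\ref{fig:reduction} to omit all handling of evaluation queries from $\Le$. The $1/\alpha$ improvement in the database-size bound comes entirely from a tighter accounting of the number of \emph{distinct} queries made to the threshold oracle $\TO$. In the original proof, $\TO$ is instantiated with query bound $b_\total=2k\cdot b_\iter$, where $k=\Theta(1/\alpha)$, and the noise scale $b_\total/(\epsilon n)$ propagates a factor of $k$ into the bound from Lemma~\ref{lem:agree}. When $\Le$ uses only random examples, this $k$ factor disappears.

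The key observation is that in the modified algorithm the only calls to $\TO$ come from Step~\ref{step:loop} of Figure~\ref{fig:reduction} on the shared set of $b_\iter$ samples $q_1,\dots,q_{b_\iter}$ drawn once in Step~2. By the design of the oracle, the Laplacian $N_{q_j}$ is sampled freshly on the first appearance of each $q_j$ and then reused for every subsequent threshold. Hence across all $k$ iterations the number of distinct queries to $\TO$ is exactly $b_\iter$, not $k\cdot b_\iter$. After slightly re-parameterizing $\TO$ so that its query bound refers to distinct queries (or, equivalently, pre-sampling the $b_\iter$ Laplacians $N_{q_j}$ up front in \PrivLearn and answering all subsequent threshold tests from these stored values), Lemma~\ref{lem:oracle-privacy} yields $\epsilon$-differential privacy with Laplace scale $b_\iter/(\epsilon n)$. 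The requirement of Lemma~\ref{lem:agree} then becomes $|D|\gtrsim b_\iter\cdot\log(b_\iter/\beta)/(\epsilon\alpha)$, and unrolling $b_\iter=\Theta(b(n',\gamma',\beta')\log(1/\beta')/\gamma)$ reproduces exactly the database-size requirement of Theorem~\ref{thm:main-random}.

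The accuracy analysis carries over essentially verbatim, and is in fact strictly easier. Claim~\ref{cla:enough} still yields $|B_i|\ge b_\base$ with high probability by the same Chernoff argument applied to the shared $b_\iter$ samples; Lemma~\ref{lem:Zt} estimates $Z_i$ from $|B_i|/b_\iter$; and Lemma~\ref{lem:hi} no longer needs the evaluation-query simulation step, so it reduces to invoking the random-example guarantee of $\Le$ on the labeled examples $\{(q_j, l_j)\}_{j\in B_i}$. These are genuine samples from $G_i$, which whenever the early-termination condition is not triggered is $(2/\gamma)$-smooth with respect to $G$ and therefore lies in $\GQ'$. Final aggregation of the $h_i$'s via Claim~\ref{cla:wtf} is unchanged.

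The main delicacy is purely cosmetic: the oracle of Figure~\ref{fig:oracle} uses a single counter $j$ both for termination and for scaling the Laplace noise $\Lap(b/(\epsilon n))$. Since the fresh noise for any fixed $q$ is sampled at most once, letting $b$ count only distinct $q$'s does not weaken any guarantee. Once this re-parameterization is made explicit, privacy is a direct application of the Laplace mechanism with $b_\iter$ queries, and the remainder of the proof is identical to that of Theorem~\ref{thm:main}.
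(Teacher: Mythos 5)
Your proposal is correct and follows essentially the same route as the paper: the paper's proof simply sets $b_\total = 2b_\iter$ instead of $2k\cdot b_\iter$ and observes that, absent evaluation queries, the algorithm makes only $b_\iter$ \emph{distinct} queries (in the sense of Lemma~\ref{lem:oracle-privacy}) across all $k$ thresholds, with the correctness argument unchanged. Your additional remark about re-parameterizing the oracle's counter to count distinct queries rather than total invocations is a fair reading of a detail the paper leaves implicit, but it does not change the argument.
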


\begin{proof}
The proof of this theorem is identical to that of Theorem~\ref{thm:main}
except that we put $b_\total=2b_\iter$ rather than $2kb_\iter.$ It is easy to
check that the algorithm indeed makes only $b_\total$ \emph{distinct queries}
(in the sense of Lemma~\ref{lem:oracle-privacy}) to the threshold
oracle so that privacy remains ensured.
The correctness argument is identical.
\end{proof}

\section{First Application: Data Release for Conjunctions}
\label{sec:conj}
\label{sec:first-app}

With Theorems~\ref{thm:main} and~\ref{thm:main-random} in hand, we can obtain
new data release algorithms ``automatically'' from learning algorithms that
satisfy the properties required by the theorem.  In this section we present
such data release algorithms for conjunction counting queries using learning
algorithms (which require only random examples and work under any
distribution) based on polynomial threshold functions.

Throughout this section we fix the query class under consideration to be monotone conjunctions, i.e. we take
$\U=\Q=\{0,1\}^d$ and
$P(q,u)= 1-\bigvee_{i\colon u_i=0} q_i$.

The learning results given later in this section, together with
Theorem~\ref{thm:main-random}, immediately yield:

\begin{theorem}[Releasing conjunction counting queries]
\label{thm:conjunction}

\begin{enumerate}

\item  There is an $\eps$-differentially private algorithm for releasing the class of monotone Boolean conjunction queries over $\GQ=\{$all probability distributions over $\Q\}$ which is
    $(\alpha,\beta,\gamma)$-accurate and has runtime $\poly(n)$
for databases of size $n$ provided that
\[
n \geq
d^{O\left(d^{\nfrac13}\log\left(\frac d{\alpha}\right)^{\nfrac23}\right)}\cdot
\tilde O\left(
\frac{\log\left(\nfrac1\beta\right)^3}
{\epsilon\alpha\gamma^2} \right) \mper
\]
\item
There is an $\eps$-differentially private algorithm for releasing the class of monotone Boolean conjunction queries over $\GQ_k=\{$all probability distributions over $\Q$ supported on
$B_k = \{q \in \Q: q_1 + \cdots + q_d \leq k\}\}$ which is
    $(\alpha,\beta,\gamma)$-accurate and has runtime $\poly(n)$ for databases of
size $n$ provided that
\[
n \geq
d^{O\left(\sqrt{k\log\left(\frac{k\log d}{\alpha}\right)}\right)}\cdot
\tilde O\left(
\frac{\log\left(\nfrac1\beta\right)^3}
{\epsilon\alpha\gamma^2}\right)\mper
\]
\end{enumerate}
\end{theorem}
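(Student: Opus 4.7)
\textbf{Proof strategy for Theorem~\ref{thm:conjunction}.}
The plan is to instantiate Theorem~\ref{thm:main-random} with a distribution-free PAC learner for $n$-thresholds over the predicate family $\{p_u : u\in\U\}$, where $p_u(q) = \prod_{i:u_i=0}(1-q_i)$ is the monotone conjunction dual to $u$. Since the reduction controls the required database size through the learner's sample complexity $b(n',\gamma',\beta')$ and only random labeled examples are available here, it is enough to build a learner with sample complexity and running time $d^{O(D)}$, where $D$ is the degree exponent appearing in each part of the theorem. I will take the learner to be the standard linear-programming-based PAC algorithm for polynomial threshold functions (PTFs) of degree $D$ due to Klivans and Servedio; the remaining work is to argue that each target $n$-threshold $f^D_t$ is realized by a degree-$D$ PTF on every query the threshold oracle $\TO$ does not reject.

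Because $\TO$ rejects every query $q$ with margin $|f^D(q)-t|\leq \alpha/7$, the conditional distribution on which the learner must succeed is supported on high-margin points. It therefore suffices to produce a polynomial $G(q)$ of degree $D$ with $|G(q)-f^D(q)| < \alpha/100$ pointwise, because then $\mathrm{sign}(G(q)-t)=f^D_t(q)$ on every high-margin query. I will construct $G$ by approximating each of the $n$ predicates in the sub-sampled database and averaging.

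For Part 2, on $q\in B_k$ every $p_u(q)=1-\mathrm{OR}(\{q_i:u_i=0\})$, and since $|q|\leq k$ the argument of $\mathrm{OR}$ has Hamming weight at most $k$. Symmetrization combined with a univariate Chebyshev polynomial for the jump of $\mathbb{I}\{y>0\}$ over $\{0,1,\dots,k\}$ (\`a la Nisan--Szegedy) yields, for any $\epsilon$, a polynomial in $q$ of degree $O(\sqrt{k\log(1/\epsilon)})$ that pointwise $\epsilon$-approximates $p_u$ on $B_k$. Averaging over the $n$ items of the sub-sampled database produces the required $G$ of the same degree. Choosing $\epsilon$ polynomially small in $\alpha,1/k,1/\log d,\gamma',\beta'$ to absorb all the union bounds inside the learner and the reduction yields PTF degree $D=O(\sqrt{k\log(k\log d/\alpha)})$; LP-based degree-$D$ PTF learning gives $b(n',\gamma',\beta')=d^{O(D)}\cdot\mathrm{polylog}$, and plugging into condition~\eqref{eq:req} of Theorem~\ref{thm:main-random} with $n'=\Theta(d/\alpha^2)$ produces exactly the claimed bound on $n$.

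For Part 1 the naive version of the above achieves only degree $O(\sqrt{d\log(1/\epsilon)})$, since now $|u^c|$ can be as large as $d$. To reach $d^{O(d^{1/3}\log^{2/3}(d/\alpha))}$ I would adapt the two-level decomposition of Klivans and Servedio for DNFs: split the predicates in the sub-sampled database into a \emph{narrow} part, consisting of $p_u$ with $|u^c|\leq w$, and a \emph{wide} part. The narrow part is approximated at degree $O(\sqrt{w\log(1/\epsilon)})$ by the construction above, while the wide part is expressed via the complementary ``OR of few literals'' representation and bounded by a separate approximating polynomial whose degree depends on the number of wide predicates (and on $n'$, not $d$). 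Balancing the two contributions at $w\approx d^{2/3}\log^{1/3}(d/\alpha)$ gives total degree $D=O(d^{1/3}\log^{2/3}(d/\alpha))$. I expect this case to be the main technical obstacle: the original Klivans--Servedio argument is designed for OR-of-ANDs, and extending it to an arbitrary \emph{threshold} of ANDs requires a new structural argument showing that the collective contribution of the wide predicates to the threshold comparison (rather than to a single OR) is controllable, which does not follow directly from the DNF case.
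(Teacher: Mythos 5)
Your high-level route is the paper's: instantiate Theorem~\ref{thm:main-random} with the LP-based distribution-free PAC learner for degree-$D$ PTFs, and reduce everything to showing that the target $n'$-thresholds admit degree-$D$ sign-representations; for Part~2 your Chebyshev/symmetrization construction (the polynomial that equals $1$ at Hamming weight $k$ and is $\eps$-small on $\{0,\dots,k-1\}$, applied to $L(q)=k-\sum_{i:u_i=0}q_i$ and averaged over the subsampled database) is exactly Claim~\ref{claim:lee}, Corollary~\ref{cor:poly-rep} and Lemma~\ref{lem:thr-rep}. One caution on Part~2: by setting the per-predicate error to $\Theta(\alpha)$ and relying on the threshold oracle's margin guarantee, you obtain a PTF that represents $f^{D'}_t$ only on high-margin queries, so your learner does \emph{not} satisfy the hypothesis of Theorem~\ref{thm:main-random} (which demands correctness for every distribution in $\GQ_k$, including ones concentrated near the threshold), and you cannot ``plug into condition~(\ref{eq:req})'' as a black box without re-opening the reduction's proof. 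The fix is free: take $\eps=1/(3n')$ instead of $\eps=\Theta(\alpha)$, so that the total error across the $n'$ predicates is below the $1/n'$ quantization gap of $f^{D'}$ and $A(q)-(\lceil tn'\rceil-\nfrac12)$ is an \emph{exact} sign-representation of $f^{D'}_t$ on all of $B_k$; this only changes $\log(1/\alpha)$ to $\log n'=O(\log(k\log d/\alpha))$ inside the square root, which is what the stated bound already reflects.

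The genuine gap is Part~1, which you explicitly leave open. The ``narrow/wide split with a separate approximating polynomial controlling the collective contribution of the wide predicates'' is not how the obstacle is resolved, and I don't see how to make that version work: there is no useful low-degree polynomial bound on the aggregate of the wide predicates as a summand inside a threshold. The resolution is the Klivans--Servedio \emph{rank/decision-tree} decomposition, which eliminates wide predicates by \emph{restriction} rather than by approximation, and which is agnostic to the top gate. Define the width of a data item $u$ as $|\{i: u_i=0\}|$ and note that $p_u(q)=1-\bigvee_{i:u_i=0}q_i$ is an AND of the literals $\neg q_i$ over those coordinates; fixing $q_i=1$ therefore kills \emph{every} predicate $p_u$ with $u_i=0$, exactly as fixing a literal false kills a DNF term. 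Running the KS rank argument with ``data items of width $\geq r$'' in place of ``terms of size $\geq r$'' yields a decision tree of rank $O((d/r)\log n')$ each of whose leaves is a threshold $f^{D''}_{t}$ of a width-$\le r$ sub-database (all wide predicates having been fixed to constants along the path); each leaf then gets a degree-$O(\sqrt{r\log n'})$ PTF by the same Chebyshev construction with $r$ in place of $k$ (valid over the whole cube, since $L(q)\in\{0,\dots,r\}$ automatically). Composing a rank-$\rho$ tree with degree-$\Delta$ PTF leaves as in KS Theorem~2 and balancing at $r=d^{2/3}(\log n')^{1/3}$ gives degree $O(d^{1/3}(\log n')^{2/3})$, hence the claimed bound. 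So the ``new structural argument'' you call for is just the observation that the rank decomposition operates on individual predicates and survives the change of top gate from OR to threshold.
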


These algorithms are distribution-free, and so we can apply the boosting
machinery of \cite{DworkRV10} to get accurate answers to {\em all} of the
$k$-way conjunctions with similar database size bounds. See the discussion and
Corollary \ref{cor:boosted-conjunctions} in the introduction.

In Section~\ref{sec:ptf} we establish structural results showing that certain
types of thresholded real-valued functions can be expressed as low-degree
polynomial threshold functions.  In Section~\ref{sec:ptflearn} we state some
learning results (for learning under arbitrary distributions) that follow from
these representational results.  Theorem~\ref{thm:conjunction} above follows immediately from
combining the learning results of Section~\ref{sec:ptflearn} with Theorem~\ref{thm:main-random}.

\subsection{Polynomial threshold function representations}
\label{sec:ptf}

\begin{definition}
Let $X \subseteq \Q = \{0,1\}^d$ and let $f$ be a Boolean function $f: X \to \bits$.
We say that $f$ has a \emph{polynomial threshold function (PTF) of degree $a$
over $X$} if there is a real polynomial $A(q_1,\dots,q_d)$ of degree $a$ such that
\[
f(q) = \sign(A(q)) \quad \text{for all~}q \in X
\]
where the $\sign$ function is $\sign(z) =1$ if $z \geq 0$, $\sign(z)=0$ if $z < 0.$
\end{definition}
Note that the polynomial $A$ may be assumed without loss of generality to be
multilinear since $X$ is a subset of $\{0,1\}^d.$

\subsubsection{Low-degree PTFs over sparse inputs}

Let $B_k \subset \{0,1\}^d$ denote the collection of all points with Hamming
weight at most $k$, i.e. $B_k = \{q \in \{0,1\}^d: q_1 + \cdots + q_d \leq
k\}.$ The main result of this subsection is a proof that for any $t \in [0,1]$
the function $f^D_t$ has a low-degree polynomial threshold function over~$B_k.$

\begin{lemma}\label{lem:thr-rep}
Fix $t \in [0,1].$  For any database $D$ of size $n$, the function $f^D_t$ has a polynomial threshold function
of degree $O\left(\sqrt{k \log n}\right)$ over the domain $B_k.$
\end{lemma}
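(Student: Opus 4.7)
The strategy is to $\epsilon$-approximate each conjunction $P(q,u)=\prod_{i\colon u_i=0}(1-q_i)$ on the sparse domain $B_k$ by a polynomial of degree $O(\sqrt{k\log(1/\epsilon)})$, sum these approximants over $u\in D$, shift by the threshold, and take the sign. Choosing $\epsilon=\Theta(1/n)$ then yields a PTF of the claimed degree $O(\sqrt{k\log n})$.

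First I would reduce to an integer-valued threshold. The quantity $n\cdot f^D(q)=\sum_{u\in D}P(q,u)$ is always an integer in $\{0,1,\dots,n\}$, so $f^D(q)\ge t$ is equivalent to $\sum_{u\in D}P(q,u)\ge m$ for $m=\lceil nt\rceil$. (The corner cases $t\le0$ and $t>1$ are handled by constant polynomials.) Hence it suffices to construct a low-degree PTF for the Boolean function $q\mapsto\mathbb{I}\{\sum_{u\in D}P(q,u)\ge m\}$ on $B_k$. Next, I would observe that for $q\in B_k$ the conjunction $P(q,u)$ is determined by the integer-valued linear form $L_u(q)=\sum_{i\colon u_i=0}q_i$: indeed $P(q,u)=\mathbb{I}\{L_u(q)=0\}$, and $L_u(q)\in\{0,1,\dots,k\}$ whenever $q\in B_k$.

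The key analytic ingredient I would invoke is a classical Chebyshev-based approximation (due to Paturi, and implicit in Nisan--Szegedy): for every $\epsilon>0$ there is a univariate real polynomial $p_\epsilon$ of degree $O(\sqrt{k\log(1/\epsilon)})$ with $|p_\epsilon(0)-1|\le\epsilon$ and $|p_\epsilon(j)|\le\epsilon$ for every integer $j\in\{1,\dots,k\}$. The polynomial $p_\epsilon$ is an appropriately shifted and dilated Chebyshev polynomial $T_r$. Composing with the linear form $L_u$, the polynomial $p_\epsilon(L_u(q))$ has degree $O(\sqrt{k\log(1/\epsilon)})$ in $q$ and satisfies $|p_\epsilon(L_u(q))-P(q,u)|\le\epsilon$ for every $q\in B_k$.

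Setting $\epsilon=1/(3n)$, I would then define
\[
A(q)=\sum_{u\in D}p_\epsilon(L_u(q))-m+\tfrac12\mper
\]
This has degree $O(\sqrt{k\log n})$ as required. On $B_k$ the triangle inequality gives $|A(q)-(\sum_{u\in D}P(q,u)-m+\tfrac12)|\le n\epsilon=\tfrac13$. Since $\sum_{u\in D}P(q,u)$ is integer-valued, the quantity $\sum_{u\in D}P(q,u)-m+\tfrac12$ lies in $(-\infty,-\tfrac12]\cup[\tfrac12,\infty)$, so an additive perturbation of size $\tfrac13$ preserves its sign. Therefore $\sign(A(q))=\mathbb{I}\{\sum_{u\in D}P(q,u)\ge m\}=f^D_t(q)$ on $B_k$.

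The only nontrivial step is the Chebyshev approximation lemma, and this is a well-known result that I would cite rather than re-prove; the $O(\sqrt{k\log(1/\epsilon)})$ degree is in fact tight (up to constants) for approximating the step function on the integers $\{0,1,\dots,k\}$. Everything else in the argument is routine, leveraging the linearity of $L_u$ (so that composing with a degree-$r$ univariate polynomial still gives degree $r$ in $q$), the integrality of $nf^D(q)$ to get a unit margin, and a single union-bound-style sum over the $n$ approximation errors.
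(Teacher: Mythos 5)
Your proposal is correct and follows essentially the same route as the paper: approximate each conjunction's point-indicator $\mathbb{I}\{L_u(q)=0\}$ on $\{0,\dots,k\}$ by a Chebyshev-based univariate polynomial of degree $O(\sqrt{k\log(1/\eps)})$, sum over $u\in D$ with $\eps=\Theta(1/n)$, and exploit the half-integer margin around $\lceil tn\rceil$. The only difference is bookkeeping (you place the spike at $0$ rather than at $k$) and attribution of the key univariate lemma, which the paper credits to Buhrman et al.
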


To prove Lemma~\ref{lem:thr-rep} we will use the following claim:

\begin{claim}\label{claim:lee}
Fix $k > 0$ to be a positive integer and $\eps > 0.$  There is a univariate
polynomial $s$ of degree $O\left(\sqrt{k \log(\nfrac1\eps)}\right)$ which is such that
\begin{enumerate}
\item $s(k)=1$; and
\item $|s(j)| \leq \eps$ for all integers $0 \leq j \leq k-1.$
\end{enumerate}
\end{claim}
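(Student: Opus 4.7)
I would build $s$ from a Chebyshev polynomial $T_d$ of the first kind. The key facts I would rely on are: $|T_d(y)|\le 1$ for every $y\in[-1,1]$, while outside this interval $T_d$ grows exponentially via the identity $T_d(\cosh u)=\cosh(du)$, so that for $\delta>0$,
\[
T_d(1+\delta)\;=\;\cosh\!\bigl(d\cdot\mathrm{arccosh}(1+\delta)\bigr).
\]

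The construction is as follows. Pick the affine map $L(x)=(2x-(k-1))/(k-1)$, which sends $0\mapsto -1$, $k-1\mapsto 1$, and critically $k\mapsto 1+2/(k-1)$, just outside $[-1,1]$. Then set
\[
s(x)\;=\;\frac{T_d(L(x))}{T_d(L(k))}
\]
for an appropriate choice of $d$. Property (1) of the claim, $s(k)=1$, holds by construction. For property (2), observe that for every integer $j\in\{0,1,\dots,k-1\}$ we have $L(j)\in[-1,1]$, so $|T_d(L(j))|\le 1$ and therefore $|s(j)|\le 1/T_d(L(k))$. The heart of the argument is then a lower bound on $T_d(1+2/(k-1))$: using the expansion $\mathrm{arccosh}(1+\delta)\ge \sqrt{2\delta}\,(1-O(\sqrt{\delta}))$ for small $\delta$, together with $\cosh(z)\ge \tfrac12 e^{z}$, one gets
\[
T_d\!\left(1+\tfrac{2}{k-1}\right)\;\ge\;\tfrac12\exp\!\Bigl(\Omega\!\bigl(d/\sqrt{k-1}\bigr)\Bigr).
\]
Choosing $d$ to be a suitable constant multiple of $\sqrt{k\log(1/\eps)}$ and verifying that the resulting value makes $1/T_d(L(k))\le \eps$ then yields both claims with the stated degree bound.

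The main obstacle I expect is the precise degree calibration. The naive application of the Chebyshev bound above produces $d=O(\sqrt{k}\log(1/\eps))$, which is off from the claimed $O(\sqrt{k\log(1/\eps)})$ by a factor of $\sqrt{\log(1/\eps)}$. Tightening this likely requires a sharper two-regime analysis of $\cosh(du)$ (near-$1$ versus exponential regime) or composing $T_d$ with a carefully chosen auxiliary low-degree polynomial so that the inner transformation exploits the discrete integer structure of $\{0,1,\dots,k-1\}$ rather than the whole interval $[0,k-1]$. The remaining steps---verifying $s(k)=1$, bounding $|s(j)|$ by $1/T_d(L(k))$ on the integer points, and invoking the standard $\cosh$/$\mathrm{arccosh}$ identities---are mechanical.
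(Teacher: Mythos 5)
Your construction is essentially the same as the paper's self-contained argument, and it lands in exactly the same place: the paper takes $r=\lceil\sqrt{k}\rceil$ and uses $s(j)=\bigl(C_r((j/k)(1+1/k))/C_r(1+1/k)\bigr)^{\lceil\log(1/\eps)\rceil}$, i.e.\ a degree-$\sqrt{k}$ Chebyshev ratio amplified by powering, whereas you use a single Chebyshev of degree $d=\Theta(\sqrt{k}\log(1/\eps))$ directly; both yield degree $O(\sqrt{k}\log(1/\eps))$, and your verification of properties (1) and (2) for that weaker bound is correct.

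The genuine issue is the one you flag yourself: neither your construction nor the paper's attains the degree $O\bigl(\sqrt{k\log(1/\eps)}\bigr)$ stated in the claim. The paper does not close this gap by sharpening the Chebyshev analysis either --- it obtains the full bound only by citing Buhrman, Cleve, de Wolf and Zalka \cite{BuhrmanCWZ99} (see also Sherstov, as referenced there), and explicitly presents its Chebyshev construction as achieving the ``slightly weaker'' bound. Be aware that your proposed repairs cannot succeed as stated: a ``sharper two-regime analysis of $\cosh$'' is hopeless because for the ratio $T_d(L(x))/T_d(L(k))$ the error is genuinely $e^{-\Theta(d/\sqrt{k})}$ (the $\mathrm{arccosh}$ and $\cosh$ estimates you use are tight up to constants), so that particular polynomial requires degree $\Omega(\sqrt{k}\log(1/\eps))$ no matter how carefully you analyze it. The optimal bound $\Theta(\sqrt{k\log(1/\eps)})$ is the $\eps$-error approximate degree of OR, and the known constructions achieving it are genuinely different: the polynomial induced by an $\eps$-error quantum search algorithm in \cite{BuhrmanCWZ99}, or explicit constructions combining Chebyshev polynomials at multiple scales. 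Your second suggestion (composing with an auxiliary polynomial exploiting the integer structure) is closer to the right idea, but as written it is a direction rather than a proof. In short: your argument is a correct proof of the weaker bound the paper actually proves, and the full claim requires importing the external result, exactly as the paper does.
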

\begin{proof}
This claim was proved by Buhrman et al. \cite{BuhrmanCWZ99}, who gave a quantum algorithm which
implies the existence of the claimed polynomial (see also Section~1.2 of \cite{Sherstov09}).
Here we give a self-contained construction of
a polynomial $s$ with the claimed properties that satisfies the slightly weaker degree bound
$\deg(s) = O(\sqrt{k}\log(1/\eps)).$
We will
use the univariate Chebyshev polynomial $C_r$ of degree $r=\lceil \sqrt{k} \rceil.$
Consider the polynomial
\begin{equation}\label{eq:poly-rep}
s(j) = \left({\frac {C_r\left( \frac jk\left(1+\frac1k\right)\right)}{C_r\left(1 + {\frac 1 k}\right)}}\right)^{\lceil
\log(1/\eps)\rceil}.
\end{equation}
It is clear that if $j=k$ then $s(j)=1$ as desired, so suppose
that $j$ is an integer $0 \leq j \leq k-1$.  This implies that $(j/k)(1 + 1/k) <1$.
Now well-known properties of the Chebychev polynomial (see e.g. \cite{Cheney66})
imply that $|C_r((j/k)(1+1/k))| \leq 1$ and $C_r(1+1/k) \geq 2.$  This gives the $O(\sqrt{k}\log(1/\eps))$ degree bound.
\end{proof}

Recall that the predicate function for a data item $u\in\bits^d$ is denoted by
\[
p_u(q)= 1-\bigvee_{i\colon u_i=0} q_i\mper
\]
As an easy corollary of Claim~\ref{claim:lee} we get:
\begin{corollary} \label{cor:poly-rep}
Fix $\eps > 0.$
For every $u \in \zo^d$, there is a $d$-variable
polynomial $A_u$ of degree $O\left(\sqrt{k \log(\nfrac1\eps)}\right)$
which is such that for every $q \in B_k$,
\begin{enumerate}
\item If $p_u(q)=1$ then $A_u(q)=1$;
\item If $p_u(q)=0$ then $|A_u(q)|\le \eps.$
\end{enumerate}
\end{corollary}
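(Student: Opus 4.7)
\textbf{Proof proposal for Corollary~\ref{cor:poly-rep}.}
The plan is to reduce the multivariate construction to the univariate polynomial already provided by Claim~\ref{claim:lee}. Let $J_u = \{i \in [d] : u_i = 0\}$. Observing that
\[
p_u(q) = 1 - \bigvee_{i \in J_u} q_i,
\]
we see that $p_u(q) = 1$ precisely when $q_i = 0$ for all $i \in J_u$, i.e.\ when the linear form $L_u(q) \defeq \sum_{i \in J_u} q_i$ equals $0$, and $p_u(q) = 0$ precisely when $L_u(q) \ge 1$. Moreover, whenever $q \in B_k$ we have $L_u(q) \in \{0, 1, 2, \dots, k\}$ since $L_u(q)$ is a sum of at most $k$ bits.

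The next step is to produce a univariate polynomial that is $1$ at the point $L_u(q) = 0$ and small on the other integers in $\{1,\dots,k\}$. Apply Claim~\ref{claim:lee} with the same parameters $k$ and $\eps$ to obtain a univariate polynomial $s$ of degree $O\bigl(\sqrt{k\log(1/\eps)}\bigr)$ with $s(k) = 1$ and $|s(j)| \le \eps$ for every integer $0 \le j \le k-1$. I would then simply reflect the input by setting
\[
\tilde s(x) \defeq s(k - x),
\]
so that $\tilde s(0) = s(k) = 1$ and $|\tilde s(j)| = |s(k-j)| \le \eps$ for every integer $1 \le j \le k$. The degree of $\tilde s$ is identical to that of $s$.

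Finally, define the multivariate polynomial
\[
A_u(q) \defeq \tilde s\!\left(L_u(q)\right) = \tilde s\!\left(\sum_{i \in J_u} q_i\right).
\]
Its total degree (in $q_1,\dots,q_d$) equals $\deg(\tilde s) = O\bigl(\sqrt{k\log(1/\eps)}\bigr)$ because $L_u$ is a linear form. For any $q \in B_k$ with $p_u(q) = 1$ we have $L_u(q) = 0$, so $A_u(q) = \tilde s(0) = 1$, and for any $q \in B_k$ with $p_u(q) = 0$ we have $L_u(q) \in \{1,\dots,k\}$, so $|A_u(q)| \le \eps$. This gives the corollary.

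There is essentially no obstacle here beyond making the right substitution: all the analytic work has already been done inside Claim~\ref{claim:lee} (the Chebyshev-based construction), and the sparsity assumption $q \in B_k$ is exactly what ensures that the linear form $L_u(q)$ takes only integer values in $\{0,\dots,k\}$, which is the discrete set on which $\tilde s$ is controlled.
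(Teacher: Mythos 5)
Your proof is correct and is essentially the paper's own argument: the paper sets $L(q)=k-\sum_{i\colon u_i=0}q_i$ and takes $A_u=s(L(q))$, which is exactly your $\tilde s(L_u(q))$ after the reflection $\tilde s(x)=s(k-x)$. The two constructions produce the identical polynomial, so there is nothing to add.
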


\begin{proof}
Consider the linear function
$L(q) = k - \sum_{i\colon u_i=0} q_i.$
For $q \in B_k$ we have that $L(q)$ is an integer in $\{0,\dots,k\}$, and we have $L(q)=k$ if and only if
$p_u(q)=1.$   The desired polynomial is $A_u(q)=s(L(q)).$
\end{proof}

\begin{proof}[Proof of Lemma~\ref{lem:thr-rep}]
Consider the polynomial
\[
A(q) = \sum_{u\in D} A_u(q)
\]
where for each data item $u$, $r_u$ is the polynomial from Corollary~\ref{cor:poly-rep} with
its ``$\eps$'' parameter set to $\eps=1/(3n).$  We will show that $A(q) - (\lceil t n \rceil - 1/2)$ is the desired polynomial which gives a PTF for $f^D_t$ over $B_k.$

First, consider any fixed $q \in B_k$ for which $f^D_t(q) =1$.  Such a $q$ must satisfy
$f^D(q) = j/n \geq t$ for some integer $j$, and hence $j \geq \lceil t n \rceil.$ Corollary~\ref{cor:poly-rep}
now gives that $A(q) \geq \lceil t n \rceil - 1/3.$

Next, consider any fixed $q \in B_k$ for which $f^D_t(q)=0.$  Such a $q$ must satisfy
$f^D(q) = j/n < t$ for some integer $j$, and hence $j \leq \lceil t n \rceil - 1.$
Corollary~\ref{cor:poly-rep} now gives that $A(q) \leq \lceil t n \rceil - 2/3.$  This proves the lemma.
\end{proof}

\subsubsection{Low-degree PTFs over the entire hypercube}

Taking $k=d$ in the previous subsection, the results there imply that $f^D_t$ can be represented by a polynomial
threshold function of degree $O\big(\sqrt{d \log n}\big)$ over the entire Boolean
hypercube $\{0,1\}^d.$ In this section we improve the degree 
to~$O\big(d^{\nfrac13}(\log n)^{2/3}\big).$
This result is very similar to Theorem~8 of \cite{KlivansOS04} (which is
closely based on the main construction and result of \cite{KlivansS04}) but
with a few differences:  first, we use Claim~\ref{claim:lee} to obtain
slightly improved bounds.  Second, we need to use the following notion in
place of the notion of the ``size of a conjunction'' that was used in the
earlier results:

\begin{definition} \label{def:width}
The \emph{width} of a data item $u\in D$ is defined as the number of
coordinates $i$ such that $u_i=0.$ The \emph{width} of $D$ is defined
as the maximum width of any data item $u\in D.$
\end{definition}

We use the following lemma:
\begin{lemma} \label{lem:width}
Fix any $t \in [0,1]$ and suppose that $n$-element database $D$ has width~$w.$ Then
$f_t^D$ has a polynomial threshold function of degree $O\left(\sqrt{w \log
n}\right)$ over the domain $\{0,1\}^d.$
\end{lemma}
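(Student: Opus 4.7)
The plan is to mirror the construction from the previous subsection (Corollary~\ref{cor:poly-rep} and the proof of Lemma~\ref{lem:thr-rep}), but swap the roles of ``close to all ones'' and ``close to all zeros'' in the underlying linear form. The key observation is this: for a data item $u\in\zo^d$, the predicate $p_u(q)$ equals $1$ iff the linear form $L_u(q)\defeq\sum_{i\colon u_i=0} q_i$ is zero. If $u$ has width at most $w$, then for every $q\in\zo^d$ the value $L_u(q)$ is an integer in $\{0,1,\dots,w\}$. So unlike in Section~\ref{sec:ptf}'s first construction, we do not need the input $q$ to lie in $B_k$ --- the bound on $L_u(q)$ comes from the database's width rather than from a sparsity restriction on $q$.

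Next I would produce the univariate ``spike at $0$'' polynomial. Taking $s$ from Claim~\ref{claim:lee} with parameter $k=w$ and defining $\tilde s(x)\defeq s(w-x)$, I get a univariate polynomial of degree $O(\sqrt{w\log(1/\eps)})$ with $\tilde s(0)=1$ and $|\tilde s(j)|\le\eps$ for all integers $j\in\{1,\dots,w\}$. Then I would set $A_u(q)\defeq\tilde s(L_u(q))$, a multilinear polynomial in $q$ of the same degree. By construction $A_u(q)=1$ whenever $p_u(q)=1$ and $|A_u(q)|\le\eps$ whenever $p_u(q)=0$, exactly in analogy with Corollary~\ref{cor:poly-rep}.

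Finally I would assemble the PTF exactly as in the proof of Lemma~\ref{lem:thr-rep}. Setting $\eps\defeq 1/(3n)$ gives each $A_u$ degree $O(\sqrt{w\log n})$. Define $A(q)\defeq\sum_{u\in D}A_u(q)$; if $f_t^D(q)=1$ then at least $\lceil tn\rceil$ items contribute $1$ and the remaining items contribute at most $\eps$ each in absolute value, yielding $A(q)\ge\lceil tn\rceil-1/3$. If $f_t^D(q)=0$ then at most $\lceil tn\rceil-1$ items contribute $1$ and again the rest are bounded by $\eps$, giving $A(q)\le\lceil tn\rceil-2/3$. Hence $\sign\!\bigl(A(q)-(\lceil tn\rceil-1/2)\bigr)$ computes $f_t^D$ on all of $\zo^d$, and its degree is $O(\sqrt{w\log n})$ as required.

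There is no serious obstacle here --- the argument is a routine adaptation of the earlier construction. The only minor subtlety is noting that we may apply $\tilde s$ even to items whose width is strictly smaller than $w$, since $\{0,\dots,w_u\}\subseteq\{0,\dots,w\}$ and $\tilde s$ still evaluates to $1$ at $0$ and to something of magnitude at most $\eps$ at every positive integer in its controlled range. This is what lets us treat the entire database uniformly with a single choice of $\tilde s$ determined by the width bound $w$.
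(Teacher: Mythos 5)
Your proof is correct and follows essentially the same route as the paper's: the paper likewise reruns the construction of Corollary~\ref{cor:poly-rep} and Lemma~\ref{lem:thr-rep} with $w$ in place of $k$, using the linear form $L(q)=w-\sum_{i\colon u_i=0}q_i$, which is exactly your $\tilde s(L_u(q))=s(w-L_u(q))$ after the change of variable. The observation that the width bound on $D$ (rather than sparsity of $q$) keeps the linear form in $\{0,\dots,w\}$ is precisely the point of the lemma, and you have it.
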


\begin{proof}
The proof follows the constructions and arguments of the previous subsection,
but with ``$w$'' in place of ``$k$'' throughout (in particular the linear
function $L(q)$ is now defined to be $L(q)= w-\sum_{i\colon u_i=0}q_i$).
\end{proof}

\begin{lemma} \label{lem:DT}
Fix any value  $r\in\{1,\dots,d\}.$  The function
$f^D_t(q_1,\dots,q_d)$ can be expressed as a decision tree~$T$ in which
\begin{enumerate}
\item each internal node of the tree contains a variable $q_i$;
\item each leaf of $T$ contains a function of the form $f^{D'}_t$ where
$D'\subseteq D$ has width at most $r;$
\item the tree $T$ has rank at most $(2d/r)\ln n + 1.$
\end{enumerate}
\end{lemma}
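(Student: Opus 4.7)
The approach is to construct $T$ by a recursive greedy procedure and then bound its rank with a potential-function argument. At each internal node I maintain the \emph{effective database} $D_\rho$ obtained from the path restriction $\rho$: an item $u \in D$ \emph{survives} $\rho$ iff no coordinate $i$ with $u_i = 0$ has been fixed to $1$ along $\rho$, and its \emph{effective width} is $w_\rho(u) = |\{i : u_i = 0\} \setminus S_0|$, where $S_0$ is the set of coordinates that $\rho$ fixes to $0$. If every surviving item has $w_\rho(u) \leq r$, the node is declared a leaf labeled by $f^{D'}_t$, where $D'$ is the multiset of surviving items in their effective representations, so that $f^{D'}_t$ agrees with $f^D_t$ on the subcube consistent with $\rho$ and $D'$ has width at most $r$. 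Otherwise, let $m$ be the number of \emph{long} items (those that survive with $w_\rho(u) > r$). Since $\sum_{u \text{ long}} w_\rho(u) > rm$, averaging over the $d$ coordinates yields some $i^*$ for which at least $rm/d$ long items have $u_{i^*} = 0$ with $i^* \notin S_0$. The tree queries $q_{i^*}$: on the $q_{i^*}=1$ branch those items die, leaving $m_1 \leq m(1 - r/d)$ long items; on the $q_{i^*}=0$ branch their effective widths each drop by one, leaving $m_0 \leq m$ long items.

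To bound the rank, define
\[
\Phi(m) = \begin{cases} 0 & \text{if } m = 0, \\ 1 + (2d/r)\ln m & \text{if } m \geq 1, \end{cases}
\]
and prove by strong induction on $m$ that the rank of the subtree produced by the construction at a node with $m$ long items is at most $\Phi(m)$. The base case $m=0$ is a leaf of rank $0$. For $m \geq 1$, I first verify $\Phi(m) - \Phi(m_1) \geq 1$: if $m_1 = 0$ this is just $\Phi(m) \geq 1$, and if $m_1 \geq 1$ then
\[
\Phi(m) - \Phi(m_1) \;=\; \tfrac{2d}{r}\ln\!\tfrac{m}{m_1} \;\geq\; \tfrac{2d}{r}\ln\!\tfrac{1}{1 - r/d} \;\geq\; \tfrac{2d}{r}\cdot\tfrac{r}{d} \;=\; 2.
\]
Meanwhile $\Phi(m_0) \leq \Phi(m)$. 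Writing $r_0, r_1$ for the actual child ranks, the recursive definition of rank gives $\mathrm{rank}(v) = \max(r_0, r_1) + \mathbb{I}[r_0 = r_1]$. By the inductive hypothesis, $r_1 \leq \Phi(m_1) \leq \Phi(m) - 1$ and $r_0 \leq \Phi(m_0) \leq \Phi(m)$. If $r_0 \neq r_1$, the maximum is at most $\Phi(m)$; if $r_0 = r_1$, then both equal a common value which is at most $r_1 \leq \Phi(m) - 1$, so $r_0 + 1 \leq \Phi(m)$. Thus $\mathrm{rank}(v) \leq \Phi(m)$, and applying the claim at the root (where $m \leq n$) yields $\mathrm{rank}(T) \leq \Phi(n) = 1 + (2d/r)\ln n$.

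The main subtlety is the choice of potential. A naïve attempt such as $\Phi(m) = (2d/r)\ln(m+1)$ fails to close the induction: in the equal-ranks case one picks up an unavoidable $+1$ that is not absorbed by the $1$-branch's drop in $\Phi$. The fix is the "$+1$" offset for $m \geq 1$ together with $\Phi(0) = 0$, which guarantees a strict gap between the two children's upper-bound ranks and exactly compensates for the equal-ranks increment. The remaining ingredients---the averaging argument producing a heavy coordinate, and the bookkeeping of the effective database along each path to identify each leaf's $D'$---are routine.
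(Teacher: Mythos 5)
Your construction is, in substance, exactly the argument the paper invokes: the paper's "proof" of this lemma is a one-line deferral to Lemma~10 of Klivans--Servedio, with "width of a data item" substituted for "size of a term," and your greedy coordinate choice via averaging plus the potential-function rank bound is a correct self-contained reconstruction of that argument with the right constants. Two small repairs: (i) your induction cannot literally be "strong induction on $m$," since on the $q_{i^*}=0$ branch you may have $m_0=m$; the potential bound $\mathrm{rank}(v)\le\Phi(m_v)$ should instead be proved by structural induction on the (finite, depth at most $d$) tree, where the same inequality chain closes verbatim. (ii) Dropping dead items changes the normalization $\frac{1}{n}\sum_{u\in D}p_u$ to $\frac{1}{|D'|}\sum_{u\in D'}p_u$, so the leaf function agreeing with $f^D_t$ on the subcube is really $f^{D'}_{t'}$ with $t'=tn/|D'|$ rather than $f^{D'}_t$; this imprecision is inherited from the lemma statement itself and is harmless downstream, since Lemma~\ref{lem:width} applies to any threshold.
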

\begin{proof}[Proof sketch.]
The result follows directly from the proof of Lemma~10 in \cite{KlivansS04},
except that we use the notion of width from Definition~\ref{def:width} in place of the
notion of the size of a conjunction that is used in \cite{KlivansS04}.
To see that this works, observe that since $p_u(q) = 1-\vee_{i\colon u_i=0}q_i,$
fixing $q_i=1$ will fix all predicates $p_u$ with $u_i=0$ to be zero.  Thus the analysis
of \cite{KlivansS04} goes through unchanged, replacing ``terms of $f$ that have
size at least $r$'' with ``data items in $D$ that have width at least $r$'' throughout.
\end{proof}

\begin{lemma}
The function $f^D_t$ can be represented as a polynomial threshold function of
degree $O(d^{\nfrac 13}(\log n)^{\nfrac 23}).$
\end{lemma}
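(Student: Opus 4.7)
The plan is to combine the decision-tree decomposition of Lemma~\ref{lem:DT} with the width-dependent PTF construction of Lemma~\ref{lem:width}, and then invoke the standard composition result from \cite{KlivansS04} that glues leaf-PTFs through a low-rank decision tree into a single PTF. Concretely, I would fix a parameter $r\in\{1,\dots,d\}$ to be optimized later, and apply Lemma~\ref{lem:DT} to write $f_t^D$ as a decision tree $T$ of rank at most $(2d/r)\ln n+1$ whose leaves are functions of the form $f_t^{D'}$ with each $D'\subseteq D$ of width at most $r$. Lemma~\ref{lem:width} then equips every leaf with a PTF of degree $O\bigl(\sqrt{r\log n}\bigr)$ over the full hypercube $\{0,1\}^d$.

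The next step is the combining lemma: a rank-$k$ decision tree whose leaves are computed by PTFs of degree at most $e$ is itself a PTF of degree $O(k+e)$. This is exactly Theorem~9 of \cite{KlivansS04} (also used as Theorem~8 of \cite{KlivansOS04}), so I would cite it rather than reprove it; the proof essentially writes the tree as a signed sum over root-to-leaf paths, each path contributing a product of at most $k$ ``rank-accounting'' linear factors times a leaf-PTF of degree~$e$, and then shows that a careful signed combination realizes the sign function of the tree output. Applying it with $k=O((d/r)\log n)$ and $e=O(\sqrt{r\log n})$ yields a PTF for $f_t^D$ of degree
\[
O\!\left(\frac{d\log n}{r} \;+\; \sqrt{r\log n}\right).
\]

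I would then optimize by balancing the two terms, setting $d\log n/r \approx \sqrt{r\log n}$, i.e.\ $r^3 \approx d^2\log n$, so $r=\Theta\bigl(d^{2/3}(\log n)^{1/3}\bigr)$. Substituting back gives overall degree $O\bigl(d^{1/3}(\log n)^{2/3}\bigr)$, as claimed.

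The main obstacle is essentially bookkeeping: Lemma~\ref{lem:DT} is stated for databases rather than for DNF terms as in \cite{KlivansS04}, so one has to verify that the ``width'' notion of Definition~\ref{def:width} really does play the role of term size in the Klivans--Servedio argument---in particular, that the inductive construction of the decision tree still leaves each leaf with a sub-database of bounded width, which is already asserted by Lemma~\ref{lem:DT}. Beyond this, the composition lemma is off-the-shelf, and the balancing calculation is routine. No new representational ideas are needed beyond those of Lemmas~\ref{lem:thr-rep}, \ref{lem:width}, and~\ref{lem:DT}.
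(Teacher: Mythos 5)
Your proposal is correct and follows essentially the same route as the paper: decompose $f_t^D$ via Lemma~\ref{lem:DT} into a rank-$O((d/r)\log n)$ decision tree with width-$r$ leaves, give each leaf a degree-$O(\sqrt{r\log n})$ PTF via Lemma~\ref{lem:width}, combine using the Klivans--Servedio rank-plus-leaf-degree composition, and balance at $r=\Theta(d^{2/3}(\log n)^{1/3})$. The only cosmetic difference is that the paper simply announces this choice of $r$ and quotes the combined degree as a $\max$ of the two terms (asymptotically the same as your sum), whereas you derive $r$ by explicit balancing.
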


\begin{proof}
The proof is nearly identical to the proof of Theorem~2 in \cite{KlivansS04} but
with a few small changes.  We take $r$ in Lemma~\ref{lem:DT} to be $d^{2/3}(\log n)^{1/3}$
and now apply Lemma~\ref{lem:width} to each width-$r$ database $D'$ at a leaf of the
resulting decision tree.  Arguing precisely as in Theorem~2 of \cite{KlivansS04}
we get that $f^D_t$ has a polynomial threshold function of degree
\[
\max\left\{\tfrac{2d}r\ln n + 1, O\left(\sqrt{r \log n}\right)\right\}
= O\left(\sqrt{r \log n}\right)
= O \left( d^{\nfrac13}(\log n)^{\nfrac23} \right).
\]
\end{proof}

\subsection{Learning thresholds of conjunction queries under arbitrary distributions}
\label{sec:ptflearn}

It is well known that using learning
algorithms based on polynomial-time linear programming, having low-degree PTFs for a class of functions implies efficient PAC learning algorithms for that class under any distribution using random examples only (see
e.g.~\cite{KlivansS04,HellersteinS07}).  Thus the representational results of Section~\ref{sec:ptf} immediately give learning results for the class of
threshold functions over sums of data items.  We state these learning results using the terminology of our reduction below.


\begin{theorem} \label{thm:lbool1}
Let \begin{itemize}
\item $\U$ denote the data universe $\{0,1\}^d$;
\item $\Q$ denote the set of query descriptions $\{0,1\}^d$;
\item $P(q,u)= 1-\bigvee_{i\colon u_i=0} q_i$ denote the monotone conjunction predicate;
\item $\GQ$ denote the set of all probability distributions over $\Q$; and
\item $\GQ_k$ denote the set of all probability distributions over $\Q$ that are supported
 on $B_k = \{q \in \{0,1\}^d: q_1 + \cdots + q_d \leq k\}.$
\end{itemize}
Then

\begin{enumerate}

\item (Learning thresholds of conjunction queries over all inputs)
There is an
algorithm ${\cal L}$ that $(\gamma,\beta)$ learns thresholds over
$(\Q,\GQ,\{p_u: u \in \U\})$ using $b(n,\gamma,\beta)=d^{O(d^{\nfrac13}(\log
n)^{2/3})}\cdot\tilde{O}(1/\gamma)\cdot\log(1/\beta)$ queries to an
approximate distribution-restricted evaluation oracle for the target
$n$-threshold function (in fact ${\cal L}$ only uses sampling access to
labeled examples). The running time of ${\cal L}$ is
$\poly(b(n,\gamma,\beta)).$

\item (Learning thresholds of conjunction queries over sparse
inputs) There is an algorithm ${\cal L}$ that $(\gamma,\beta)$ learns
thresholds over
$(\Q,\GQ_k,\{p_u: u \in \U\})$ using $b(n,\gamma,\beta)=d^{O((k \log n)^{1/2})}\cdot\tilde{O}(1/\gamma)\cdot\log(1/\beta)$ queries to an approximate distribution-restricted evaluation oracle for the target $n$-threshold function (in fact ${\cal L}$ only uses sampling access to labeled examples). The running time of ${\cal L}$ is $\poly(b(n,\gamma,\beta)).$

\end{enumerate}

\end{theorem}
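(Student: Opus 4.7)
The plan is to combine the polynomial threshold function representations of Section~\ref{sec:ptf} with the standard learning-theoretic technique of reducing PTF learning to linear threshold learning in an expanded monomial feature space, as in~\cite{KlivansS04,HellersteinS07}. For any target $n$-threshold function $f = f^D_t$, the results of Section~\ref{sec:ptf} supply an explicit real multilinear polynomial $A$ of degree $a$ such that $f(q) = \sign(A(q))$ on all relevant inputs, where $a = O(\sqrt{k \log n})$ in the sparse case (by Lemma~\ref{lem:thr-rep}, valid on $B_k$) and $a = O(d^{1/3}(\log n)^{2/3})$ in the general case (by the final lemma of Section~\ref{sec:ptf}, valid on all of $\{0,1\}^d$).

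First I would introduce the feature map $\phi\colon\Q \to \R^N$ that sends $q \in \{0,1\}^d$ to the vector of all multilinear monomials of degree at most $a$, where $N = \sum_{i=0}^{a}\binom{d}{i} \leq d^{O(a)}$. In this lifted space, every target $n$-threshold $f^D_t$ is exactly a linear threshold function: there exist $w \in \R^N$ and $\theta \in \R$ such that $f(q) = \sign(\langle w, \phi(q) \rangle - \theta)$ on the relevant domain. Next, the learning algorithm $\Le$ draws $b = b(n,\gamma,\beta) = \tilde{O}(N/\gamma) \cdot \log(1/\beta)$ i.i.d.\ labeled examples $(q_j, f(q_j))$ from $(G,f)$, and uses polynomial-time linear programming to find \emph{some} $(w,\theta) \in \R^{N+1}$ consistent with the sample (feasibility is guaranteed by the PTF representation). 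The output hypothesis is $h(q) = \sign(\langle w, \phi(q)\rangle - \theta)$.

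For the generalization analysis I would appeal to the standard VC bound: the concept class of linear threshold functions on $\R^N$ has VC dimension at most $N+1$, so any hypothesis consistent with $\tilde{O}(N/\gamma) \cdot \log(1/\beta)$ random examples has error at most $\gamma$ under $G$ with probability at least $1-\beta$. Plugging in $N = d^{O(a)}$ with the two values of $a$ above yields the two claimed sample complexities $d^{O(\sqrt{k\log n})}$ and $d^{O(d^{1/3}(\log n)^{2/3})}$, respectively. The running time is $\poly(N,b) = \poly(b(n,\gamma,\beta))$, dominated by solving an LP with $N+1$ variables and $b$ constraints. Crucially, the algorithm uses only random labeled examples (no evaluation or membership queries), which is what allows us to invoke the stronger Theorem~\ref{thm:main-random} rather than Theorem~\ref{thm:main} in Section~\ref{sec:first-app}.

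I do not anticipate any substantive obstacle; the main items to verify are bookkeeping matters, namely $(i)$ that the PTF representations of Section~\ref{sec:ptf} apply uniformly to \emph{every} target $n$-threshold (they do, since those lemmas are stated for arbitrary $D \in \U^n$ and arbitrary $t \in [0,1]$, and the distributions in $\GQ_k$ are supported on $B_k$ where Lemma~\ref{lem:thr-rep} applies), and $(ii)$ that the monomial count bound $N \leq d^{O(a)}$ plugs correctly into the VC sample-complexity estimate. Both are routine.
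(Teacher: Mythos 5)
Your proposal is correct and is exactly the argument the paper intends: Section~\ref{sec:ptflearn} derives Theorem~\ref{thm:lbool1} from the PTF representations of Section~\ref{sec:ptf} via the standard ``expand into degree-$a$ monomials, find a consistent halfspace by LP, and bound generalization by the VC dimension $N+1 \leq d^{O(a)}$'' route, citing \cite{KlivansS04,HellersteinS07} rather than spelling it out. Your bookkeeping points (applicability of the PTF lemmas to every $f^D_t$, restriction to $B_k$ for $\GQ_k$, and the use of random examples only so that Theorem~\ref{thm:main-random} applies) are all the right things to check and all hold.
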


Recall from the discussion at the beginning of Section~\ref{sec:conj} that
these learning results, together with our reduction, give the private data
release results stated at the beginning of the section.

\section{Second Application: Data Release via Fourier-Based Learning}
\label{sec:Fourier}

In this section we present data release algorithms for parity counting queries
and $\AC^0$ counting queries that instantiate our reduction
Theorem~\ref{thm:main} with Fourier-based
algorithms from the computational learning theory literature.  We stress that these algorithms
require the more general reduction Theorem~\ref{thm:main} rather than the simpler version
of Theorem~\ref{thm:intro} because the underlying learning algorithms are not distribution free.
We first give
our results for parity counting queries in Section~\ref{sec:paritylearn} and
then our results for $\AC^0$ counting queries in Section~\ref{sec:AC0}.

\subsection{Parity counting queries using the Harmonic Sieve \cite{Jackson97}}
\label{sec:paritylearn}

In this subsection we fix the query class under consideration to be the class of parity queries, i.e. we take $\U=\{0,1\}^d$ and $\Q=\{0,1\}^d$ and we take $P(q,u)= \sum_{i : u_i=1} q_i \text{~(mod 2)}$ to be the
parity predicate.
Our main result for releasing parity counting queries is:

\begin{theorem}[Releasing parity counting queries]
\label{thm:parity}
There is an $\eps$-differentially private algorithm for releasing the class of
parity queries over the uniform distribution on $\Q$ which is
$(\alpha,\beta,\gamma)$-accurate and has runtime $\poly(n)$ for databases of
size $n$, provided that
\[
n \geq \frac {\poly(d,1/\alpha,1/\gamma,\log(\nfrac1\beta))}{\epsilon}\mper
\]
\end{theorem}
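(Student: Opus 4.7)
The plan is to invoke the general reduction (Theorem~\ref{thm:main}) with the learning algorithm $\Le$ instantiated by Jackson's Harmonic Sieve~\cite{Jackson97}. Take $\U=\Q=\{0,1\}^d$, the predicate $P(q,u)=\sum_{i\colon u_i=1}q_i\pmod 2$, and $\GQ$ the singleton containing the uniform distribution on $\Q$. Then the family $\{p_u:u\in\U\}$ is exactly the family $\{\chi_u\}_{u\in\{0,1\}^d}$ of parity functions, and any $n$-threshold over this family is a function of the form $\mathbb{I}\{\tfrac1n\sum_{i=1}^n\chi_{u_i}(q)\ge t\}$, i.e., a threshold of a normalized sum of $n$ parities. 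This is precisely the type of target function that the Harmonic Sieve was designed to learn: a threshold of a Fourier-sparse function with bounded $L_1$ weight in the parity basis.

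First, I would spell out what the Harmonic Sieve provides. Combining the KM (Goldreich--Levin) algorithm for locating large Fourier coefficients with hypothesis boosting, the Harmonic Sieve $(\gamma,\beta)$-learns any such threshold function under the uniform distribution on $\{0,1\}^d$, in time $\poly(d,n,\nfrac1\gamma,\log(\nfrac1\beta))$, using membership queries to the target. These membership queries are exactly what is allowed by the approximate distribution-restricted evaluation oracle of Definition~\ref{def:distribution-oracle-access}, since under the uniform distribution (and any smooth extension thereof) every $q\in\Q$ has $G[q]>0$, so the oracle returns both a multiplicative estimate of $G[q]$ and the target label $f(q)$ on every query.

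Next, and this is the main technical point, Theorem~\ref{thm:main} requires $\Le$ to work not just on the uniform distribution but on its $(2/\gamma)$-smooth extension $\GQ'$, i.e., on arbitrary distributions $G'$ satisfying $G'[q]\le (2/\gamma)\cdot 2^{-d}$ for every $q$. I would use a smooth-distribution variant of the Harmonic Sieve: the KM procedure can be adapted so that, instead of estimating Fourier coefficients via uniform samples of the target, it estimates them via samples from $G'$ reweighted by $1/G'[q]$, using the approximate distribution values supplied by the oracle; the boosting wrapper is unchanged. The $1/\gamma$ bound on the density ratio ensures that this reweighting inflates variances only by a $\poly(1/\gamma)$ factor, so the resulting query complexity remains $b(n,\gamma,\beta)=\poly(d,n,\nfrac1\gamma,\log(\nfrac1\beta))$. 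I expect this adaptation to be the main obstacle of the proof, because one must verify that each Fourier-based subroutine of the Harmonic Sieve (KM, the weak learner, the booster's reweighting step) tolerates the distributional shift using only the oracle access granted by our reduction.

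Finally, plug the complexity of this smooth Harmonic Sieve into Theorem~\ref{thm:main}. With $n'=\Theta(\log|\Q|/\alpha^2)=\Theta(d/\alpha^2)$, $\gamma'=\Theta(\gamma\alpha)$, and $\beta'=\Theta(\beta\alpha)$, we obtain $b(n',\gamma',\beta')=\poly(d,\nfrac1\alpha,\nfrac1\gamma,\log(\nfrac1\beta))$, and the bound~(\ref{eq:req}) of Theorem~\ref{thm:main} reduces to the required database size $n\ge\poly(d,\nfrac1\alpha,\nfrac1\gamma,\log(\nfrac1\beta))/\epsilon$. The runtime of the data release algorithm is $\poly(t(n',\gamma',\beta'),n,\nfrac1\alpha,\log(\nfrac1\beta),\nfrac1\gamma)=\poly(n)$, yielding all the claims of Theorem~\ref{thm:parity}.
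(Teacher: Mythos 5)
Your overall route is the paper's: instantiate Theorem~\ref{thm:main} with Jackson's Harmonic Sieve, and the only real work is verifying that the Sieve meets the theorem's hypotheses (learning over the $(2/\gamma)$-smooth extension, using only approximate distribution-restricted evaluation access). The final parameter plugging is also correct. However, the step you dispose of in one clause is exactly the step the paper's proof is devoted to, and your justification for it is false. You claim that under the uniform distribution ``and any smooth extension thereof'' every $q\in\Q$ has $G'[q]>0$, so the restricted oracle answers every membership query. Smoothness (Definition~\ref{def:smooth-extension}) is only an \emph{upper} bound $G'[q]\le(2/\gamma)\cdot 2^{-d}$; it permits $G'[q]=0$, and the distributions that actually arise in the reduction are the conditionals $G_i$, which place zero mass on every query rejected by the threshold oracle. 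On those points the oracle of Definition~\ref{def:distribution-oracle-access} returns $(0,\bot)$ and never reveals $f(q)$, so you must argue that the Sieve never needs the value of $f$ at a zero-mass point. The paper's proof of Theorem~\ref{thm:lsieve1} does precisely this: the Sieve only ever evaluates $g(q)=2^d\cdot f(q)\cdot D'[q]$, which is identically $0$ wherever $G'[q]=0$, so the restricted oracle suffices.

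Separately, your proposed smooth-distribution adaptation of KM has the reweighting backwards. Sampling $q\sim G'$ and reweighting by $1/G'[q]$ to recover uniform-distribution Fourier coefficients requires a \emph{lower} bound on $G'[q]$ to control the variance; smoothness provides no such bound, and $1/G'[q]$ can be arbitrarily large (or undefined). What smoothness does bound is the density $2^dG'[q]\le 2/\gamma$, which is why Jackson's analysis (Section~7.1 of \cite{Jackson97}, which the paper invokes rather than reproves) searches for large Fourier coefficients of the multiplicatively reweighted function $g$ above, i.e., estimates quantities of the form $\E_{q\sim G'}[f(q)\chi_a(q)]$; that choice is also what makes the zero-mass issue resolvable. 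So the two gaps are linked: the correct smooth-distribution weak learner is the one that never queries $f$ where $G'$ vanishes, and your version is neither variance-bounded nor compatible with the restricted oracle.
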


This theorem is an immediate consequence of our main reduction,
Theorem~\ref{thm:main}, and the following learning result:

\begin{theorem} \label{thm:lsieve1}
Let \begin{itemize}
\item $\U$ denote the data universe $\{0,1\}^d$;
\item $\Q$ denote the set of query descriptions $\{0,1\}^d$;
\item $P(q,u) = \sum_{i : u_i=1} q_i \text{~(mod 2)}$ denote the parity predicate; and
\item $\GQ$ contains only the uniform distribution over $\Q$.
\end{itemize}
Then there is an algorithm ${\cal L}$ that $(\gamma,\beta)$ learns thresholds over
$(\Q,\GQ',\{p_u: u \in \U\})$ where $\GQ'$ is the $(2/\gamma)$-smooth extension
of~$\GQ.$  Algorithm
 ${\cal L}$ uses $b(n,\gamma,\beta)=\poly(d,n,1/\gamma) \cdot \log(1/\beta)$
queries to an approximate $G$-restricted evaluation oracle for the target
$n$-threshold function when it is learning with respect to a distribution
$G\in\GQ'.$ The running time of ${\cal L}$ is $\poly(b(n,\gamma,\beta)).$
\end{theorem}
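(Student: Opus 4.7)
The plan is to instantiate Jackson's Harmonic Sieve with the class of ``thresholds of sums of parities,'' which fits naturally into the Fourier-analytic framework the Sieve was designed for. Fix a target $n$-threshold $f = \mathbb{I}\{\frac1n\sum_{i=1}^n p_{u_i}(q) \ge t\}$ and pass to the $\pm1$ encoding via $\chi_u(q)=(-1)^{\sum_i u_i q_i}=1-2p_u(q)$. Then $f = \mathrm{sign}(A(q)-\tau)$ where $A(q)=\frac1n\sum_{i=1}^n \chi_{u_i}(q)$ is a Fourier polynomial supported on at most $n$ characters, each with coefficient of magnitude $1/n$. This representation exhibits $f$ as precisely a majority-of-parities with bounded coefficients, which is the paradigmatic class the Harmonic Sieve learns under the uniform distribution.

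Next I would establish the weak learning guarantee needed by Jackson's boosting argument. Under any distribution $G'$ that is $\mu$-smooth w.r.t.\ uniform, a standard margin/pigeonhole computation on $\langle f, A-\tau\rangle_{G'}$ produces some $i$ (or some $\chi_S$ with $S$ a subset related to the $u_i$'s) such that the $G'$-correlation of $f$ with $\chi_S$ is at least $1/\poly(n,d,\mu)$. This parity can be located using the Kushilevitz--Mansour/Goldreich--Levin procedure, which requires membership-query access to $f$ and works under the uniform distribution; to handle the reweighted distributions that arise during boosting, one uses the standard trick of multiplying the query-answers by the current boosting weights, which is legal as long as those weights remain bounded (polynomially) relative to uniform.

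Now I would plug this weak learner into Freund's boost-by-majority (or Jackson's variant thereof) to amplify to error $\gamma$ under the input distribution $G\in\GQ'$. The outer boosting distribution on which the weak learner is invoked starts as $G$, which is already $(2/\gamma)$-smooth w.r.t.\ uniform, and boosting multiplies the smoothness parameter by a $\poly(1/\gamma)$ factor over the $\poly(d,\log(1/\beta))$ boosting stages; so all working distributions stay polynomially smooth w.r.t.\ uniform, keeping GL correlations and sample sizes polynomial. The total sample and oracle-query complexity becomes $\poly(d,n,1/\gamma)\cdot\log(1/\beta)$, matching the theorem statement.

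The main obstacle is reconciling the Sieve's requirements with the approximate $G$-restricted evaluation oracle specified in Definition~\ref{def:distribution-oracle-access}. Concretely: (i) the Sieve needs to evaluate $f$ at adaptively chosen points $q$, but the oracle returns $(0,\perp)$ whenever $G[q]=0$, so I must argue (or design the invocation so) that all such queries land in $\mathrm{supp}(G)$, or alternatively that the $1/\gamma$ factor in the smoothness definition plus fresh randomization keeps the queried points inside the support with high probability; (ii) the oracle returns only a multiplicative $c\in[1/3,3]$ approximation to $G[q]$, which introduces a constant-factor distortion into every boosting reweighting, and I must verify that this distortion is absorbed by the polynomial slack in the weak learning advantage. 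Handling (i)--(ii) carefully—so that every Fourier-coefficient estimate the Sieve forms has the correct sign and a usable magnitude—will be the technical heart of the proof, after which the stated complexity bound follows by routine bookkeeping through Jackson's analysis.
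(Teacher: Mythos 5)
Your overall route is the same as the paper's: recognize the target $n$-threshold as a majority of $n$ parities, invoke Jackson's Harmonic Sieve (weak parity learner via Kushilevitz--Mansour plus smooth boosting), and observe that all working distributions stay polynomially smooth with respect to uniform. The paper does not re-derive any of this machinery; it simply cites Jackson's guarantee for learning majority-of-parities under smooth distributions with an approximate evaluation oracle, so the bulk of your write-up is a (correct but unnecessary) reconstruction of the cited result.

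The genuine gap is your item (i), which you explicitly defer as ``the technical heart of the proof'' without resolving it. Neither of the two escape routes you sketch works as stated. Arguing that the Sieve's membership queries ``land in $\supp(G)$ with high probability'' fails: a $(2/\gamma)$-smooth extension of the uniform distribution need only be supported on a $\gamma/2$ fraction of $\{0,1\}^d$, and the KM/Goldreich--Levin subroutine queries points that are not drawn from $G$, so a constant fraction of its queries can fall outside the support. The correct resolution, and the only substantive content of the paper's proof, is structural: the Harmonic Sieve never actually evaluates $f(q)$ in isolation. Every membership query it makes is used only to form the product $g(q)=2^d\cdot f(q)\cdot D'[q]$, where $D'$ is the approximate evaluation oracle for the current (smooth extension of the) boosting distribution. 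If $G'[q]=0$ then every smooth extension $G''$ of $G'$ also has $G''[q]=0$, hence $g(q)=0$ regardless of $f(q)$, and the restricted oracle's answer $(0,\perp)$ supplies exactly the information needed. Without this observation the reduction to Definition~\ref{def:distribution-oracle-access} does not go through, so you should replace the hedged discussion in your final paragraph with this argument. Your item (ii), the constant-factor $c\in[1/3,3]$ distortion, is indeed absorbed by Jackson's analysis as you suspect and requires no new work.
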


\begin{proof}
The claimed algorithm ${\cal L}$ is essentially Jackson's Harmonic Sieve
algorithm \cite{Jackson97} for learning Majority of Parities; however, a bit
of additional analysis of the algorithm is needed as we now explain.

When Jackson's results on the Harmonic Sieve are expressed in our terminology,
they give Theorem~\ref{thm:lsieve1} exactly as stated above except for one
issue which we now describe. Let $G'$ be any distribution in the
$(2/\gamma)$-smooth extension $\GQ'$ of the uniform distribution.  In
Jackson's analysis, when it is learning a target function $f$ under
distribution $G'$, the Harmonic Sieve is given black-box oracle access to
$f$, sampling access to the distribution $G'$, and access to a
$c$-approximation to an evaluation oracle for $G'$, in the following sense:
there is some fixed constant $c \in [1/3,3]$ such that when the oracle is
queried on $q \in \Q$, it outputs $c \cdot G'[q].$  This is a formally more
powerful type of access to the underlying distribution $G'$ than is allowed
in Theorem~\ref{thm:lsieve1} since Theorem~\ref{thm:lsieve1} only gives ${\cal
L}$ access to an approximate $G'$-restricted evaluation oracle for the target
function (recall Definition~\ref{def:distribution-oracle-access}). To be more
precise, the only difference is that with the Sieve's black-box oracle access
to the target function $f$ it is a priori possible for a learning algorithm to
query $f$ even on points where the distribution $G'$ puts zero probability
mass, whereas such queries are not allowed for ${\cal L}$.  Thus to prove
Theorem~\ref{thm:lsieve1} it suffices to argue that the Harmonic Sieve
algorithm, when it is run under distribution $G'$, never needs to make
queries on points $q \in \Q$ that have $G'[q]=0.$

Fortunately, this is an easy consequence of the way the Harmonic Sieve
algorithm works.  Instead of actually using black-box oracle queries for $f$,
the algorithm actually only ever makes oracle queries to the function
$g(q)=2^d \cdot f(q) \cdot D'[q]$, where $D'$ is a $c$-approximation to an
evaluation oracle for a distribution $G''$ which is a smooth extension of
$G'.$ (See the discussion in Sections~4.1 and~4.2 of \cite{Jackson97}, in
particular Steps~16-18 of the {\tt HS} algorithm of Figure~4 and Steps~3 and~5
of the {\tt WDNF} algorithm of Figure~3.) By the definition of a smooth
extension, if $q$ is such that $G'[q]=0$ then $G''[q]$ also equals 0, and
consequently $g(q)=0$ as well.  Thus it is straightforward
to run the Harmonic Sieve using access to an approximate $G'$-restricted
evaluation oracle:  if $G'[q]$ returns $0$ then ``$0$'' is the correct value
of $g(q)$, and otherwise the oracle provides precisely the information that
would be available for the Sieve in Jackson's
original formulation.
\end{proof}

\subsection{$\AC^0$ queries using \cite{JacksonKS02}}
 \label{sec:AC0}

Fix $\U=\{0,1\}^d$ and $\Q=\{0,1\}^d.$  In this subsection we show that our reduction enables us to do
efficient private data release for quite a broad class of queries, namely
any query computed by a constant-depth circuit.

In more detail, let $P(q,u)\colon \{0,1\}^d \times \{0,1\}^d \to \{0,1\}$ be any
predicate that is computed by a circuit of depth~$\ell=O(1)$ and size $\poly(d).$
Our data release result for such queries is the following:

\begin{theorem}[Releasing $\AC^0$ queries]
\label{thm:releaseAC0}
Let $\GQ$ be the set containing the uniform distribution and let $\U,\Q,P$ be as
described above.
There is an $\eps$-differentially $(\U,\Q,\GQ,P)$ data release algorithm that is
    $(\alpha,\beta,\gamma)$-accurate and has runtime $\poly(n)$
for databases of size $n$, provided that
\[
n \geq d^{O\left(\log^\ell\left(\frac d{\alpha \gamma}\right)\right)} \cdot
\tilde{O}\left( {\frac{\log\left(\nfrac1\beta\right)^3}
{\epsilon\alpha^2 \gamma}}
\right)\mper
\]
\end{theorem}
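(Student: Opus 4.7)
The plan is to invoke the general reduction of Theorem~\ref{thm:main} with a learning algorithm obtained from the Jackson--Klivans--Servedio algorithm~\cite{JacksonKS02} for learning Majority-of-$\AC^0$ circuits under the uniform distribution. First I would observe that, for every fixed $u\in\U$, the predicate $p_u(q)=P(q,u)$ is obtained from $P$'s depth-$\ell$, size-$\poly(d)$ circuit by hardwiring the $u$-inputs, so $p_u$ is itself computed by a depth-$\ell$ $\AC^0$ circuit of size $\poly(d)$. Consequently any $n$-threshold
\[
f(q) \;=\; \mathbb{I}\!\left\{\tfrac{1}{n}\sum_{i=1}^{n} p_{u_i}(q) \geq t\right\}
\]
built from the class $\{p_u:u\in\U\}$ is exactly a Majority-of-$\AC^0$ circuit (with polynomial-size depth-$\ell$ subcircuits), which is precisely the target class for which \cite{JacksonKS02} provide an $d^{O(\log^{\ell}(d/\gamma))}\cdot\polylog(1/\beta)$-time learning algorithm under the uniform distribution.

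Second, I would check that the \cite{JacksonKS02} algorithm fits the hypotheses of Theorem~\ref{thm:main}: it must $(\gamma,\beta)$-learn $n$-thresholds not only under the uniform distribution but under any $(2/\gamma)$-smooth extension $\GQ'$ of it, and it must do so using only approximate distribution-restricted oracle access (Definition~\ref{def:distribution-oracle-access}) rather than full membership queries. This is the same extension argument used for the Harmonic Sieve in the proof of Theorem~\ref{thm:lsieve1}: the \cite{JacksonKS02} algorithm is a Freund-style boosting procedure wrapped around the Harmonic Sieve's Fourier-based weak learner, and all accesses to the target $f$ are funneled through terms of the form $g(q)=2^d\cdot f(q)\cdot D'[q]$, where $D'$ is a distribution that is itself a smooth extension of the learning distribution. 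Hence on any input $q$ with $G'[q]=0$ one has $D'[q]=0$ and $g(q)=0$, and the learner can proceed without any black-box access to $f$ outside the support of $G'$. The sample/query complexity remains $b(n,\gamma,\beta)=d^{O(\log^{\ell}(d/\gamma))}\cdot\polylog(1/\beta)$ because the size of the Majority-of-$\AC^0$ circuit representing an $n$-threshold enters the JKS bound only through the size of its depth-$\ell$ subcircuits, which is $\poly(d)$ (independent of $n$).

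Third, I would plug this learning algorithm into Theorem~\ref{thm:main} with the prescribed parameters $n'=\Theta(\log|\Q|/\alpha^2)=\Theta(d/\alpha^2)$, $\gamma'=\Theta(\gamma\alpha)$, $\beta'=\Theta(\beta\alpha)$. Substituting $b(n',\gamma',\beta')=d^{O(\log^{\ell}(d/(\alpha\gamma)))}\cdot\polylog(1/\beta)$ into the requirement~\eqref{eq:req} and simplifying yields
\[
n \;\geq\; d^{O(\log^{\ell}(d/(\alpha\gamma)))}\cdot \tilde{O}\!\left(\frac{\log^3(1/\beta)}{\epsilon\alpha^2\gamma}\right),
\]
which is exactly the stated bound; the $\poly(n)$ overall runtime follows from the runtime clause of Theorem~\ref{thm:main}, since $t(n',\gamma',\beta')=\poly(b(n',\gamma',\beta'))$ is absorbed into the lower bound on $n$. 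Note that I must use Theorem~\ref{thm:main} rather than the simpler Theorem~\ref{thm:intro}: the \cite{JacksonKS02} algorithm is both distribution-specific and uses (membership-style) oracle queries, so neither distribution-freeness nor random-example-only access is available.

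The main obstacle is the second step. Confirming that the \cite{JacksonKS02} algorithm survives under both (i) $(2/\gamma)$-smooth extensions of the uniform distribution and (ii) the restricted oracle access of Definition~\ref{def:distribution-oracle-access} requires opening up its boosting-plus-Harmonic-Sieve construction and tracking exactly how each of its target-function queries is formed, in direct analogy with the argument in the proof of Theorem~\ref{thm:lsieve1}. Once that is established, the rest of the proof is a routine parameter substitution into Theorem~\ref{thm:main}.
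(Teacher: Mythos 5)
Your proposal is correct and follows essentially the same route as the paper: reduce via Theorem~\ref{thm:main}, observe that each $p_u$ is itself a depth-$\ell$, size-$\poly(d)$ circuit so that $n$-thresholds are Majority-of-$\AC^0$, extend the \cite{JacksonKS02} learner to $(2/\gamma)$-smooth distributions with approximate distribution-restricted oracle access by the same $g(q)=2^d\cdot f(q)\cdot D'[q]$ argument used for the Harmonic Sieve in Theorem~\ref{thm:lsieve1}, and substitute the parameters of Theorem~\ref{thm:main}. One small correction: the \cite{JacksonKS02} complexity is $d^{O(\log^{\ell}(nd/\gamma))}\cdot\log(1/\beta)$ and genuinely depends on $n$ (the arity of the Majority), not only on the size of the depth-$\ell$ subcircuits; this is harmless here only because the reduction runs the learner with $n'=\Theta(\log|\Q|/\alpha^2)=\poly(d/\alpha)$ in place of $n$, so the $\log n'$ factor is absorbed into $O(\log(d/(\alpha\gamma)))$ and your final bound is unaffected.
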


See the introduction for a discussion of this result.
We observe that given any fixed $P$ as described above, for any given $u \in \U=\{0,1\}^d$ the function $p_u(q)$ is computed by a
circuit of depth~$\ell$ and size $\poly(d)$ over the input bits $q_1,\dots,q_d.$  Hence Theorem~\ref{thm:releaseAC0}
is an immediate consequence of Theorem~\ref{thm:main} and
the following learning result, which describes the performance guarantee of the quasipolynomial-time algorithm
of Jackson et al. \cite{JacksonKS02} for learning Majority-of-Parity
in our language:

\begin{theorem}[Theorem~9 of \cite{JacksonKS02}]
\label{thm:jks}
Let \begin{itemize}
\item $\U$ denote the data universe $\{0,1\}^d$;
\item $\Q$ denote the set of query descriptions $\{0,1\}^d$;
\item $P(q,u)$ be any fixed predicate computed by an AND/OR/NOT circuit of
depth $\ell=O(1)$ and size $\poly(d)$;
\item $\GQ$ contains only the uniform distribution over $\Q$; and
\item $\F$ be the set of all AND/OR/NOT circuits of depth~$\ell$ and size $\poly(d).$
\end{itemize}
Then there is an algorithm ${\cal L}$ that $(\gamma,\beta)$ learns $n$-thresholds over
$(\Q,\GQ',\F)$ where $\GQ'$ is the $(2/\gamma)$-smooth extension of $\GQ.$  Algorithm
 ${\cal L}$ uses
approximate distribution restricted oracle access  to the function, uses
 $b(n,\gamma,\beta)=d^{O(\log^\ell(nd/\gamma))} \cdot \log(1/\beta)$
 samples and calls to the evaluation oracle, and runs in time
 $t(n,\gamma,\beta) = d^{O(\log^\ell(nd/\gamma))} \cdot \log(1/\beta).$
\end{theorem}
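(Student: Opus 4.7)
The plan is to essentially invoke Theorem~9 of \cite{JacksonKS02}, which gives a quasi-polynomial time algorithm for learning Majority-of-$\AC^0$ under the uniform distribution, and then adapt it to fit the framework of our reduction in exactly the same way that we adapted Jackson's Harmonic Sieve in Section~\ref{sec:paritylearn}. The first observation is that any $n$-threshold $f=\mathbb{I}\{\frac1n\sum_{i=1}^n f_i\ge t\}$ over $\F$, where each $f_i$ is computed by an $\AC^0$ circuit of depth $\ell$ and size $\poly(d)$, is precisely a Majority (or thresholded sum) of $n$ polynomial-size constant-depth circuits: the output is determined by whether at least $\lceil tn\rceil$ of the $f_i$'s fire. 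This is the target class for which \cite{JacksonKS02} provides a learning algorithm under the uniform distribution with sample and time complexity $d^{O(\log^\ell(n/\gamma))}\cdot\log(1/\beta)$.

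The second step is to upgrade from the uniform distribution to its $(2/\gamma)$-smooth extension $\GQ'$. The JKS algorithm, like Jackson's original Harmonic Sieve, is built on Fourier-analytic boosting: at each boosting stage it estimates Fourier coefficients of the (reweighted) target under a distribution $D'$ that is itself a smooth extension of the base distribution, and the only oracle access it actually requires is to a function of the form $g(q)=2^d\cdot f(q)\cdot D'[q]$ evaluated at sampled points. If we start from a smooth extension $G'\in\GQ'$ of the uniform distribution rather than from the uniform distribution itself, the reweighted distributions arising inside the boosting loop are smooth extensions of $G'$, and all the concentration and approximation bounds used in \cite{JacksonKS02} continue to hold (with parameters worsened by at most the smoothness factor $2/\gamma$, which is already reflected in the $1/\gamma$ appearing in $b(n,\gamma,\beta)$).

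The third and most delicate step is to show that the algorithm only needs approximate $G'$-restricted evaluation access to the target, as in Definition~\ref{def:distribution-oracle-access}, rather than true black-box oracle access on arbitrary inputs. This is where I expect the real (though small) work to lie, and the argument is precisely the one we already carried out for the Harmonic Sieve in the proof of Theorem~\ref{thm:lsieve1}: the algorithm never queries the target directly, but only the product $g(q)=2^d\cdot f(q)\cdot D'[q]$, where $D'$ is a multiplicative approximation to some smooth extension $G''$ of $G'$. Whenever $G'[q]=0$ we also have $G''[q]=0$, and hence $g(q)=0$ can be returned without ever evaluating $f(q)$; whenever $G'[q]>0$, the pair $(c\cdot G'[q],f(q))$ returned by the restricted oracle supplies everything needed to compute $g(q)$ up to a constant factor, which is absorbed into the $c$-approximation tolerance that \cite{JacksonKS02} already allows.

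Finally, the concrete parameter bounds follow by plugging the depth-$\ell$ setting into the complexity statement of \cite{JacksonKS02} verbatim, yielding sample and time complexity $d^{O(\log^\ell(nd/\gamma))}\cdot\log(1/\beta)$. The main obstacle is really just the bookkeeping in the oracle-access argument, since the Fourier boosting argument of \cite{JacksonKS02} and the smoothness handling are both standard; combining these three ingredients gives Theorem~\ref{thm:jks} and, via Theorem~\ref{thm:main}, yields Theorem~\ref{thm:releaseAC0}.
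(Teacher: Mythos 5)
Your proposal is correct and follows essentially the same route as the paper: invoke Theorem~9 of \cite{JacksonKS02} for the uniform distribution, extend it to smooth distributions via Jackson's Section~7.1 arguments (which transfer because the \cite{JacksonKS02} boosting structure matches the Harmonic Sieve's), and handle the approximate $G'$-restricted oracle access by the same $g(q)=2^d\cdot f(q)\cdot D'[q]$ observation used in the proof of Theorem~\ref{thm:lsieve1}. The paper's justification is somewhat terser on the second and third points (it simply asserts the arguments "apply unchanged"), but the content is identical.
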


We note that Theorem~9 of \cite{JacksonKS02}, as stated in that paper, only
deals with learning majority-of-$\AC^0$ circuits under the uniform
distribution: it says that an $n$-way Majority of depth-$\ell$, size-$\poly(d)$
circuits over $\{0,1\}^d$ can be learned to accuracy $\gamma$ and confidence
$\beta$ under the uniform distribution, using random examples only, in time
$d^{O(\log^\ell(nd/\gamma))}\cdot\log(1/\beta).$  However, the boosting-based
algorithm of \cite{JacksonKS02} is identical in its high-level structure to
Jackson's Harmonic Sieve; the only difference is that the \cite{JacksonKS02}
weak learner simply performs an exhaustive search over all low-weight parity
functions to find a weak hypothesis that has non-negligible correlation with
the target, whereas the Harmonic Sieve uses a more sophisticated
membership-query algorithm (that is an extension of the algorithm of
Kushilevitz and Mansour \cite{KushilevitzM93}).  Arguments identical to the
ones Jackson gives for the Harmonic Sieve (in Section~7.1 of \cite{Jackson97})
can be applied unchanged to the \cite{JacksonKS02} algorithm, to show that it
extends, just like the Harmonic Sieve, to learning under smooth distributions
if it is provided with an approximate evaluation oracle for the smooth
distribution.  In more detail, these arguments show that for any $C$-smooth
distribution $G'$, given sampling access to labeled examples by $(G',f)$
(where $f$ is the target $n$-way Majority of depth-$\ell$, size-$\poly(d)$
circuits) and approximate evaluation access to $G'$, the \cite{JacksonKS02}
algorithm learns $f$ to accuracy $\gamma$ and confidence $\beta$ under $G'$
in time $d^{O(\log^\ell(Cnd/\gamma))}\cdot\log(1/\beta)$  This is the result that
is restated in our data privacy language above (note that the smoothness
parameter there is $C=2/\gamma$).

\section{Conclusion and open problems}
\label{sec:openQ}

This work put forward a new reduction from privacy-preserving data analysis to
learning thresholds. Instantiating this reduction with various different
learning algorithms, we obtained new data release algorithms for a variety of
query classes. One notable improvement was for the database size (or error) in
distribution-free release of conjunctions and $k$-way conjunctions. Given
these new results, we see no known obstacles for even more dramatic
improvements on this central question. In particular, we conclude with the
following open question.

\begin{openQ}
Is there a differentially private distribution-free data release algorithm
(with constant error, e.g., $\alpha=1/100$) for conjunctions or $k$-way
conjunctions that works for databases of size $\poly(d)$ and runs in time
$\poly(n)$ (or $\poly(n,d^k)$ for the case of $k$-way conjunctions)?
\end{openQ}

Note that such an algorithm for $k$-way conjunctions would also imply, via boosting~\cite{DworkRV10}, that we can privately
release {\em all} $k$-way conjunctions in time $\poly(n,d^k)$, provided
that $|D|\ge \poly(d).$

\bibliographystyle{alpha}

\bibliography{moritz}

\appendix

\end{document}